\documentclass[numbers,webpdf,imaiai]{ima-authoring-template}%

\graphicspath{{Fig/}}

\theoremstyle{thmstyletwo}%
\newtheorem{theorem}{Theorem}%  meant for continuous numbers
\newtheorem{proposition}{Proposition}% to get separate numbers for theorem and proposition etc.

\newtheorem{remark}{Remark}%
\newtheorem{lemma}{Lemma}%

\newcommand{\Z}{\mathbb{Z}}
\newcommand{\R}{\mathbb{R}}
\newcommand{\C}{\mathbb{C}}

\newcommand{\F}{\mathcal{F}}
\newcommand{\B}{\mathcal{B}}
\newcommand{\E}{\mathrm{E}}
\newcommand{\A}{\mathcal{A}}
\newcommand{\I}{{\mathrm{I}}}
\renewcommand{\i}{\mathrm{i}}
\renewcommand{\P}{\mathbb{P}}
\newcommand{\Q}{\pmb{Q}}
\newcommand{\ID}{\pmb{I}}
\newcommand{\p}{\mathtt{p}}
\newcommand{\w}{\omega}
\newcommand{\e}{e}
\newcommand{\eps}{\varepsilon}
\newcommand{\mJ}{\mathcal{J}}
\newcommand{\mX}{\mathcal{X}}

\newcommand{\mG}{\mathcal{G}}
\newcommand{\mQ}{\mathcal{Q}}

\newcommand{\sign}{\operatorname{sign}}
\newcommand{\argmin}{\operatorname{arg\;min}}
\newcommand{\supp}{\operatorname{supp}}
\newcommand{\Arg}{\operatorname{Arg}}
\newcommand{\AP}{\operatorname{AP}}
\renewcommand{\Re}{\operatorname{Re}}
\renewcommand{\Im}{\operatorname{Im}}

\numberwithin{equation}{section}

\begin{document}

\DOI{DOI HERE}
\copyrightyear{}
\vol{}
\pubyear{}
% \access{Advance Access Publication Date: Day Month Year}
% \appnotes{Paper}
% \copyrightstatement{Published by Oxford University Press on behalf of the Institute of Mathematics and its Applications. All rights reserved.}
% \firstpage{1}

%\subtitle{Subject Section}

\title[Statistical inference based on band-limited kernels]{Statistical inference based on band-limited kernels: \\ 
Rational-infinitely divisible distributions and beyond}

\author{Vladimir Panov*\ORCID{0000-0001-8395-1909}
\address{ }
}
\author{Anton Ryabchenko %\ORCID{0000-0000-0000-0000}
\address{\orgdiv{Laboratory of Stochastic Analysis and its Applications}, \orgname{HSE University}, \orgaddress{\street{Pokrovsky boulevard 11}, \postcode{109028}, \state{Moscow}, \country{Russia}}}} 

\authormark{V. Panov and A. Ryabchenko}

\corresp[*]{\href{vpanov@hse.ru}{vpanov@hse.ru}}

\received{Date}{0}{Year}
\revised{Date}{0}{Year}
\accepted{Date}{0}{Year}

%\editor{Associate Editor: Name}

\abstract{
This paper investigates the problem of statistical inference for a mixture distribution consisting of a discrete and a continuous component, with a particular focus on the class \(\Q\) of rational-infinitely divisible  distributions. We consider non-parametric estimation of both components of the mixture as well as  the quasi-L{\'e}vy measure, assuming that the mixture belongs to the class \(\Q\). We propose an estimation framework based on band-limited kernels, which are the functions characterized by compactly supported Fourier transform. Under mild assumptions, the proposed estimators are theoretically shown to achieve polynomial (and in some cases even almost parametric) convergence rates. Finally, we demonstrate
the numerical performance of the algorithm on simulated examples.}
\keywords{rational-infinitely divisible distributions, almost periodic functions, band-limited kernel.}

% \boxedtext{
% \begin{itemize}
% \item Key boxed text here.
% \item Key boxed text here.
% \item Key boxed text here.
% \end{itemize}}

\maketitle
\thispagestyle{empty}

\section{Introduction}

The class $\Q$ of rational-infinitely divisible distributions (also known as quasi-infinitely divisible distributions) was introduced recently in the paper by Lindner, Pan, and Sato \citep{LPS2018}, and is defined as follows. 
%introduced this class of distributions and established first theoretical results. 
A probability measure $\mu$ belongs to $\Q$ if there exist two distributions $\mu_1$, $\mu_2$ from the class $\ID$ of infinitely divisible distributions such that 
\begin{equation*}
    \mu_1 = \mu * \mu_2,
\end{equation*}
where $*$ denotes the convolution operator.  The term "rational-infinitely divisible distribution" was coined by Khartov \citep{Khartov2026} and outlines that the characteristic function of $\mu$ is  actually a ratio between the characteristic functions of two infinitely divisible distributions.

All distributions from $\ID$ belong to $\Q$ as well, but the converse is far from true: the class $\Q$ is significantly larger. In particular, it includes the Bernoulli distribution with parameter not equal to $1/2$, as well as normal mixtures with centered components. Almost all known examples of the distributions from $\Q \setminus \ID $ that have been described in the literature,  belong to the subclasses, which we will discuss below. 
\begin{enumerate}
    \item Distributions with a non-zero discrete component, i.e.
    \begin{equation} \label{first_model}
        \mu = \w \mu_d + (1 - \w) \mu_c,
    \end{equation}
where $\mu_d$ and $\mu_c$ are, respectively, discrete and continuous distributions, and $\w \in (0, 1]$. A fundamental result in this field states that  $\mu \in \Q$   if and only if its characteristic function is bounded away from zero.  This fact was first proven for pure discrete distributions ($\w=1$) by Alexeev and Khartov \citep{AK2023}, later extended to the case when $\w \in (0,1]$ and $\mu_c$ is  absolutely continuous  (Berger and Kutlu \citep{BK2023}), and finally generalised to all distributions of the form \eqref{first_model}, including also singular continuous $\mu_c$ (Khartov \citep{Khartov2026}). 
 An interesting observation is that $\mu \in \Q$  if and only if  $\mu_d \in \Q$  and   the characteristic function of $\mu$ doesn't have real zeros (see Khartov \citep{Khartov2026}, Theorem~3). At the same time, the continuous component $\mu_c$ of a distribution \(\mu \in \Q\) may be either a rational-infinitely divisible distribution or it may not be.
     \item 
    Contamination model
        \begin{equation}\label{second_model}
        \mu = \w \mu_m + (1 - \w)\mu_e,
    \end{equation}
    where $\w \in (1/2, 1)$, 
 $\mu_m$ (``main component'') belongs to $\Q$, and $\mu_e$ (``error'') can be any distribution.  Lindner, Pan and Sato \citep{LPS2018} showed that \(\mu \in \Q\) if  the ratio between the characteristic functions of \(\mu_e\) and \(\mu_m\) is the Fourier transform of a measure with total variation less than \(\w/(1-\w)\).  The main difference  between this model and the rational-infinitely divisible distribution in the form \eqref{first_model}  is that \(\mu_m\) may not be discrete, but this generalization comes with the cost of requiring  \(\w>1/2\) and the additional assumption on the characteristic functions presented above.  It is interesting to note that a subclass of distributions of type (\ref{second_model}) that  have an atom of mass larger or equal to $1/2$, was shown to be rational-infinitely divisible nearly $50$ years before the extensive study of the class~\(\Q\) began (Cuppens \citep{Cuppens1970}).
\end{enumerate}
Rational-infinitely divisible distributions represent an innovative tool that can be applied across different fields. In particular, they have been used to show that a Cram{\'e}r\,--\,Wold device for infinite divisibility of \(\Z^d\)-valued distributions holds (Berger and Lindner~\citep{BL2022}). Also, they have been employed in various fields such as financial modeling (Madan et al. \citep{Madan2023}), physics (Demni and Mouayn \citep{DM2015}), number theory (Nakamura~\citep{Nakamura2022}) and insurance mathematics (Zhang et al.~\citep{Zhang2014}). 

As usual, the wide range of applications motivates the development of  statistical methods. 
While statistical estimation for the class of infinitely divisible distributions and L{\'e}vy-based models has been covered by a large number of studies, estimation in class \(\Q\) has only been investigated in a few papers. The first one is Passeggeri \cite{Pass2023}, where the estimation was considered in the framework of Bayesian analysis. Recently, Panov and Ryabchenko \cite{PanovRyabchenko2026} proposed a Fourier  approach for distributions of the form (\ref{second_model}). Their method is based on an analogue of the L{\'e}vy\,--\,Khintchine formula for the class \(\Q\), which differs from the classical version of this formula only in the use of a signed L{\'e}vy measure, also known as quasi-L{\'e}vy measure. 
However, there seems to be a lack of research on statistical inference for distributions of the type~(\ref{first_model}) from the class \(\Q\). Our paper intends to address this issue.

\subsection{Contribution}

Our research focuses on statistical inference for  distributions of the type (\ref{first_model}).  
We propose an estimation approach based on the band-limited kernels, which are defined as functions with Fourier transform having compact support. The use of these functions, in particular the $\mathrm{sinc}$ function, has been widely studied in signal processing, starting with the pioneering work of Whittaker \cite{Whittaker1915}. However, their applications in statistics are not well known.  In this paper, we present two different approaches based on band-limited kernels for solving problems related to distributions of the form (\ref{first_model}), where  $\mu_c$ is an absolutely continuous distribution:

%We propose an estimation approach based on band-limited kernels. These kernels are defined as functions with a compact Fourier transform, and they are widely used in signal processing. However, their statistical application has not been well-studied

% One of the most commonly used band-limited kernels are based on the function $\mathrm{sinc}(x)$. 
 
\begin{enumerate}
\item  estimation of $\w$, $\mu_d$ and $\mu_c$;% for the observations of $\mu$ using an approach based on convolution of characteristic function $\phi$ of $\mu$ with band-limited kernels;
\item  estimation of the quasi-L{\'e}vy measure $\nu$ of $\mu$ and quasi-L{\'e}vy measure $\nu_d$ of $\mu_d$ provided that \(\mu \in \Q\). %This algorithm involves convolutions with band-limited kernels as well. These estimates also allow to estimate the part of $\nu$ that corresponds to $\mu_2$ (we denote it as $\nu_2)$. This problem was studied in the ID case in Belomestny, Reiss~\cite{BR2015}, but, to the best of our knowledge, it is new in the context of QID distributions, which are not ID.
\end{enumerate}
It is a worth noting that the solution to the second problem is based on the estimate of \(\mu_d,\) which is obtained by solving the first problem.

We propose an approach for the estimation of $\w$, $\mu_d$ and $\mu_c$%address the first problem 
using concepts from the theory
of almost periodic functions \(\AP(\R, \C)\) %and Besicovitch spaces 
(see Corduneanu \cite{Cord2009}), in particular, convolution of these functions with band-limited kernels. This methodology gives rise to statistical inference in the class \(\AP(\R, \C)\) itself. 
%and show the solutions of these pr based on the for the case when atoms of $\mu_d$ are separated from each other and with this assumption we are able to show that the resulting estimates have polynomial rates of convergence.  
% This fact is an interesting observation when compared to the fact that the convergence rates of similar methods for ID distributions are logarithmic and can't be improved in general, see Theorems~4.3, 4.4, 5.7 in Belomestny and Reiss~\cite{BR2015}. 
To estimate \(\nu\), we suggest
an inverse Fourier-based method that also utilizes band-limited kernels. This approach
was previously applied by Belomestny and Reiss \cite{BR2015} for infinitely divisible distributions, but, to our knowledge,
its application to the class \(\Q\) has never been considered before. We demonstrate that both estimators achieve polynomial rates of convergence
given some assumptions on \(\mu\). 

\subsection{Structure}
The paper is organised as follows. The next section offers an overview of almost periodic functions and rational-infinitely divisible distributions.  Next, in Sections \ref{sec_stat_approach} and \ref{secqid} we separately consider two statistical problems mentioned above.  Section \ref{numerical_results} deals with the numerical study of the considered methods. The proofs are collected in Section \ref{proofs}. %Some auxiliarly facts are given in Appendix \ref{service_appendix}.

\section{Preliminaries}
\subsection{Almost periodic functions}
%Throughout the paper we consider almost periodic functions in Bohr's sense, this class is constructed in the following way.
In Bohr's sense, the family of almost periodic functions  $\AP(\R, \C)$  is defined as the closure of the set of functions 
 \begin{equation*}
\Bigl\{ 
 f: \R \to \C \;\; |
 \;\;
 f(t)=\sum_{k=1}^{n} a_k e^{i\lambda_kt}, \;\; a_k \in \C, \; \lambda_k \in \R, \; k=1, 2, ..., n
 \Bigr\},
 \end{equation*}
in the sup-norm \(
||f|| = \sup(|f(t)|, \; t \in \R)\). The term ``almost periodic function''  comes from an equivalent definition: $f \in \AP(\R, \C)$ if and only if for every $\varepsilon > 0$ there exists a positive number $l = l(\varepsilon)$ such that any interval $(a, a + l) \subset \R$ contains a number $\tau$ with the property
\begin{equation*}
    |f(t + \tau) - f(t)| < \varepsilon, \qquad t \in \R.
\end{equation*}
We refer to  \cite{Cord2009} for a proof of the equivalence of these definitions and a comprehensive study of this class. In our study we will use the following result, which is given as Proposition~3.8 in \cite{Cord2009}.

\begin{lemma}\label{lem1}
The convolution of an almost periodic function 
$f \in \AP(\R, \C)$ and any function $K \in L^1(\R, \C)$ is also an almost periodic function, that is,  
\begin{equation*}
    (K * f)(t) = \int_{\R} K(s)f(t - s)ds \in \AP(\R, \C).
\end{equation*}
\end{lemma}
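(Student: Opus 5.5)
The plan is to use Bohr's definition of $\AP(\R, \C)$ as the sup-norm closure of trigonometric polynomials, and to show that convolution with $K$ maps trigonometric polynomials to trigonometric polynomials and is continuous in the sup-norm.

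First, I would check that $K * f$ is well defined and bounded. Any $f \in \AP(\R, \C)$ is a uniform limit of bounded functions, so it is bounded. It is also continuous, since uniform limits of continuous functions are continuous. Hence $s \mapsto K(s) f(t-s)$ is measurable and dominated by $\|f\|\,|K(s)|$, which is integrable. This gives, for every $t$, the estimate
\begin{equation*}
|(K*f)(t)| \le \|K\|_{L^1}\,\|f\|,
\end{equation*}
so the map $f \mapsto K*f$ is linear and bounded in the sup-norm with operator norm at most $\|K\|_{L^1}$.

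Second, I would compute the convolution on exponentials. For $e_\lambda(t) = e^{i\lambda t}$,
\begin{equation*}
(K*e_\lambda)(t) = \int_\R K(s) e^{i\lambda(t-s)}\,ds = \widehat K(\lambda)\, e^{i\lambda t},
\end{equation*}
where $\widehat K(\lambda) = \int_\R K(s)e^{-i\lambda s}\,ds$ is finite because $K \in L^1$. By linearity, if $p(t) = \sum_{k=1}^n a_k e^{i\lambda_k t}$, then $K*p = \sum_{k=1}^n a_k \widehat K(\lambda_k) e^{i\lambda_k t}$, which is again a trigonometric polynomial.

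Third, for general $f \in \AP(\R, \C)$, I would pick trigonometric polynomials $p_n$ with $\|f - p_n\| \to 0$. By the first step, $\|K*f - K*p_n\| \le \|K\|_{L^1}\|f-p_n\| \to 0$. So $K*f$ is a uniform limit of trigonometric polynomials, and therefore lies in the closure, i.e. in $\AP(\R, \C)$.

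There is no serious obstacle here. The only point needing care is measurability and boundedness of $f$, which is handled in the first step using continuity of $f$. Alternatively, one could argue from the translation-number definition: the estimate $|(K*f)(t+\tau) - (K*f)(t)| \le \|K\|_{L^1}\sup_u |f(u+\tau)-f(u)|$ shows directly that every $\varepsilon$-almost period of $f$ is an $\varepsilon\|K\|_{L^1}$-almost period of $K*f$.
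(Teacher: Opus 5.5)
Your proof is correct and self-contained, but it takes a different route from the paper, which never proves the general lemma. The paper quotes Proposition~3.8 of Corduneanu and only checks the special case $f=\phi_d$. It integrates termwise and obtains the absolutely convergent exponential series $\sum_k p_k\F[K](-x_k)e^{iux_k}$ in \eqref{line1}.

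Your second step uses the same eigen-relation $K*e_\lambda=\F[K](-\lambda)\,e_\lambda$; your $\widehat K(\lambda)$ is $\F[K](-\lambda)$ in the paper's convention. You then close the argument with the operator bound $\|K*g\|\le\|K\|_{L^1}\|g\|$ and density of trigonometric polynomials. This handles every $f\in\AP(\R,\C)$, not only absolutely convergent exponential series. The paper's computation gives the explicit Bohr--Fourier coefficients of $K*\phi_d$, which is what it actually uses later in \eqref{eq1} and \eqref{convolution}. Your argument gives the full statement at essentially no extra cost.

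Your alternative via $\varepsilon$-almost periods is the standard textbook route. If you use it, also record that $K*f$ is continuous. The translation-number characterization of $\AP(\R,\C)$ presupposes continuity; otherwise discontinuous periodic functions would qualify. Continuity follows from your own estimate with $\tau$ replaced by a small shift $h$, since $f$ is uniformly continuous as a uniform limit of trigonometric polynomials.
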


Note that the characteristic function of any discrete distribution belongs to $\AP(\R, \C)$. The proof of Lemma \ref{lem1} for this case is fairly straightforward: the   convolution of a kernel $K \in L^1(\R, \C)$ and  the characteristic function $\phi_d(u)$ of a discrete distribution  with atoms at points $x_k$ and  corresponding probabilities $p_k, \, k = 1,2,...$, results in
\begin{align} \label{line1} 
     (K * \phi_d)(u) =  \int_{\R} K(v)\Bigl[\sum_{k=1}^{\infty} p_k e^{\i x_k (u-v)} \Bigr] dv = \sum_{k = 1}^{\infty}\underbrace{p_k \F[K](-x_k)}_{=:a_k}e^{iux_k} \in \AP(\R, \C),
\end{align} 
where $\F[K](\cdot)$ is the Fourier transform of the kernel \(K\). 

\subsection{Rational-infinitely divisible distributions} 
The L{\'e}vy\,--\,Khintchine formula states that the characteristic function $\phi(u)$ of an infinitely divisible distribution can be represented in the following form 
\begin{align} \label{LC}
    \phi(u)
    =
    \exp\Bigl\{
      \i \gamma u-\frac{1}{2}\sigma^2 u^2
			    +
		    \int_{\mathbb{R}\setminus \{0\}}
		    \left( 
		    		e^{\i u x}-1 - \i u x \I\{x \in [-1,1]\}		\right)			\nu(dx)\Bigr\},
\end{align}
where \(\gamma \in \R, \sigma \in \R_+,\) \(\nu: \B(\R \setminus \{0\}) \to \R_+\) is a measure such that  \(\int_{\R}  \min(1, x^2)\nu(dx) < \infty.\) 
%and $c: \mathbb{R} \rightarrow \mathbb{R}$ is a so-called representation function, which is a bounded, Borel measurable function, satisfying \(c(0)=0\) and
%satisfies \(
%c(x) = x+ o(x^2)\) as \(x \to 0\).
%Typically, the function $c(x)$ is chosen as $c(x) = x\I_{[-1, 1]}(x)$.

The characteristic function of the distributions from the class \(\Q\) can be represented in a similar way, with a difference only in the measure \(\nu\). For the distribution from class \(\Q\), this measure is signed, in the sense that it can take negative values. The total variation \(|\nu|\) of this measure satisfies the condition \[\int_{\R}  \min(1, x^2)|\nu|(dx) < \infty.\] 
As in the case of \(\ID\), the triplet \((\gamma, \sigma^2, \nu)\) completely describes the distribution from \(\Q\).
% Unfortunately, a simple characterization of QID class is not known yet, unlike for ID distributions, where it is a well-known fact that their characteristic functions are not zero in any point. However, some results were proven for certain subclasses of distributions. For example,  a distribution on integers  is QID if and only if its characteristic function doesn't have real zeros (Lindner, Pan and Sato~\cite{LPS2018}). More generally,  a discrete distribution is QID if and only if its characteristic function is separated from zero (Alexeev and Khartov~\cite{AK2023}). This result also holds for  mixtures of discrete and absolutely continuous distributions, which are considered in this paper (Berger and Kutlu~\cite{BK2023}). As of today the most researched and fully described subclass of QID distributions is the class of mixtures with at least one discrete component.

The following proposition summarizes some results from \cite{BK2023} and \cite{LPS2018}, and provides the exact form of the characteristic triplet for two types of rational-infinitely divisible distributions, which are described in the introduction.

\begin{proposition} \label{propmain}
(i)  Let $\mu\in \Q$ be a distribution of the type (\ref{first_model}), where $\mu_c$ is an absolutely continuous distribution. Then its triplet is equal to \((\gamma,0, \nu)\) with some \(\gamma>0\) and  quasi-L{\'e}vy measure \(\nu\) in the form 
\begin{eqnarray}\label{nu1}
\nu(dx) =  \sum_{y \in C} c_y \delta_y (dx)+ \bigl( 
h(x) + r \frac{e^{-|x|}}{|x|}\sign(x) 
\bigr) dx,
\end{eqnarray}
where the set $C \subset \R$, the sequence $(c_y)_{y \in C} \in \R$, $h \in L^1(\R, \C)$ and $r \in \Z$.

(ii) Let $\mu$ be a distribution  of the type (\ref{second_model}), where $\mu_m \in \Q$  with  triplet $(\gamma, \sigma^2, \nu)$, and $\mu_e$ is some distribution on \(\R.\) Let 
the ratio between the characteristic functions of \(\mu_e\) and \(\mu_m\) be the Fourier transform of a measure \(\Lambda\) with total variation less than \(\w/(1-\w)\). Then $\mu \in \Q$ with the characteristic triplet \[\Bigl(\gamma + \int_{-1}^1 x \Tilde{\nu}(\mathrm{d}x),  \quad \sigma^2, \quad  \nu + \Tilde{\nu}\Bigr),\]
where $\Tilde{\nu}$ is a finite signed measure defined as follows: \begin{equation}\label{Tildenu}
    \Tilde{\nu}(dx) = \sum_{k = 1}^{\infty} \frac{(-1)^{k + 1}}{k}\Bigl( 
\frac{1-\w}{\w} 
\Bigr)^k\Lambda^{*k}(dx),% \qquad B \in \B(\R\setminus\{0\}),
\end{equation}
with $\Lambda^{*k}$ being the convolution of the measure $\Lambda$ with itself  $k$ times.
\end{proposition}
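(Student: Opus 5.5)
The plan is to treat the two parts separately. In each part I would work with the distinguished logarithm of the characteristic function and identify it with a Lévy--Khintchine-type exponent whose measure may be signed.

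\textbf{Part (i).} Write $\phi=\w\phi_d+(1-\w)\phi_c$, where $\phi_c=\F[f_c]$ for a density $f_c\in L^1$. Since $\mu\in\Q$, the function $\phi$ is bounded away from zero; this is the Berger--Kutlu characterization recalled in the introduction. It follows that $\mu_d\in\Q$ and $\inf_u|\phi_d(u)|>0$. The key step is the factorization
\[
\phi=\w\phi_d\bigl(1+g\bigr),\qquad g:=\frac{(1-\w)\phi_c}{\w\phi_d}.
\]
Because $\phi_d\in\AP(\R,\C)$ is bounded away from zero, Wiener--Lévy type results for almost periodic Fourier--Stieltjes transforms give $1/\phi_d=\F[\rho]$ for a discrete finite signed measure $\rho$. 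Hence $g=\F[G]$ with $G=\frac{1-\w}{\w}\,\rho*f_c\in L^1$. The discrete factor $\w\phi_d$ is the characteristic function of a measure with atoms and nonvanishing transform. By the Alexeev--Khartov theory, $\log(\w\phi_d)=\i\gamma_d u+\int(e^{\i ux}-1)\nu_d(dx)$, where $\nu_d=\sum_{y\in C}c_y\delta_y$ is a finite discrete signed measure. This is the atomic part of (\ref{nu1}).

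The factor $1+g$ is $1+\F[G]$ with $G\in L^1$, and it does not vanish on $\R$. Its logarithm must be handled through the winding number. Let $r\in\Z$ be the index of $1+g$ along $\R$; this is well defined because $g(u)\to0$ as $|u|\to\infty$ by Riemann--Lebesgue. Consider the auxiliary function $\psi(u)=\bigl(\frac{1+\i u}{1-\i u}\bigr)^{r}$, which has index $r$ and satisfies
\[
\log\frac{1+\i u}{1-\i u}=\int\bigl(e^{\i ux}-1\bigr)\frac{e^{-|x|}}{|x|}\sign(x)\,dx .
\]
This identity is verified by a direct Frullani-type computation with the gamma kernels. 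The quotient $(1+g)/\psi$ then has index zero and is of the form $1+\F[L^1]$. By the Wiener--Lévy theorem for $L^1(\R)$ with adjoined unit, its logarithm equals $\F[h_0]$ for some $h_0\in L^1$. Writing $\F[h_0](u)=\int(e^{\i ux}-1)h_0(x)\,dx+\int h_0$ and absorbing constants and the compensator terms $\i u x\I\{|x|\le1\}$ into $\gamma$, we obtain (\ref{nu1}) with $h=h_0$, the atoms $c_y$, and the integer $r$. Here $\sigma=0$, because $\log\phi$ grows at most like $\log|u|$ rather than quadratically. The drift constant is just some real number; I would interpret the stated sign of $\gamma$ loosely and simply compute it. The main obstacle is the nonzero-index issue. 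Producing $\psi$ and checking its Lévy representation explicitly is where I expect the work to concentrate. A secondary obstacle is justifying that $1/\phi_d$ is a Fourier transform of a finite measure (Bochner--Phillips / Wiener for almost periodic series).

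\textbf{Part (ii).} Factor $\phi_\mu=\w\phi_m\bigl(1+\frac{1-\w}{\w}\F[\Lambda]\bigr)$. By hypothesis $\|\frac{1-\w}{\w}\Lambda\|_{TV}<1$, so the power series $\log(1+z)=\sum_k(-1)^{k+1}z^k/k$ converges in the Banach algebra of finite signed measures under convolution. This gives $\log(1+\frac{1-\w}{\w}\F[\Lambda])=\F[\tilde\nu]$ with $\tilde\nu$ as in (\ref{Tildenu}), and $\tilde\nu$ is finite. The total mass $\log\w$ that must be accounted for combines with $\log(\w\phi_m)$, since the product must have value $1$ at $u=0$. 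Writing $\F[\tilde\nu](u)-\F[\tilde\nu](0)=\int(e^{\i ux}-1)\tilde\nu(dx)$ and adding and subtracting $\i u\int_{-1}^1x\tilde\nu(dx)$ yields the stated triplet, with the drift shifted and the measure $\nu+\tilde\nu$. The only care needed is excluding any mass of $\tilde\nu$ at $0$, which contributes nothing to $e^{\i ux}-1$.
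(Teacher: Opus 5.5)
The paper gives no proof of Proposition~\ref{propmain}. Part (i) is quoted from Berger--Kutlu \cite{BK2023} and part (ii) from Lindner--Pan--Sato \cite{LPS2018}. Your proposal is a self-contained reconstruction along the lines of those sources, and it is correct in substance:
factor $\phi=\w\phi_d(1+g)$; invert $\phi_d$ among discrete measures (Bochner--Phillips); take $\log\phi_d$ from Alexeev--Khartov; remove the winding of the $L^1$-perturbation $1+g$ with a Cayley factor whose exponent is exactly $\int(e^{\i ux}-1)\,e^{-|x|}|x|^{-1}\sign(x)\,dx$; and take the logarithm of the index-zero remainder in $\C\oplus L^1(\R)$.
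In (ii) the logarithm is a norm-convergent series in the measure algebra, and $\F[\tilde\nu](0)=-\log\w$ because $\Lambda(\R)=1$.

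Compared with the bare citation, your route shows where each term of (\ref{nu1}) comes from. In particular, the density $r e^{-|x|}/|x|\,\sign(x)$, which has infinite total variation near $0$, is exactly the price of the winding number of $1+g$. Your route also shows that $\gamma$ is merely real. You are right not to trust ``$\gamma>0$'': the symmetric law $\tfrac12\delta_0+\tfrac12 N(0,1)$ lies in $\Q$ and has $\gamma=0$, and shifting $\mu$ makes $\gamma$ arbitrary. What your route does not remove is the dependence on \cite{BK2023}: the implication $\mu\in\Q\Rightarrow\inf_u|\phi(u)|>0$ you start from is part of the characterization proved there.

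Several steps need tightening.

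(a) $\log(\w\phi_d)$ contains the constant $\log\w$, and constants cannot be absorbed into $\gamma$. They cancel because you use the distinguished logarithm and $\phi(0)=1$.

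(b) Both $(1+\i u)/(1-\i u)$ and its reciprocal lie in $-1+\F[L^1]$; for example $(1-\i u)/(1+\i u)=-1+\F[2e^{x}\I\{x<0\}](u)$. Hence your quotient $(1+g)/\psi$ lies in $(-1)^r+\F[L^1]$, which is of the form $1+\F[L^1]$ only for even $r$. This holds automatically: $(1+g)(-u)=\overline{(1+g)(u)}$ and $(1+g)(0)=1/\w>0$ make the continuous argument odd with limits in $2\pi\Z$, so the winding number is even. Otherwise you would have to carry a constant $\i\pi$.

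(c) The index-zero logarithm is Krein's theorem (Gelfand theory on the circle $\R\cup\{\infty\}$), not the plain Wiener--L{\'e}vy theorem. The range of an index-zero curve need not admit a holomorphic logarithm on a neighbourhood.

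(d) Passing from $\inf|\phi|>0$ to $\inf|\phi_d|>0$ needs an argument. It follows from almost periodicity of $\phi_d$ (small values recur at arbitrarily large $|u|$) together with $\phi_c(u)\to0$.

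(e) $\sigma=0$ is read off directly from the representation you construct, by uniqueness of the triplet. Your growth remark is inaccurate: $\log\phi$ grows linearly through the drift, not like $\log|u|$.

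(f) In (ii), conclude $\mu\in\Q$ explicitly by splitting $\nu+\tilde\nu$ into its Jordan parts.
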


\section{Inference for mixtures of type (\ref{first_model})}\label{sec_stat_approach}
Consider the mixture $\mu$ of the type (\ref{first_model}), where the second component is an absolutely continuous distribution, denoted by $\mu_{ac}$. Let the characteristic functions of the distributions \(\mu_d, \mu_{ac}, \mu\) be denoted by 
\begin{align} \label{phiw}
\phi_d (u) = \sum_{k = 1}^{\infty} p_k &e^{iux_k}, \qquad 
\phi_{ac} (u) = \int_{\R} g_{ac}(x) e^{iux} dx, \nonumber \\
\phi(u) &= \w \phi_d(u) + (1 - \w) \phi_{ac}(u),
\end{align}
where \(x_k \in \R,\) \(p_k >0,\) \(\sum_{k=1}^\infty p_k =1,\)  and  $g_{ac}(x)$ is the density function of $\mu_{ac}$.  In this section we aim to estimate both components, \(\mu_d\) and \(\mu_{ac}\), as well as  the parameter \(\w\) from the observations \(X_1,..., X_n\), which are drawn from the distribution \(\mu.\) 

\subsection{The ideas behind the estimation procedure}  
\begin{enumerate} 
\item Consider the convolution of the characteristic function $\phi$ and a function $K \in L^1(\R)$,
\begin{align}
    (K * \phi)(u) &= 
        \w (K * \phi_d)(u) +   (1-\w)  (K * \phi_{ac})(u).%\int_{\R} K(v) \phi(u - v) dv \\
%    &=& \w  \int_{\R} K(v) \Bigl( \sum_{k = 1}^{n} p_ke^{i(u - v)x_k} \Bigr) dv \\ 
  %  &&\hspace{2cm}+ (1- \w) \int_{\R} K(v)\int_{\R}g_{ac}(x)e^{i(u - v)x}dxdv.
\end{align}
%where the first summand can be considered following the same lines as in (\ref{eq1}) with the proper choice of the kernel. At the same time, 
The application of the Fubini theorem to the second term gives
\begin{align*}
    (K * \phi_{ac})(u) &=
     \int_{\R} K(v)\int_{\R}g_{ac}(x)e^{i(u - v)x}dx dv 
     =
\int_{\R} g_{ac}(x)\F[K](-x)e^{iux} dx.
\end{align*}
Note that  \(g_{ac}(\cdot)\F[K](\cdot) \in L^1(\R),\) since \(\F[K](\cdot)\) is bounded due to \(K \in L^1(\R)\). Therefore, 
\begin{equation*}
	(K * \phi_{ac})(u)  \to 0  \qquad  \text{as} \;\; |u|  \to \infty,
\end{equation*} 
by the Riemann\,--\,Lebesgue lemma.
%\begin{equation*}
%  %  \F[\F[K_{a,b}]g_{ac}](u) 
%  \int_{\R} g_{ac}(x)\F[K_{c,\delta}](-x)e^{iux} dx \rightarrow 0, \qquad u \to \infty,
%\end{equation*}
%due to the application of the Riemann\,--\,Lebesgue lemma to the function \(g_{ac}(x)\F[K_{c,\delta}](-x)  = g_{ac}(x) \I\bigl(x \in [c-\delta, c+\delta]\bigr) \in L^1(\R).\)
Thus, we get for large \(|u|\)
\begin{align}\label{approx}
    (K * \phi)(u) &\approx \w (K * \phi_d)(u).
\end{align}
This observation  gives rise to the estimation procedure, which consists of the  estimation of \(\mu_d\) (that is, the atoms \(x_k\) and the corresponding probabilities \(p_k\)), and further estimation of the density function \(g_{ac}.\) \newline 

%This section discusses statistical inference for mixtures of the form \eqref{first_model}. % Before we proceed to the first statistical task, let us briefly describe the key idea behind the estimation procedure.
%\subsection{Inference for $\AP(\R, \C)$}
%We arrive at the conclusion, that for \(\w>0\)
\item In what follows, we will consider the discrete distributions \(\mu_d\), which satisfy the following assumption. 
\begin{enumerate}
\item[(A1)] \label{cond_A1} \(\mu_d\) is supported on a discrete set \(\{x_1, x_2,...\}\) such that for some \(\delta>0\)
\begin{align}
|x_j - x_k| &\geq 2\delta, \quad \forall j, k = 1, 2, ...
\end{align}
\end{enumerate}
Now, denote by $K_{c,\delta}(x)$ a band-limited kernel with Fourier transform% , which has Fourier transform supported on an interval of length \(2\delta\) centered at a point \(c\)
% \[\supp\Bigl(\F[K_{c,\delta}]\Bigr)=[c-\delta, c+\delta].\] 
% The simplest example is the function \(K^*_{c,\delta}\) with Fourier transform
\begin{equation}\label{star1}
    \F[K_{c,\delta}](x) = \I \bigl(x \in [c-\delta, c+\delta]\bigr)
\end{equation}
for some central point $c\in \R$. Note that \(K_{c,\delta}\)  can be expressed explicitly as 
\begin{align}\label{star2}
    K_{c,\delta}(x) &= \frac{1}{2\pi}\int_{\R} \F[K_{c,\delta}](u) e^{-iux}du =  \frac{\sin(\delta x)}{\pi x} e^{-i c x}.
\end{align}
Let us take as the center point \(c\) values from an equidistant grid \(c_1, c_2,...\) with step \(2 \delta.\) Due to our assumption \hyperref[cond_A1]{(A1)}, for any central point \(c=c_j, j=1,2,...\) there exists at most one point from the set \(\mathcal{X}=\{-x_1, -x_2,...\}\), which belongs to the interval \( I_j = [c_j-\delta, c_j+\delta]. \) Denote \[\mJ:= \Bigl\{ j =1,2,...: \;\mX \cap I_j \ne \emptyset \Bigr\}.\]
Let us rearrange the points  from the set \(\mX\), and assign the index \(j \in \mJ\) to a point in the interval $I_j$. Correspondingly, \(p_j = \mu_d(x_j), j \in \mJ\). Under these notations, we have 
\begin{align}\label{eq37}
\F[K_{c_j,\delta}](-x_k) &= \I \bigl\{ j = k \bigr\} , \qquad j, k  = 1,  2, ...,
\end{align}
and, continuing the line of reasoning in \eqref{line1}, we arrive at
\begin{align}\label{eq1}
     (K_{c_j,\delta} * \phi_d)(u) &=  \begin{cases} p_j  e^{\i u x_j},& j \in \mJ,\\
     0,&j \notin \mJ.
     \end{cases}
\end{align}

% \subsubsection{Estimation of the discrete part} 
\item Joint consideration of \eqref{eq1} and \eqref{approx} leads to the conclusion that \(\p_j := \w p_j\)  can be approximated via the solution of the following optimization problem 
\begin{align}\label{opt}\breve{\p}_j &=\argmin_{\p_j} \int_{\R} w^{U_n} (u) 
\Bigl( \bigl| 
      K_{c_j,\delta} * \phi(u) \bigr| -  \p_j  \bigr)^2 du =      \int_{\R} w^{U_n} (u) \bigl| 
      K_{c_j,\delta} * \phi(u) \bigr| du,
      %=
%       \frac{\int_{\R} w^{U_n} (u) \bigl|
%      (K_{c_j,\delta} * \phi)(u) \bigr|  du}{\int_{\R} w^{U_n} (u)  du},
\end{align} 
where  $w^{U_n}(u) = U_n^{-1}w(u/U_n)$, \(w\) is a non-negative even weight function supported on \([-1, -\eps] \cup [\eps,1]\) with \(\eps \in (0,1)\), \(\int_\R w(u) du=1,\) and  \(U_n\) is an increasing sequence of positive numbers tending to infinity. If  the solution of~\eqref{opt} is not equal to zero, one can further represent \(x_j\) as the solution of 
\begin{align}\label{task2}
\breve{x}_j=\argmin_{x_j} &\int_{\R} w^{U_n} (u) \bigl( \Arg \bigl(      K_{c_j,\delta} * \phi(u) \bigr) +  2\pi k -u x_j  \bigr)^2 du,
\end{align} 
with any choice of the parameter \(k \in \Z\). Due to our assumptions on the weight function, we have \(\int_\R w^{U_n}(u) u du=0\), and therefore the solution of~\eqref{task2} doesn't depend on \(k\),\begin{align} \label{task3}
\breve{x}_j &=
\frac{\int_{\R} w^{U_n} (u) u \cdot
\Arg \bigl(      K_{c_j,\delta} * \phi (u) \bigr)
 du}{ \int_{\R} w^{U_n} (u) u^2 \, du} =\int_{\R} \widetilde{w}^{U_n} (u) \Arg\bigl( 
      K_{c_j,\delta} * \phi(u) \bigr) du,
 \end{align}
where \(\widetilde{w}^{U_n}(u) = w^{U_n} (u) u /\int_{\R} w^{U_n} (v) v^2 \, dv.\) Note that  this weight function has support on \(\{|u| \in [\eps U_n, U_n]\}\) and satisfies the property \( \widetilde{w}^{U_n}(u)=U_n^{-2}\widetilde{w}(u/U_n)  \) with  \(\widetilde{w}(\cdot)=\widetilde{w}^1(\cdot).\) 

%\begin{eqnarray}\label{opt}
%\inf_{\widetilde{p}_j, x_j} \int_{\R} w^{U_n} (u) \bigl| 
%\log\bigl(      (K_{c_j,\delta} * \phi)(u) \bigr) - \log\bigl( \p_j  \F[K_{c_j,\delta}](-x_j) \bigr) - \i u x_j \bigr|^2 du,
%\end{eqnarray} 
\end{enumerate}

\subsection{Estimation approach}  \label{secest1}
%In practice, it is convenient to choose some natural number \(J \geq 2\) such that  the measure \(\mu\) satisfies (A1) with  \(\delta= \bigl(X_{(n)} - X_{(1)}\bigr)/(2(J-1)).\) This choice leads to the division of the interval $I = [X_{(1)} - \delta, X_{(n)} + \delta]$ into \(J\) subintervals \(I_1,...,I_J\) of length \(2 \delta\).  
For simplicity we assume that the distribution $\mu_d$ has bounded support, i.e. $\supp(\mu_d) \subset [x_{\min}, x_{\max}]$ with some \(x_{\min}, x_{\max} \in \R\). Let us choose some natural number \(J\) such that  the measure \(\mu\) satisfies (A1) with  \(\delta= \bigl(x_{\max} - x_{\min}\bigr)/(2 J).\) Denote the central points $c_j = x_{\min} + (2j-1)\delta$ and corresponding intervals $I_j = [c_j - \delta, c_j + \delta]$ for $j = 1, \dots, J$.

To implement the ideas described in the previous section, we use a plug-in estimator based on the empirical characteristic function, 
\begin{align}\label{phin}
\widehat{\phi}(u) &= \frac{1}{n}
 \sum_{k=1}^n e^{\i u X_k},
\end{align}
leading to the estimates 
\begin{align}
\widehat{\p}_j &=
\int_{\R} w^{U_n} (u) \bigl| 
      K_{c_j,\delta} * \widehat{\phi}(u) \bigr| du, \qquad j=1,...,J. \label{est1}
\end{align}
Define \(\widehat{\mJ}=\widehat{\mJ}(p_\circ)=\{j=1,...,J: \; \widehat\p_j \geq \p_\circ\}\) with some small \(\p_\circ >0,\) which may depend on \(n\). For all indices \(j \in \widehat{\mJ},\) estimate \(\widehat{x}_j\) by 
      %\frac{\int_{\R} w^{U_n} (u) \bigl| 
%      (K_{c_j,\delta} * \widehat{\phi})(u) \bigr| du}{\int_{\R} w^{U_n} (u)   du}; \\
\begin{align}\widehat{x}_j &=
\int_{\R} \widetilde{w}^{U_n} (u) \Arg\bigl( 
      K_{c_j,\delta} * \widehat{\phi}(u) \bigr) du.
%\\
%\frac{\int_{\R} w^{U_n} (u) u \cdot
%\Arg\bigl(      (K_{c_j,\delta} * \widehat{\phi})(u) \bigr)
% du}{ \int_{\R} w^{U_n} (u) u^2 \, du}  \\ &=   U_n^{-1} \int_{\eps}^1 \widetilde{w}(u) \cdot
%\Im
%\log\bigl(      (K_{c_j,\delta} * \widehat{\phi})(u U_n) \bigr)
% du,
\label{est2}%, \qquad j=1,...,J
\end{align}
Practical implementation of \eqref{est1}-\eqref{est2} can be simplified by using the exact form of  \(K_{c_j,\delta} * \widehat\phi\), \begin{align}\label{convolution}
K_{c_j,\delta} * \widehat{\phi}(u) 
&=\frac{1}{n}
\int_{\R} K_{c_j,\delta}(v) \Bigl( 
\sum_{k=1}^n 
e^{\i (u-v) X_k}
\Bigr) dv =
\frac{1}{n}
\sum_{k=1}^n e^{\i u X_k} \F\bigl[K_{c_j,\delta}](-X_k) =\frac{1}{n} \sum_{-X_k \in I_j} e^{\i u X_k}.
\end{align}
%Next, we  decompose the  task~\eqref{opt} into the following two: 
Next, using that $\sum_{k=1}^{\infty} \p_k = \sum_{k=1}^{\infty} \w p_k = \w$,
we estimate \(\w\) by
\begin{align}\label{www}
\widehat{\w} =\sum_{j \in \widehat\mJ} \widehat{\p}_j.
\end{align}
Let us exclude the trivial cases, \(\widehat{\w}=0\) and \(\widehat{\w}=1.\) Define the estimates of \(p_j\) and \(\phi_d\) by
\begin{align}\label{phid}
 \widehat{p}_j=\frac{\widehat{\p}_j}{\widehat\w}, \;\; j \in \widehat\mJ, \qquad \text{and} \qquad \widehat{\phi}_d(u) = \sum_{j \in \widehat{\mJ}}\widehat{p}_j e^{iu \widehat{x}_j}.
\end{align} 
Finally, the estimates of the characteristic function of \(\mu_d\) and the density function \(g_{ac}\) are defined by \begin{align}\label{gac0}
\widehat{\phi}_{ac}(u) &=\frac{\widehat{\phi}(u)  - \widehat{\w} \widehat{\phi}_d(u)}{1-\widehat{\w}}, \qquad u \in \R,\\
\label{gac}
\widehat{g}_{ac}(x) &= \F^{-1} \Bigl[ \widehat{\phi}_{ac}(\cdot)\I\{|\cdot| \leq V_n\} \Bigr](x)
= \frac{1}{2\pi} \int_{-V_n}^{V_n}\frac{\widehat{\phi}(u)  - \widehat{\w} \widehat{\phi}_d(u)}{1-\widehat{\w}} e^{-\i u x}  du, 
\end{align}
where %$L$  is a positive regularising kernel supported on \([-1,1]\) and 
\(V_n\) is an unbounded increasing sequence of positive numbers.

\subsection{Convergence rates}
Define a class $\mathcal{S}(\w_{\min}, \w_{\max}, C, \delta)$ of distributions of the form  (\ref{first_model}), such that \(\w \in [\w_{\min}, \w_{\max}] \subset (0,1),\) the discrete part $\mu_d$ has bounded support and satisfies  (\hyperref[cond_A1]{A1}) with parameter \(\delta>0\), and the absolutely continuous part \(\mu_{ac}\) has a differentiable density function \(g_{ac}\), which satisfies the following conditions:
\begin{equation*}
  \max_{x \in \R}\bigl\{
| g_{ac}(x)|, | g_{ac}'(x)|\bigr\} < C. %\qquad\int_{|x|>1 }|x| d\mu(x) < C.
\end{equation*}
%Before we formulate the result showing the convergence rates for this class, let us recall that $w^{U_n}(u) = U_n^{-1}w(u/U_n)$, where \(w\) is a non-negative even weight function supported on the interval \([-1, -\eps] \cup [\eps,1]\) with \(\eps \in (0,1)\), \(\int_\R w(u) du=1\), and  \(U_n\) is an increasing sequence of positive numbers tending to infinity.
\begin{theorem}\label{thm1}
%Let $w(u)$ be a non-negative weight function with \(w \in L^1([\eps,1] )\) with  \(\eps>0\) and support on $[\eps,1]$. Let distribution $\mu \in \mathcal{S}(p_{min}, C)$ for some positive constants $p_{min}$, $C$. Hence, under these assumptions and notations the following convergence rates hold for estimators $\widehat{p}_j$ and $\widehat{x}_j$ for the model \ref{main_model} for $n \rightarrow \infty$
Let distribution $\mu \in \mathcal{S}(\w_{\min}, \w_{\max}, C,\delta)$ for some positive constants $\w_{\min}, \w_{\max} \in (0,1)$, $C>0$, \(\delta>0\). Denote
\begin{equation*}
    \mQ_n =
\frac{\sqrt{\log(nU^2_n)}}{\sqrt{n}} + \frac{\sqrt{\delta}C}{\sqrt{U_n}}.
\end{equation*}
Then 
\begin{align}\label{ratepj}
    \sup_{\mathcal{S}} \max_{j} \bigl| \widehat{\p}_j - \p_j \bigr| &= O_{\mathbb{P}}\Bigl(\mQ_n\Bigr).
    \end{align}
If \(\mQ_n=o(1)\) and \(\p_\circ= c U_n^{-1}\) for any \(c>0\), then  
    \begin{align*} 
\sup_{\mathcal{S}} \max_{j \in \mJ \cap \widehat{\mJ}}|\widehat{x}_j - x_j| =   O_{\P} \Bigl( 
\mQ_n        \Bigr), \qquad
\sup_{\mathcal{S}} |\widehat{\w} - \w| =   O_{\P} \Bigl( 
\mQ_n        \Bigr), \qquad 
    \sup_{\mathcal{S}} \max_{j} \bigl| \widehat{p}_j - p_j \bigr| = O_{\mathbb{P}}\Bigl(\mQ_n\Bigr).
\end{align*}
Under the choice $U_n = n$ we have \(
    \mQ_n=O\Bigl(\sqrt{\log n} / \sqrt{n}\Bigr)
\), and therefore the convergence rates are parametric up to a logarithmic factor.
\end{theorem}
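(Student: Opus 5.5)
The plan is to split every error into a stochastic part and a deterministic (bias) part. Writing $\Delta_j(u) := K_{c_j,\delta} * \widehat\phi(u) - K_{c_j,\delta} * \phi(u)$, the identity \eqref{convolution} (and its population analogue) gives
\begin{align*}
K_{c_j,\delta} * \phi(u) = \p_j e^{\i u x_j}\I\{j\in\mJ\} + (1-\w)\int_{I_j} g_{ac}(-x)e^{-\i u x}dx =: \p_j e^{\i u x_j}\I\{j\in\mJ\} + R_j(u).
\end{align*}
For the stochastic term, $\Delta_j(u) = n^{-1}\sum_k (e^{\i uX_k}\I\{-X_k\in I_j\} - \E[\cdot])$ is an empirical process over the class indexed by $u\in[-U_n,U_n]$ and $j$. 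I would bound $\sup_{|u|\le U_n}\max_j|\Delta_j(u)|$ by a standard chaining/discretisation argument. The functions are Lipschitz in $u$ with constant $\max(|x_{\min}|,|x_{\max}|)+2\delta$ on the relevant indicator sets, so I would take a grid of mesh $U_n^{-1}n^{-1/2}$, giving $O(nU_n^2)$ points once combined with the $J$ indices. Hoeffding's inequality plus a union bound then yields $O_\P(\sqrt{\log(nU_n^2)/n})$, which is the first term of $\mQ_n$. The integrals against $w^{U_n}$ and $\widetilde w^{U_n}$ have bounded total mass (respectively $O(U_n^{-1})$ mass), so this bound transfers directly to these averages.

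For the bias, the key computation is to integrate by parts in $R_j(u)$. Since $g_{ac}$ and $g_{ac}'$ are bounded by $C$ and the interval has length $2\delta$, we get $|R_j(u)|\le C(2+2\delta)/|u|$; on the support $|u|\ge\eps U_n$ this is $O(U_n^{-1})$. That alone would give a rate $U_n^{-1}$, which is better than claimed. However, the stated term $\sqrt{\delta}C/\sqrt{U_n}$ suggests an $L^2$-type bound. Indeed, by Plancherel,
\[
\int w^{U_n}|R_j|^2 \le \frac{\|w\|_\infty}{U_n}\int|R_j|^2 \le \frac{2\pi \|w\|_\infty C^2 2\delta}{U_n},
\]
and Cauchy--Schwarz gives $\int w^{U_n}|R_j|\lesssim \sqrt{\delta}C/\sqrt{U_n}$. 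I would use whichever bound is cleaner; either is $O(\mQ_n)$. The estimate \eqref{ratepj} then follows from $\bigl||a+b|-\p_j\bigr|\le |b|$ when $a=\p_je^{\i ux_j}$ (or $a=0$ for $j\notin\mJ$).

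For $\widehat x_j$, on $j\in\mJ\cap\widehat\mJ$ I would write $K_{c_j,\delta}*\widehat\phi(u)=\p_je^{\i ux_j}(1+\eta_j(u))$ with $|\eta_j|\le(|\Delta_j|+|R_j|)/\p_j$. The main obstacle is the phase: $\Arg$ is discontinuous, so I need $|\eta_j|<1/2$ uniformly on the support and a continuous branch so that the $2\pi k$ jumps cancel. This is where the argument in \eqref{task2}--\eqref{task3} enters: oddness of $\widetilde w$ kills constants, and one would have to work with a continuous logarithm along the support, or argue piecewise that jumps contribute multiples of $2\pi$ that integrate to $O(U_n^{-1})$. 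The threshold $\p_\circ=cU_n^{-1}$ together with $\mQ_n=o(1)$ is meant to guarantee that selected indices have $\p_j$ bounded below relative to the noise; I would need to check that genuinely small atoms are handled, perhaps by noting $|\Arg$-error$|\le\pi$ always and $\int|\widetilde w^{U_n}|=O(U_n^{-1})$. Granting this, $|\Arg(1+\eta)|\lesssim|\eta|$ and $\int|\widetilde w^{U_n}|=O(U_n^{-1})$ give $|\widehat x_j-x_j|=O_\P(\mQ_n/U_n)\subset O_\P(\mQ_n)$.

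Finally, for $\widehat\w$ the sum over $j\le J$ contributes at most $J\cdot O_\P(\mQ_n)$ with $J$ fixed by the class. Missed indices, i.e. $j\in\mJ\setminus\widehat\mJ$, have $\p_j<\p_\circ+O_\P(\mQ_n)$, and spurious indices, i.e. $j\in\widehat\mJ\setminus\mJ$, have $\widehat\p_j=O_\P(\mQ_n)$. This gives $|\widehat\w-\w|=O_\P(\mQ_n)$ provided $U_n^{-1}=O(\mQ_n)$, which holds since $U_n^{-1}\le U_n^{-1/2}$. Then $\widehat p_j-p_j=(\widehat\p_j-\p_j)/\widehat\w+\p_j(\w-\widehat\w)/(\w\widehat\w)$, and with $\w\ge\w_{\min}$ this is $O_\P(\mQ_n)$. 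All constants depend only on the class parameters, which yields the suprema over $\mathcal S$. Taking $U_n=n$ gives $\mQ_n=O(\sqrt{\log n/n})$.
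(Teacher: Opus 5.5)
The paper's step and yours share the same gap, and the fallback you suggest does not close it. Your treatment of $\widehat\p_j$, $\widehat\w$ and $\widehat p_j$ follows the paper and is fine: Hoeffding plus a grid for the stochastic term as in Lemma~\ref{our_lemma1}, Cauchy--Schwarz and Plancherel for the bias, then the same triangle-inequality bookkeeping.

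\textbf{The gap.} It is the step for $\widehat x_j$ that you leave as ``granting this''. You need $|\eta_j(u)|<1/2$ for $|u|\in[\eps U_n,U_n]$, i.e.\ $\p_j$ bounded below by a multiple of $\sqrt{\log(nU_n^2)/n}+U_n^{-1}$. The threshold cannot give this. On $\widehat\mJ$ you only know $\p_j\ge cU_n^{-1}-|\widehat\p_j-\p_j|$, the available bound on $|\widehat\p_j-\p_j|$ is $O_\P(\mQ_n)$, and $U_n^{-1}=o(\mQ_n)$.

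The paper handles this step with Lemma~\ref{lem5}: the lower bound $\inf_{|u|\in[\eps U_n,U_n]}|K_{c_j,\delta}*\phi(u)|\ge\p_\circ-r_j$, combined with the integration-by-parts bound on $K_{c_j,\delta}*\phi_{ac}$ that you also derive. It then applies \eqref{main} to $I_3$ and $I_4$ with denominators of order $\p_\circ$. But $r_j$ contains the same term $|\widehat\p_j-\p_j|$. Moreover, with a denominator of order $cU_n^{-1}$, the ratio in $I_3$ is only bounded by $U_n\sqrt{\log(nU_n^2)/n}/c$, which does not guarantee $|z|<1/2$ when $U_n=n$. So that route does not close the step either.

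\textbf{What closes it.} You need a lower bound on the atom masses themselves. For a fixed $\mu\in\mathcal S$ this comes for free. Bounded support and $2\delta$-separation make $\mJ$ finite, so $\p_{\min}:=\min_{j\in\mJ}\p_j>0$. Then $|\eta_j|\le\bigl(O_\P(\sqrt{\log(nU_n^2)/n})+2C(1+\delta)/(\eps U_n)\bigr)/\p_{\min}\to0$ in probability. Granting, as the paper does, that the continuous branch is anchored so that the phase of $\p_je^{\i ux_j}$ is $ux_j$, your bound $O_\P\bigl(\mQ_n/(U_n\p_{\min})\bigr)$ follows. The supremum over $\mathcal S$ additionally requires $\p_{\min}$, and also $J$, to be controlled by the class parameters; you and the paper both treat $J$ as fixed.

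\textbf{Why your fallback fails.} Suppose first that $\Arg$ is the principal value. Then $|\Arg|\le\pi$ and $\int|\widetilde w^{U_n}|=O(U_n^{-1})$ give $|\widehat x_j|=O(U_n^{-1})$ deterministically, so the estimator tends to $0$ rather than $x_j$. The $\asymp|x_j|U_n$ jumps of size $2\pi$ on the support each contribute $O(U_n^{-1})$, but together they account for all of $x_j$, so they cannot be argued away piecewise. Hence \eqref{task3} must use the unwrapped phase. In that case the error $\Arg(K_{c_j,\delta}*\widehat\phi(u))-ux_j$ is the unwrapped phase of $1+\eta_j(u)$, which is not bounded by $\pi$ once $|\eta_j|$ can exceed $1$.

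A concrete failure: suppose a tiny atom receives no observations and a single absolutely continuous observation $Y$ falls in $-I_j$. Then $K_{c_j,\delta}*\widehat\phi(u)=n^{-1}e^{\i uY}$, so $\widehat\p_j=n^{-1}$, which passes the threshold $cU_n^{-1}$ for $U_n=n$ and $c\le1$. Yet $\widehat x_j=Y$, whose error $|Y-x_j|$ can be of order $\delta$ rather than $O(\mQ_n)$. Small atoms therefore have to be excluded by an assumption such as $\p_j\ge\p_{\min}$, not by the threshold $\p_\circ=cU_n^{-1}$.
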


%The next result gives convergence rates for the parameters of the mixture, which estimators depend on the estimation of $\w$

%\begin{remark}
%    The results in Theorem $\ref{thm1.1}$ are similar to the rates obtained by Theorem $\ref{thm1}$ since $J_n$ naturally depends on $\delta$, because intervals of length $l < 2\delta$ provide no additional inference data comparing to intervals of length $l = 2\delta$ and therefore $J_n$ is bounded in some way.
%\end{remark}

Next, we proceed to the convergence rates of $\widehat{g}_{ac}$. Not surprisingly, the rates depend on further assumptions on the class of densities \(g_{ac}\). Below we consider two particular cases, which are very common in statistical literature (see, e.g., \cite{Meister}), namely, the classes of ordinary smooth and supersmooth densities,
\begin{align*}
\mathcal{P}=\mathcal{P}_{\alpha, \mathtt{C}}&:= \Bigl\{g_{ac}:  |\phi_{ac}(u)| \leq \mathtt{C} \bigl(1+|u|\bigr)^{-\alpha}\Bigr\}, \qquad \alpha, \mathtt{C}>0,\\
\mathcal{E}=\mathcal{E}_{\gamma,\mathtt{C}}&:= \Bigl\{g_{ac}: \; |\phi_{ac}(u)| \leq \exp\{-\mathtt{C} |u|^\gamma\} \Bigr\}, \qquad \gamma,\mathtt{C}>0.
\end{align*}

\begin{theorem}\label{thm2}
Let the assumptions of Theorem~\ref{thm1} be fulfilled. Then 
\begin{align}\label{rateG}
\sup_{\mathcal{S}} \sup_{x \in \R} | 
 \widehat{g}_{ac}(x) -  g_{ac}(x) 
 | 
 = O_{\P}\Bigl( 
 V_n    \sqrt{
\log(n V_n^2) / n
}
 +V_n^2\mQ_n+V_n\,\mQ_n
+
 \int_{|u|>V_n} |\phi_{ac}(u)|du
 \Bigr).
\end{align}
In particular, if \(g_{ac} \in \mathcal{P}_{\alpha,\mathtt{C}}\) for some \(\alpha > 3\), \(\mathtt{C}>0\), then the choice \(U_n =n,\) \(V_n = n^{1/(2(\alpha-1))}\) leads to 
\begin{align}\label{rateP}
\sup_{\mathcal{S} \cap \mathcal{P}} \sup_{x \in \R} | 
 \widehat{g}_{ac}(x) -  g_{ac}(x) 
 | 
 = O_{\P}\Bigl( 
 \frac{\sqrt{\log(n)}}{n^{(\alpha-3)/(2(\alpha-1))}}
 \Bigr),
\end{align}
that is, the rate is polynomial but not parametric. If \(g_{ac} \in \mathcal{E}_{\gamma,\mathtt{C}}\) for some \(\gamma > 2\), \(\mathtt{C}>1,\) then the choice \(U_n =n,\) \(V_n = \sqrt{\log(n)}\) leads to 
\begin{align}\label{rateE}
\sup_{\mathcal{S}\cap{\mathcal{E}}} \sup_{x \in \R} | 
 \widehat{g}_{ac}(x) -  g_{ac}(x) 
 | 
 = O_{\P}\Bigl( 
 \frac{(\log n)^{3/2}}{\sqrt{n}}
 \Bigr),
\end{align}
and therefore the convergence rates are parametric up to a logarithmic factor.
\end{theorem}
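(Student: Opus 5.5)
The plan is to decompose the error of $\widehat{g}_{ac}$ in Fourier space, using the identity $g_{ac}(x)=\frac{1}{2\pi}\int_{\R}\phi_{ac}(u)e^{-\i ux}du$. This is justified in both special classes because $\phi_{ac}\in L^1$ there. Subtracting this identity from \eqref{gac} gives, uniformly in $x$,
\begin{align*}
\bigl|\widehat{g}_{ac}(x)-g_{ac}(x)\bigr|
\le \frac{1}{2\pi}\int_{-V_n}^{V_n}\bigl|\widehat{\phi}_{ac}(u)-\phi_{ac}(u)\bigr|du+\frac{1}{2\pi}\int_{|u|>V_n}|\phi_{ac}(u)|du .
\end{align*}
The second term is the bias term in \eqref{rateG}, so everything reduces to a sup bound for $|\widehat{\phi}_{ac}-\phi_{ac}|$ on $[-V_n,V_n]$.

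From \eqref{gac0} and $\phi_{ac}=(\phi-\w\phi_d)/(1-\w)$, I will write
\begin{align*}
\widehat{\phi}_{ac}-\phi_{ac}=\frac{\widehat{\phi}-\phi}{1-\widehat{\w}}-\frac{\widehat{\w}\widehat{\phi}_d-\w\phi_d}{1-\widehat{\w}}+(\phi-\w\phi_d)\Bigl(\frac{1}{1-\widehat{\w}}-\frac{1}{1-\w}\Bigr).
\end{align*}
By Theorem~\ref{thm1}, $\widehat{\w}\to\w\le\w_{\max}<1$, so $1-\widehat{\w}$ stays bounded away from zero with probability tending to one. The three terms are then handled as follows.

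First, the empirical process term. I will use the standard bound $\sup_{|u|\le V_n}|\widehat{\phi}(u)-\phi(u)|=O_{\P}(\sqrt{\log(nV_n^2)/n})$, which follows from a finite grid of mesh $\sim n^{-1/2}/V_n$, Hoeffding's inequality and a union bound, with Lipschitz control that needs only a mild moment (or truncation) argument. This is presumably the same device that produces $\sqrt{\log(nU_n^2)/n}$ in $\mQ_n$. Integrating over $[-V_n,V_n]$ gives the term $V_n\sqrt{\log(nV_n^2)/n}$.

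Second, the term $\widehat{\w}\widehat{\phi}_d-\w\phi_d=\sum_{j\in\widehat\mJ}\widehat{\p}_je^{\i u\widehat{x}_j}-\sum_{j\in\mJ}\p_je^{\i ux_j}$, which is where the main work lies. I will split it three ways:
\begin{itemize}
\item $\sum_{j\in\mJ\cap\widehat\mJ}(\widehat{\p}_j-\p_j)e^{\i u\widehat{x}_j}$ is bounded by $J\max_j|\widehat{\p}_j-\p_j|=O_{\P}(\mQ_n)$, since $J$ is fixed by $\delta$.
\item $\sum_{j\in\mJ\cap\widehat\mJ}\p_j(e^{\i u\widehat{x}_j}-e^{\i ux_j})$ is bounded by $|u|\max_j|\widehat{x}_j-x_j|=O_{\P}(|u|\mQ_n)$.
\item The indices in the symmetric difference of $\mJ$ and $\widehat\mJ$ contribute $O_{\P}(\mQ_n+\p_\circ)$. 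If $j\in\widehat\mJ\setminus\mJ$, then $\widehat{\p}_j=|\widehat{\p}_j-0|=O_{\P}(\mQ_n)$. If $j\in\mJ\setminus\widehat\mJ$, then $\p_j\le\widehat{\p}_j+O_{\P}(\mQ_n)<\p_\circ+O_{\P}(\mQ_n)$. Because $\p_\circ=cU_n^{-1}=O(\mQ_n)$, both pieces are $O_{\P}(\mQ_n)$.
\end{itemize}
Integrating over $|u|\le V_n$ yields $V_n\mQ_n+V_n^2\mQ_n$. The third term is $O_{\P}(|\widehat{\w}-\w|)=O_{\P}(\mQ_n)$, uniformly in $u$, and it contributes $V_n\mQ_n$. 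Uniformity over $\mathcal{S}$ follows because every constant above depends only on $\w_{\min}$, $\w_{\max}$, $C$ and $\delta$. This proves \eqref{rateG}.

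The main obstacle I anticipate is the index-set mismatch. The rates for $\widehat{x}_j$ in Theorem~\ref{thm1} are given only on $\mJ\cap\widehat\mJ$, so the term $|u|\,|\widehat{x}_j-x_j|$ can be used only there. The extra indices are absorbed through the smallness of their estimated or true masses, via the thresholding with $\p_\circ$.

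Finally, the two special cases are plugged in with $U_n=n$, so that $\mQ_n\asymp\sqrt{\log n/n}$ and the dominant term is $V_n^2\sqrt{\log n/n}$.
\begin{itemize}
\item For $\mathcal{P}_{\alpha,\mathtt{C}}$, the bias is $\int_{|u|>V_n}|\phi_{ac}(u)|du\lesssim V_n^{1-\alpha}$. Balancing this against the dominant term suggests the stated $V_n$. With $V_n=n^{1/(2(\alpha-1))}$ one gets $V_n^2/\sqrt n=n^{-(\alpha-3)/(2(\alpha-1))}$, and the bias is $n^{-1/2}$, which is smaller; this gives \eqref{rateP}, and $\alpha>3$ is needed for convergence.
\item For $\mathcal{E}_{\gamma,\mathtt{C}}$ with $V_n=\sqrt{\log n}$, the bias satisfies $\int_{|u|>V_n}e^{-\mathtt{C}|u|^\gamma}du\lesssim e^{-\mathtt{C}(\log n)^{\gamma/2}}$. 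Since $\gamma>2$ and $\mathtt{C}>1$, this is $o(n^{-1/2})$. The stochastic part is then $(\log n)\sqrt{\log n/n}=(\log n)^{3/2}/\sqrt n$, which gives \eqref{rateE}.
\end{itemize}
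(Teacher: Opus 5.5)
Your proposal is correct and follows essentially the same route as the paper: the same split of $\widehat{g}_{ac}-g_{ac}$ into the truncated Fourier integral plus the bias $\int_{|u|>V_n}|\phi_{ac}(u)|du$, the same three-term decomposition of $\widehat{\phi}_{ac}-\phi_{ac}$ with $1-\widehat{\w}$ bounded away from zero, and the same $|u|\,|\widehat{x}_j-x_j|$ bound that produces $V_n^2\mQ_n$. Your direct use of $\widehat{\p}_j$ and your explicit handling of the indices in $\mJ\setminus\widehat{\mJ}$ and $\widehat{\mJ}\setminus\mJ$ are, if anything, tidier than the paper's bookkeeping; one minor point is that the prescribed $V_n=n^{1/(2(\alpha-1))}$ does not balance the bias against $V_n^2\mQ_n$ (it only makes the bias of order $n^{-1/2}$), but as you then check directly, this does not affect \eqref{rateP}.
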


\section{Inference for quasi-L{\'e}vy measures} \label{secqid}

\subsection{Estimation approach}\label{secest2}
This section discusses the case when the model of type~\eqref{first_model} belongs to the class \(\Q\). Recall that Proposition~\ref{propmain}(i) states that  the characteristic triplet is equal to \((\gamma,0, \nu)\) with some \(\gamma>0\) and  the quasi-L{\'e}vy measure \(\nu\) in the form~\eqref{nu1}. Note also that due to Theorem 2.2 from \cite{BK2023} the condition $\mu \in \Q$ yields $\mu_d \in \Q$ with quasi-L{\'e}vy measure \(\nu_d(dx)=\sum_{y \in C} c_y \delta_y (dx)\). In what follows we aim to recover the quasi-L{\'e}vy measure \(\nu\) and its two components: \(\nu_d\) and the second component, which may not be a quasi-L{\'e}vy measure of any other distribution. 

For the estimation of \(\nu\), we represent the characteristic exponent \(\psi(u) := \log\bigl( \phi(u) \bigr)\) using the L{\'e}vy-Khintchine representation~\eqref{LC} with \(\sigma=0,\)
\begin{align*}
\psi(u) &=
 \i \gamma u
		    +
		    \int_{\mathbb{R}\setminus \{0\}}
		    \left( 
		    		e^{\i u x}-1 - \i u x \I\{x \in [-1,1]\}		\right)			\nu(dx),
\end{align*} and take the second derivative of both sides of the last equality, 
\begin{equation}\label{imp_eq}
    \psi''(u) = \int (ix)^2e^{iux} \nu(dx) = -\F[\bar{\nu}](u),
\end{equation}
where \(\bar{\nu}(dx) := x^2 \nu(dx).\) Inspired by (\ref{imp_eq}), we introduce an estimator 
\begin{equation}\label{barnuu}
   \widehat{\bar{\nu}} = -\F^{-1}\Bigl[\widehat\psi'' \F[K_n]\Bigr] = -\F^{-1}\Bigl[(\widehat{\phi}''/\widehat{\phi} - (\widehat{\phi}'/\widehat{\phi})^2)\F[K_n]\Bigr],
\end{equation}
where $\widehat{\phi}(\cdot)$ is an empirical characteristic function~\eqref{phin}, $\widehat{\psi}(\cdot) := \log \widehat{\phi}(\cdot)$, $K_{n}(\cdot) := W_nK(\cdot W_n)$ with an unbounded increasing sequence $W_n$ and a positive band-limited kernel $K$,  such that
\begin{equation*}
    \int K(x)dx = 1, \quad  \int |x|^{1/2}K(x)dx < \infty, \quad \supp\Bigl(\F[K]\Bigr) \subseteq [-1, 1].
\end{equation*}
The term $\F[K_n]$ is added to the estimator to restrict $\widehat{\psi}''(u)$ to the interval $[-W_n, W_n]$ and to smooth the inverse Fourier transform. The same approach may be applied to the discrete part of the mixture where the estimate of $\psi_d$ is based on~\eqref{phid},
\begin{equation*}
    \widehat{\psi}_d := \log\widehat{\phi}_d(u)= \log \Bigl( \sum_{j\in \widehat\mJ}\widehat{p}_j e^{iu\widehat{x}_j} \Bigr).
\end{equation*}

\subsection{Convergence rates}
Following ideas from~\cite{BR2015}, we will 
%We will show the convergence rates of the propos
%Throughout this section we use the standard norm in the $L^2$ space, 
%\begin{equation*}
%    ||f|| = \Bigl( \int_{\R} |f(x)|^2 dx \Bigr)^{1/2}, \quad f\in L^2.    
%\end{equation*}
%Further, we will 
derive the convergence rates in terms of the operator norm in the space $H^{-1}(\R)$, which is a dual space to the Sobolev space $H^1(\R)$. Recall that the one-dimensional space $H^1(\R)$ consists of the functions $f \in L^2(\R)$ with $f' \in L^2(\R)$. The norm in  $H^1(\R)$  is defined as
\begin{equation*}
    ||f||_{H^1} = %(\int_{\R} |f(x)|^2 dx + \int_{\R} |f'(x)|^2 dx)^{1/2} = 
    (||f||^2 + ||f'||^2)^{1/2},
\end{equation*}
where \(||f||=\bigl(\int_{\R} |f(x)|^2 dx\bigr)^{1/2}\). 
For an operator $G \in H^{-1}(\R)$, define the norm \begin{equation}\label{first_norm}
    ||G||_{H^{-1}} = \frac{1}{\sqrt{2\pi}}||(1 + u^2)^{-1/2}\F[G](u)||, \quad G \in H^{-1},
\end{equation}
which coincides with the standard definition of the norm in dual spaces which is
\begin{equation}\label{second_norm}
    ||G||_{H^{-1}} = \sup_{||f||_{H^1} = 1} |G \circ f|, \quad G \in H^{-1}.
\end{equation}
For positive constants \(C_1, C_2>0,\) introduce the class 
\begin{equation}\label{class_M}
    \mathcal{M}(C_1, C_2) := \Bigl\{ \mu \in \Q : \;\; \E_{\mu}\, |X|^4 \leq C_1, \quad |\bar{\nu}|(\R) \leq C_2 \Bigr\},
\end{equation}
where \(X \sim \mu,\) \(\nu\) is the quasi-L\'evy measure of \(\mu,\) and \(\bar{\nu}(dx) = x^2 \nu(dx)\).

\begin{theorem}\label{thm3}
     The estimator $\widehat{\bar{\nu}}$ has the following convergence rate
    \begin{equation*}
        \sup_{\mathcal{S}\, \cap\, \mathcal{M}(C_1, C_2)} ||\widehat{\bar{\nu}} - \bar{\nu}||_{H^{-1}} = O_{\mathbb{P}}\Bigl(C_1^{1/2}\Bigl(\frac{\log (nW^2_n)}{n}\Bigr)^{1/4} + C_2\, W_n^{-1/2}\Bigr), \quad n \rightarrow \infty,
    \end{equation*}
    provided \(W_n \lesssim n.\) The choice $W_n = n$ yields polynomial convergence rate
    \begin{equation*}
        \sup_{\mathcal{S}\, \cap\, \mathcal{M}(C_1, C_2)}||\widehat{\bar{\nu}} - \bar{\nu}||_{H^{-1}} = O_{\mathbb{P}}\Bigl(C_1^{1/2}\,n^{-1/4}(\log n)^{1/4}\Bigr), \quad n \rightarrow \infty.
    \end{equation*}
\end{theorem}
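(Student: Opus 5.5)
We follow the scheme of Belomestny and Reiss \cite{BR2015} and split the error into a stochastic part and a deterministic (bias) part. Since $\F[\bar\nu]=-\psi''$, we have
\begin{equation*}
\F\bigl[\widehat{\bar\nu}-\bar\nu\bigr](u)=-\bigl(\widehat\psi''(u)-\psi''(u)\bigr)\F[K_n](u)+\psi''(u)\bigl(\F[K](u/W_n)-1\bigr).
\end{equation*}
By the definition \eqref{first_norm} of the $H^{-1}$ norm, the error is at most $\|(1+u^2)^{-1/2}(\widehat\psi''-\psi'')\F[K_n]\|$ plus $\|(1+u^2)^{-1/2}\psi''(u)(1-\F[K](u/W_n))\|$, up to the factor $(2\pi)^{-1/2}$. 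We bound the two terms separately.

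\textbf{Bias term.} Here $|\psi''(u)|=|\F[\bar\nu](u)|\le|\bar\nu|(\R)\le C_2$. Set $\F[K](0)=\int K=1$. Because $\int|x|^{1/2}K(x)dx<\infty$, the Fourier transform of $K$ is $1/2$-H\"older at the origin: $|1-\F[K](v)|\le\int|1-e^{ivx}|K(x)dx\lesssim|v|^{1/2}$. Also $|1-\F[K]|\le 2$. Splitting at $|u|=W_n$ gives
\begin{equation*}
C_2^2\Bigl(\int_{|u|\le W_n}\frac{|u|}{W_n(1+u^2)}du+\int_{|u|>W_n}\frac{4}{1+u^2}du\Bigr)\lesssim C_2^2\Bigl(\frac{\log W_n}{W_n}+\frac1{W_n}\Bigr).
\end{equation*}
This is slightly worse than the target $C_2^2W_n^{-1}$ by a logarithm. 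To remove it, we would use $|1-\F[K](v)|\lesssim\min(|v|^{1/2},2)$ together with the finer bound $|1-\F[K](v)|\le\int|vx|^{1/2}K$ only on $|u|\lesssim W_n$. If that fails, we accept a $\log$ factor, absorbed as in \cite{BR2015}.

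\textbf{Stochastic term.} On $|u|\le W_n$ we write
\begin{equation*}
\widehat\psi''-\psi''=\frac{\widehat\phi''-\phi''}{\widehat\phi}-\frac{\phi''(\widehat\phi-\phi)}{\widehat\phi\phi}-\Bigl(\frac{\widehat\phi'}{\widehat\phi}-\frac{\phi'}{\phi}\Bigr)\Bigl(\frac{\widehat\phi'}{\widehat\phi}+\frac{\phi'}{\phi}\Bigr).
\end{equation*}
The key point for $\mu\in\mathcal S\cap\Q$ is that $|\phi|$ is bounded away from zero uniformly on $\R$. This holds because $\phi=\w\phi_d+(1-\w)\phi_{ac}$, $\mu\in\Q$ forces $\inf|\phi|>0$ (the Alexeev--Khartov/Berger--Kutlu characterization), and $\phi_{ac}\to0$.

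For the empirical processes we use the uniform deviation bound already employed in Theorem~\ref{thm1}, now applied to derivatives: with probability tending to one,
\begin{equation*}
\sup_{|u|\le W_n}\bigl|\widehat\phi^{(k)}(u)-\phi^{(k)}(u)\bigr|\lesssim \sqrt{\E|X|^{2k}}\sqrt{\log(nW_n^2)/n},\qquad k=0,1,2.
\end{equation*}
This follows from a chaining/Bernstein argument on a grid of mesh $\sim n^{-1/2}$, with a Lipschitz bound using $\E|X|^{k+1}$ (or a truncation argument using $\E|X|^4\le C_1$). When $W_n\lesssim n$ this deviation is $o(1)$, so $|\widehat\phi|\ge\inf|\phi|/2$ on $[-W_n,W_n]$. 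All terms are then controlled by $\sup|\widehat\phi^{(k)}-\phi^{(k)}|$ times moments up to order $4$, so $\sup_{|u|\le W_n}|\widehat\psi''-\psi''|\lesssim C_1^{1/2}\sqrt{\log(nW_n^2)/n}$. Because $(1+u^2)^{-1/2}\in L^2$, the $L^2$ norm of the stochastic term is bounded by the same quantity.

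That gives $n^{-1/2}$, better than the claimed $n^{-1/4}$. The stated square root $\bigl(\log(nW_n^2)/n\bigr)^{1/4}$ suggests the authors handle the quadratic term $(\widehat\phi'/\widehat\phi)^2$ through an $L^2$/moment bound. Either way the bound implies the claim. The choice $W_n=n$ then finishes, since $C_2W_n^{-1/2}=C_2n^{-1/2}$ is dominated.

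\textbf{Main obstacle.} The delicate step is the uniform-in-$u$ control over the growing window $[-W_n,W_n]$ under only a fourth moment. We must justify the sup bound for $\widehat\phi''$, whose summands $X_k^2e^{iuX_k}$ are unbounded. We would first try a truncation at level $\tau_n=(n/\log n)^{1/4}$, using Markov with $\E|X|^4\le C_1$ for the tail, and a union bound over a grid for the bounded part. We must also make the lower bound on $|\phi|$ uniform over the class $\mathcal S\cap\mathcal M$, which may require adding $\inf|\phi|$ as an implicit class constant.
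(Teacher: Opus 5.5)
Your bias estimate is off by a logarithm and the remedy you suggest does not remove it; the stochastic part is a different and sharper route than the paper's, and it works once one step is filled in.

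\textbf{The gap: the bias term.} Your pointwise H\"older estimate $|1-\F[K](v)|\lesssim|v|^{1/2}$, squared and integrated against $(1+u^2)^{-1}$ over $|u|\le W_n$, gives $W_n^{-1}\log(1+W_n^2)$. The fix you propose uses the same pointwise bound, so it cannot remove the logarithm. Accepting the log proves only $C_2\sqrt{\log W_n/W_n}$, not the first display of the theorem for general $W_n\lesssim n$. The case $W_n=n$ survives, because this term is dominated there.

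The missing idea is to keep the averaging in $y$ outside the norm, not inside the absolute value. The paper does this on the dual side. For $\|f\|_{H^1}\le 1$, Cauchy--Schwarz gives $|f(x+y)-f(x)|\le|y|^{1/2}\|f'\|$. Hence $\|K_n*\bar\nu-\bar\nu\|_{H^{-1}}\le|\bar\nu|(\R)\int|y|^{1/2}K_n(y)\,dy=W_n^{-1/2}|\bar\nu|(\R)\int|z|^{1/2}K(z)\,dz$.

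You can get the same result in your Fourier framework. Write $1-\F[K_n](u)=\int(1-e^{\i uy})K_n(y)\,dy$ and apply Minkowski's integral inequality:
\begin{align*}
\Bigl\|\frac{1-\F[K_n](u)}{(1+u^2)^{1/2}}\Bigr\|
&\le\int_{\R} K_n(y)\Bigl(\int_{\R}\frac{2-2\cos(uy)}{1+u^2}\,du\Bigr)^{1/2}dy\\
&=\sqrt{2\pi}\int_{\R} K_n(y)\bigl(1-e^{-|y|}\bigr)^{1/2}dy
\le\sqrt{2\pi}\,W_n^{-1/2}\int_{\R}|z|^{1/2}K(z)\,dz.
\end{align*}
Together with $|\psi''|\le C_2$, this gives $C_2W_n^{-1/2}$ with no logarithm.

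\textbf{The stochastic term: a different, sharper route.} The paper expands $\log(1+\Delta_n)$ with $\Delta_n=(\widehat\phi-\phi)/\phi$ and bounds $\Delta_n''$ and the remainder $R_n''$. It then uses Lemma~\ref{our_lemma2}, which controls the truncated part by the range-based bound $B^k\sqrt{\log(nW_n^2)/n}$. Balancing this against the tail gives $(\log(nW_n^2)/n)^{1/(k+2)}$. The case $k=2$ is where the exponent $1/4$ comes from, not the quadratic term as you guessed.

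Your direct quotient decomposition of $\widehat\psi''-\psi''$ is correct. Your uniform bound $\sqrt{\log(nW_n^2)/n}$ for $\widehat\phi''-\phi''$ under $\E|X|^4\le C_1$ also holds, but only if the truncated part is handled with a variance-sensitive inequality:
\begin{itemize}
\item With $\tau_n\asymp(n/\log n)^{1/4}$, the truncated summands have range $\tau_n^2$, so Hoeffding gives only $O(1)$.
\item Bernstein with variance at most $C_1$, over a grid of $N\asymp W_n\sqrt n$ points, gives $\sqrt{C_1\log N/n}+\tau_n^2\log N/n\asymp\sqrt{\log n/n}$.
\item The tail is bounded, uniformly in $u$, by a quantity with mean at most $2\E X^2\I\{|X|>\tau_n\}\le 2C_1\tau_n^{-2}\asymp\sqrt{\log n/n}$.
\end{itemize}
With this step written out, your argument gives an $n^{-1/2}$-type stochastic rate, which implies the stated one.

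Your worry about a lower bound on $|\phi|$ that is uniform over the class is legitimate. It affects the paper's proof in exactly the same way, since the paper also uses only $\inf_{\R}|\phi|>0$ for each fixed $\mu$.
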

\begin{remark}
The assumption on the fourth moment of $\mu$ 
is restrictive, but it is also referenced in other studies that analyse convergence rates of estimates, which are based on the derivatives of empirical characteristic functions, see, e.g, \citep{NeumannReiss2009}. 
\end{remark}
\begin{remark}
A similar result for infinitely divisible distributions can be found in \citep{BR2015}, Proposition~6.5. After this proposition, the discussion shows that the rate of convergence can even be  parametric \(O_{\P}(n^{-1/2}),\) but it essentially depends on the asymptotic properties of the quantity 
\begin{align*}
M_h := \max_{k=0,1,2} \sup_{|u| \leq 1/h} \bigl| 
(1/\phi)^{(k)}(u)\bigr|.
\end{align*}
However, as the authors write, the behavior of \(M_h\) ``is unknown to the statistician''. In this regard, Theorem~\ref{thm3} has an advantage because it provides a uniform upper bound for the entire class of rational-infinitely divisible distributions from the class \(\mathcal{M}(C_1, C_2)\).  
\end{remark} 

Next we provide a similar result on the convergence rates for $\widehat{\bar\nu_d}.$

\begin{theorem}\label{thm4}
Let the assumptions of Theorems~\ref{thm1} and  \ref{thm3} be fulfilled. Then 
\[
\sup_{\mathcal{S}\,\cap\,\mathcal{M}(C_1,C_2)}\|\widehat{\bar\nu}_d-\bar\nu_d\|_{H^{-1}}
= O_\P\Bigl( W_n \mQ_n + C_2\, W_n^{-1/2}\Bigr).
\]
The choice $U_n = n$, $W_n = C_2^{2/3}\bigl(\tfrac{n}{\log n}\bigr)^{1/3}$ balances the terms and  yields the polynomial rate
\[
\sup_{\mathcal{S}\,\cap\,\mathcal{M}(C_1,C_2)}\|\widehat{\bar\nu}_d-\bar\nu_d\|_{H^{-1}}
= O_\P\Bigl( C_2^{2/3}\, n^{-1/6}(\log n)^{1/6}\Bigr).
\]
% The choice $U_n = n/\log n$ balances the terms and yields the polynomial rate
% \[
% \|\widehat{\bar\nu}_d-\bar\nu_d\|_{H^{-1}}
% =O_\P\Bigl((\log n/n)^{1/4}\Bigr).
% \]
% which is faster than the rate $O_\P(n^{-1/6}\sqrt{\log n})$ of Theorem~5 (the
% full quasi-L\'evy measure). The simpler choice $U_n=n$ gives
% $O_\P(n^{-1/4}\sqrt{\log n})$.
\end{theorem}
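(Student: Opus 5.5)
We split the error for $\bar\nu_d$ into a stochastic part and a bias part. Define the estimator analogously to \eqref{barnuu} as $\widehat{\bar\nu}_d=-\F^{-1}[\widehat\psi_d''\F[K_n]]$, where $\psi_d=\log\phi_d$ and $\psi_d''=-\F[\bar\nu_d]$. The norm \eqref{first_norm} is computed on the Fourier side, so
\[
\|\widehat{\bar\nu}_d-\bar\nu_d\|_{H^{-1}}\le \tfrac{1}{\sqrt{2\pi}}\bigl\|(1+u^2)^{-1/2}(\widehat\psi_d''-\psi_d'')\F[K_n]\bigr\| + \tfrac{1}{\sqrt{2\pi}}\bigl\|(1+u^2)^{-1/2}\F[\bar\nu_d](1-\F[K_n])\bigr\|.
\]

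\textbf{Bias term.} We treat it exactly as in Theorem~\ref{thm3}. Since $|\F[\bar\nu_d]|\le|\bar\nu_d|(\R)$ and $1-\F[K](u/W_n)$ is bounded and, by the moment assumption $\int|x|^{1/2}K<\infty$, of order $\min(1,|u/W_n|^{1/2})$, the integral $\int(1+u^2)^{-1}\min(1,|u|/W_n)\,du$ gives a contribution $O(W_n^{-1/2})$ times the total variation. We still have to bound $|\bar\nu_d|(\R)$ by a multiple of $C_2$. The natural route is to observe that $\nu_d$ is the atomic part of $\nu$ in \eqref{nu1}, so $|\bar\nu_d|\le|\bar\nu|$. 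This yields the $C_2W_n^{-1/2}$ term.

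\textbf{Stochastic term.} On $|u|\le W_n$ we write $\psi_d''=\phi_d''/\phi_d-(\phi_d'/\phi_d)^2$, and similarly for the estimate. Since $\mu\in\Q$, Theorem~2.2 of \cite{BK2023} gives $\mu_d\in\Q$, and by \cite{AK2023} we get $\inf_u|\phi_d(u)|\ge\kappa>0$. The random trigonometric polynomial $\widehat\phi_d$ satisfies
\[
\sup_u|\widehat\phi_d-\phi_d|\le\sum_{j}|\widehat p_j-p_j|+\sum_{j\in\mJ\cap\widehat\mJ}p_j|u||\widehat x_j-x_j|+\text{(missed atoms)}.
\]
By Theorem~\ref{thm1} this is $O_\P((1+|u|)\mQ_n)$, with $J$ fixed. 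The missed atoms and the spurious indices with $\widehat\p_j\ge\p_\circ$ are controlled by the same theorem, because $\mQ_n=o(1)$ and $\p_\circ=cU_n^{-1}$. The first and second derivatives behave likewise: $|\widehat\phi_d^{(k)}-\phi_d^{(k)}|\lesssim (1+|u|)^{k+1}\mQ_n$ up to bounded support constants.

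On the event where $\sup_{|u|\le W_n}|\widehat\phi_d-\phi_d|<\kappa/2$, a Taylor-type bound for $(f,g,h)\mapsto h/f-(g/f)^2$ gives
\[
|\widehat\psi_d''-\psi_d''|(u)\lesssim\kappa^{-3}(1+|u|)^{3}\mQ_n .
\]
Integrating against $(1+u^2)^{-1}$ over $|u|\le W_n$ would then give only a crude power of $W_n$. To reach the stated $W_n\mQ_n$ we refine this: we use that $\widehat\psi_d''-\psi_d''$ only involves the atoms difference, and that only the $x_j$-errors carry powers of $u$ through $e^{iu\widehat x_j}-e^{iux_j}$. Then we bound $\int_{-W_n}^{W_n}(1+u^2)^{-1}|\cdot|^2du$ by $W_n^2\mQ_n^2$.

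\textbf{Main obstacle.} The hardest step is this sharp control of the powers of $u$ and ensuring that the high-probability event $\{\sup_{|u|\le W_n}|\widehat\phi_d-\phi_d|<\kappa/2\}$ holds. That event needs $W_n\mQ_n\to0$, which is consistent with the balancing choice. Uniformity over $\mathcal S\cap\mathcal M$ also requires $\kappa$ to be uniform over the class, which we would have to add or derive. With $U_n=n$, $\mQ_n\asymp\sqrt{\log n/n}$, and equating $W_n\sqrt{\log n/n}$ with $C_2W_n^{-1/2}$ gives $W_n=C_2^{2/3}(n/\log n)^{1/3}$. This yields a rate $C_2^{2/3}(\log n/n)^{1/6}$.
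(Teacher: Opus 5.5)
Your route is essentially the paper's. You use the same split into the stochastic term $\F^{-1}[(\widehat\psi_d''-\psi_d'')\F[K_n]]$ and the bias $K_n*\bar\nu_d-\bar\nu_d$, and the same bound $|\bar\nu_d|(\R)\le|\bar\nu|(\R)\le C_2$ via the atomic part of \eqref{nu1}. The stochastic term is handled by the same mechanism: uniform $O_\P(W_n\mQ_n)$ control of $\widehat\phi_d^{(m)}-\phi_d^{(m)}$ on $[-W_n,W_n]$, transferred to $\psi_d''$ through $\inf|\phi_d|>0$, which follows from $\mu_d\in\Q$ and the discrete characterization.

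Your caveats about needing a uniform $\kappa$ over the class and needing $W_n\mQ_n\to0$ are legitimate. The paper's ``similarly to Theorem~\ref{thm3}'' silently relies on the same two facts.

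Your intermediate bound $|\widehat\phi_d^{(k)}-\phi_d^{(k)}|\lesssim(1+|u|)^{k+1}\mQ_n$ is needlessly pessimistic. Differentiating in $u$ only brings down the bounded factors $(ix_j)^k$ and $(i\widehat x_j)^k$. So the bound is $(1+|u|)\mQ_n$ for every $k=0,1,2$ from the start, which is what the paper uses. Squaring and integrating against $(1+u^2)^{-1}$ over $[-W_n,W_n]$ then gives $O(W_n\mQ_n^2)$, even better than needed.

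The step that fails as written is the bias term. You say you treat it exactly as in Theorem~\ref{thm3}, but the argument you give is a Fourier-side one. It uses only the pointwise bounds $|\F[\bar\nu_d]|\le|\bar\nu_d|(\R)$ and $|1-\F[K](v)|\lesssim\min(1,|v|^{1/2})$. These are correct, but
\[
\int_{\R}\frac{\min(1,|u|/W_n)}{1+u^2}\,du=\frac{\log(1+W_n^2)}{W_n}+2\Bigl(\frac{\pi}{2}-\arctan W_n\Bigr)\asymp\frac{\log W_n}{W_n}.
\]
This gives only $\|K_n*\bar\nu_d-\bar\nu_d\|_{H^{-1}}\lesssim C_2\sqrt{\log W_n/W_n}$, not $C_2W_n^{-1/2}$, so your final rate would carry an extra $\sqrt{\log n}$. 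The pointwise $|v|^{1/2}$ modulus of $\F[K]$ is too crude to give the stated rate.

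The fix is the paper's physical-space duality argument. For $\|f\|_{H^1}=1$, one has $|\langle K_n*\bar\nu_d-\bar\nu_d,f\rangle|\le|\bar\nu_d|(\R)\sup_x\bigl|\int(f(x+y)-f(x))K_n(y)\,dy\bigr|$. By Cauchy--Schwarz, $|f(x+y)-f(x)|\le|y|^{1/2}\|f'\|$. This yields $|\bar\nu_d|(\R)\,W_n^{-1/2}\int|z|^{1/2}K(z)\,dz$ with no logarithm.

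Equivalently, on the Fourier side you would need the integrated bound $\int|1-\F[K](v)|^2v^{-2}\,dv<\infty$. This does follow from $\int|x|^{1/2}K<\infty$, by applying Plancherel to $\I\{x\ge0\}-\int_{-\infty}^xK(y)\,dy$, but it does not follow from the pointwise estimate.
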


\begin{remark}
%The estimator of the continuous part $\bar\nu_{ac}$ is then obtained for free as
%$\widehat{\bar\nu}-\widehat{\bar\nu}_d$, recovering the component that does not
%correspond to the discrete law. Note also that, unlike Theorem \ref{thm3}, no finite
%fourth moment of $\mu$ is required here: $\widehat\phi_d$ is differentiated
%analytically, so only $|\bar\nu_d|(\R)<\infty$ enter.
The estimator for the continuous part, $\bar{\nu}_{ac}$ is then obtained as the difference $\widehat{\bar{\nu}} - \widehat{\bar{\nu}}_d$, recovering the component that does not correspond to the discrete law. %Note that, unlike in Theorem \ref{thm3}, a finite fourth moment of $\mu$ is not used for deriving the rates of convergence.
\end{remark}

\section{Numerical results}\label{numerical_results}
In the examples below we will consider the following mixture
\begin{equation}\label{example_model}
    \mu = \w \mu_{Pois} + (1 - \w)\mu_{Exp},
\end{equation}
where $0 < \w < 1$, $\mu_{Pois}\sim\text{Pois}(\lambda_1),~\mu_{Exp}\sim\text{Exp}(\lambda_2)$ with $\lambda_1,\lambda_2>0$. 
Given the observations $X_1, \dots, X_n$ of $\mu$, we aim to illustrate the algorithms presented in Sections~\ref{secest1} and \ref{secest2}. We proceed in three steps.
\begin{enumerate}
    \item Analyze the estimation quality of the discrete part, that is, \(\mu_{Pois}(k)\) for \(k=0,1,2,..\) and the parameter $\lambda_1$, and the mixture parameter $\w$.
    \item Analyze the nonparametric estimate for  $\mu_{Exp}$.
    \item Prove that under certain conditions on $\lambda_1$, $\lambda_2$, $\w$, distribution $\mu \in \Q$, and analyze the estimate for  $\F[x^2\nu]$, where \(\nu\) is a quasi-L{\'e}vy measure. The choice of this estimation target is motivated by Proposition~\ref{propmain}(ii), which leads to the closed-form expression for this object.
\end{enumerate}

\subsection{Estimation results for the discrete part}

The Poisson distribution satisfies assumption \hyperref[cond_A1]{(A1)} with $\delta = 1/2$. % Next, we fix the interval $[x_
%{\min}, x_{\max}]$ which we will then split into subintervals of length $2\delta$. 
%Although the discrete part in models of type (\ref{first_model}) may have unbounded support (as in our case), we choose an interval $[x_
%{\min}, x_{\max}]$ that covers most of the probability mass of the discrete distribution. It is convenient to fix  $x_{\min}=X_{(1)} - \delta, x_{\max}= X_{(n)} + \delta$, and 
Define the subintervals $I_j = [(2j - 3)\delta, (2j - 1)\delta]$, $j = 1, \dots, J$ for \(J=20\). Note that in our case the discrete part of the model (\ref{first_model})  has unbounded support. We choose an estimation interval  that covers most of the probability mass of the discrete distribution.

For this numerical study we fix $\w  = 1/5$, $\lambda_1 = 3$, $\lambda_2 = 3$.  Figure~\ref{fig1} shows the histogram of the sample drawn from \(\mu\) with barwidth equal to 1.  Note that the discrete part cannot be directly revealed from the histogram, as it is for mixture distributions with large \(\w\) and  the discrete part, supported on a small number of points (for example, the  contamination model~\eqref{second_model} with a Bernoulli distribution).

 Figure~\ref{fig2} provides boxplots for estimates~\eqref{est1} of  $p_j=\mu_{Pois}(j), j=1,2,3,4,$ for $N=20$ simulation runs and $n=1000, 2500, 5000$. The boxplots demonstrate the reduction of estimation error as the number of observations increases.  Similar quality of estimation holds for the estimate~\eqref{www} of $\w$ and the estimate~\eqref{est1} for $\lambda_1=-\ln p_0$, as it is shown in Figure~\ref{fig3}.

\begin{figure}[ht]
    \centering
    \includegraphics[scale=0.4]{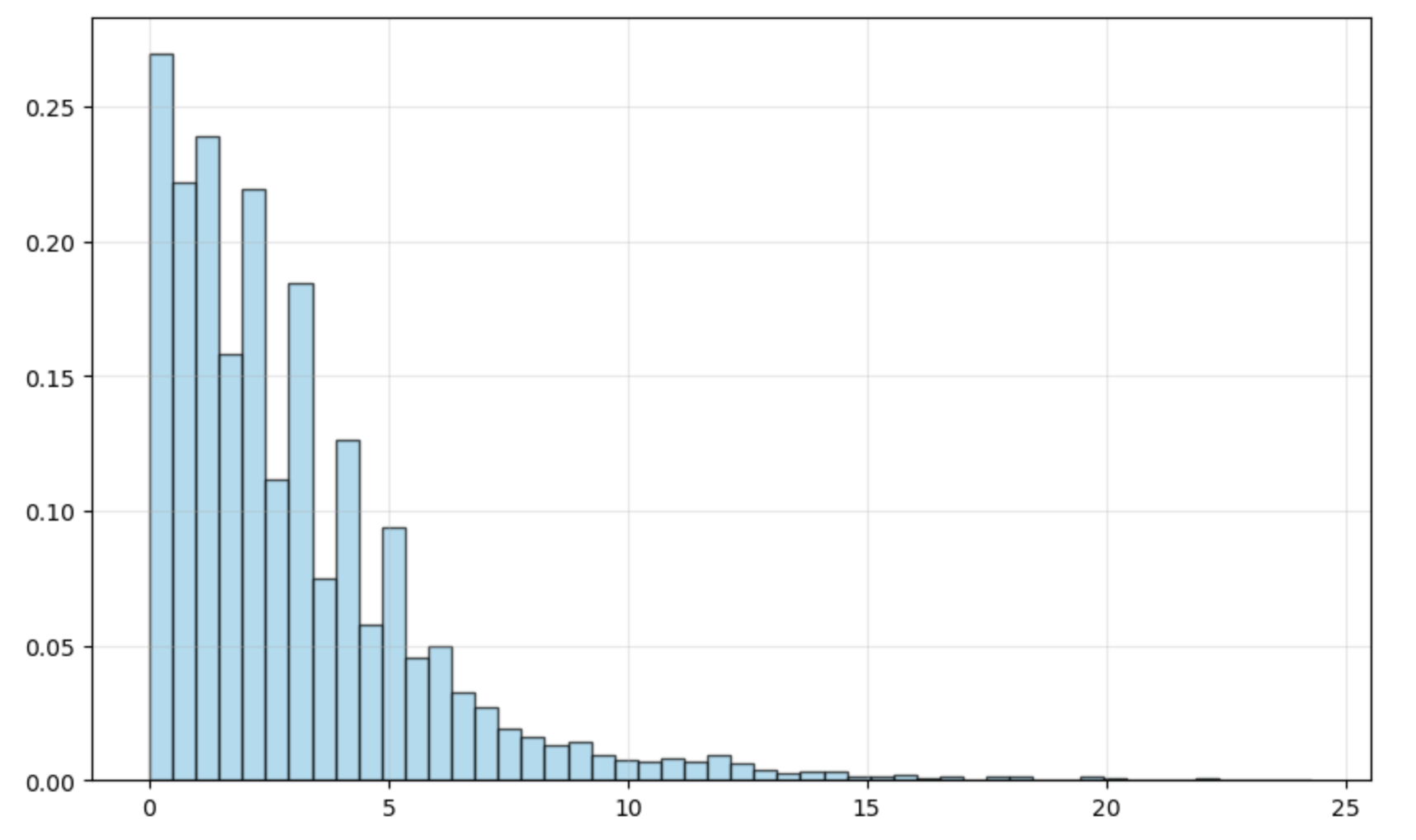}
    \caption{\label{fig1}  Histogram of the mixture distribution $\mu$.}
\end{figure}

% For the current example we fix $\pi = p = 0.3$ and atoms of $\mu_d$ at points $x_0 = 0$, $x_1 = 1$. Figure~\ref{fig1} shows the histogram of the described distribution.

%\begin{figure}[ht]
%    \centering
%    \includegraphics[scale=0.39]{images/p_j_boxplots_new.png}
%    \caption{\label{fig2}  Boxplots for the estimates of  $p_j, j=1,2,3,4$.}
%\end{figure}

%\begin{figure}[ht]
%    \centering
%        \includegraphics[scale=2]{images/ppp.png}
%%    \includegraphics[width=0.44\textwidth]{images/p_boxplots.png}
% %   \hfill
%  %  \includegraphics[width=0.54\textwidth]{images/lam1_boxplots.png}
%    \caption{\label{fig3}Boxplots for the estimates of $\w$ and $\lambda_1$.}
%\end{figure}

\subsection{Estimation results for the absolutely continuous part}

In this subsection we analyze the estimation of the absolutely continuous part of the mixture~\eqref{example_model}.   % as follows
% \begin{equation*}
%    \widehat{\phi}_{Exp}(u) :=  \frac{\widehat{\phi}(u) - \widehat{\w} \widehat{\phi}_{Pois}(u)}{1 - \widehat{\w}}.
% \end{equation*}
First plot in Figure~\ref{fig4} compares the real part of the estimate $\widehat{\phi}_{Exp}(u)$, defined by~\eqref{gac0}, with the  real part of the true characteristic function equal to $\Re \phi_{Exp}(u)  = \lambda_2^2 / (\lambda_2^2 + u^2)$. Second plot in Figure~\ref{fig4} compares the estimate~\eqref{gac} with the true density of the exponential term $g_{Exp}(x) = \lambda_2 e^{-\lambda_2 x}, x>0$. Both plots demonstrate the good performance of the proposed estimators. 
\begin{figure}[ht]
    \centering
    \includegraphics[scale=0.4]{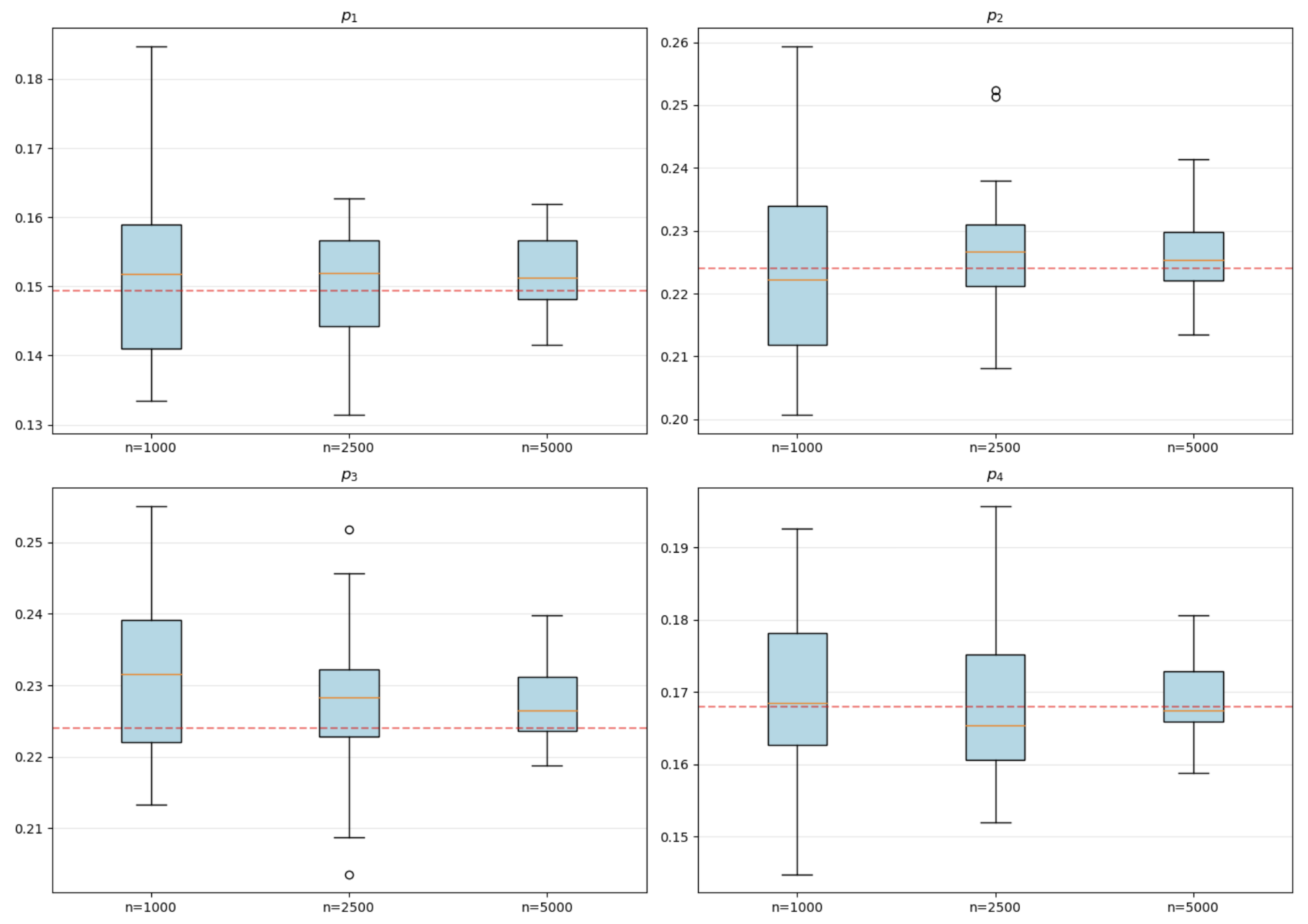}
    \caption{\label{fig2}  Boxplots for the estimates of  $p_j, j=1,2,3,4$.}
\end{figure}

\begin{figure}[ht]
    \centering
    \includegraphics[scale=0.4]{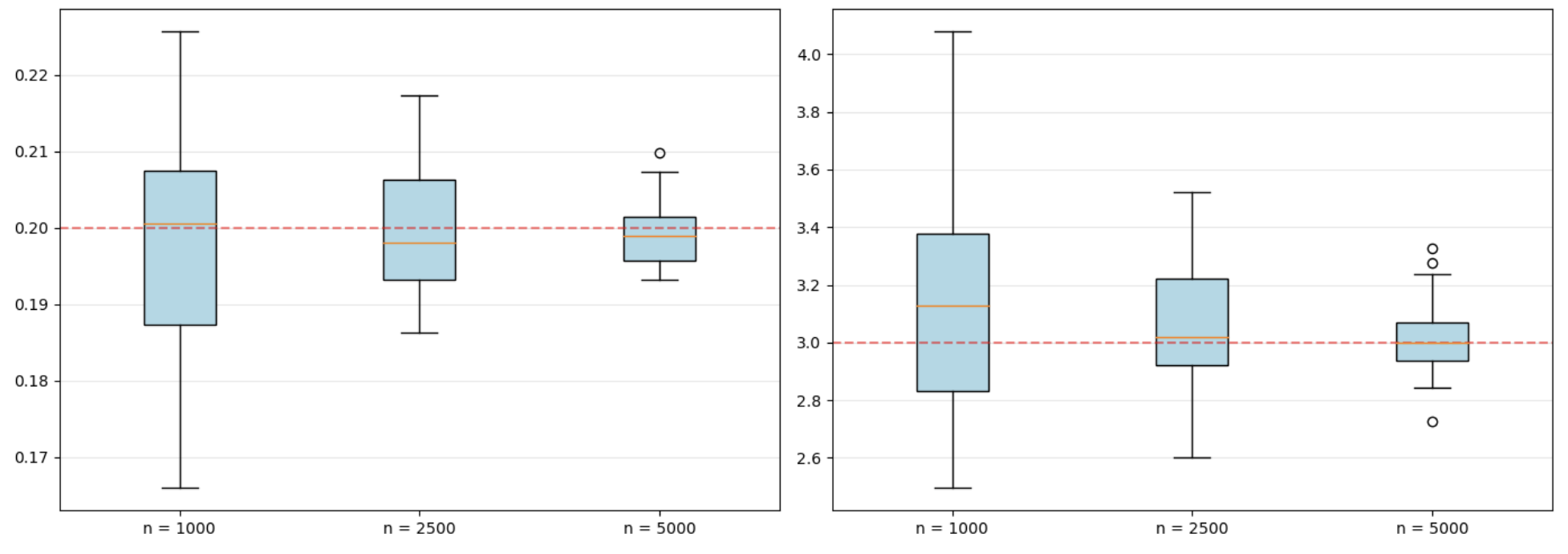}
    \caption{\label{fig3}  Boxplots for the estimates of $\w$ and $\lambda_1$.}
\end{figure}

\begin{figure}[h]
    
    \centering
    
    % Ряд 1: два изображения
    \includegraphics[width=0.49\textwidth]{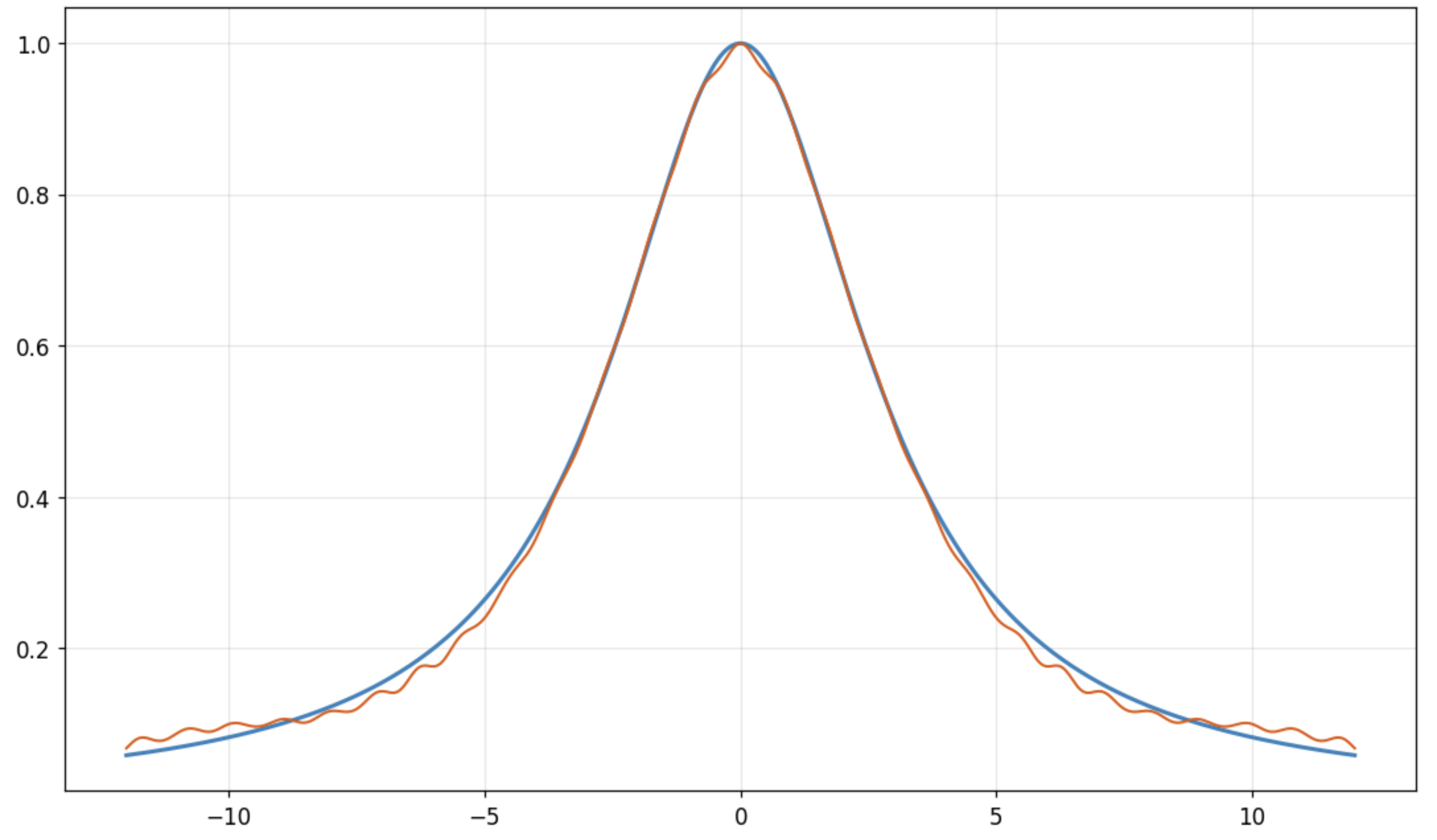}
    \hfill
    \includegraphics[width=0.49\textwidth]{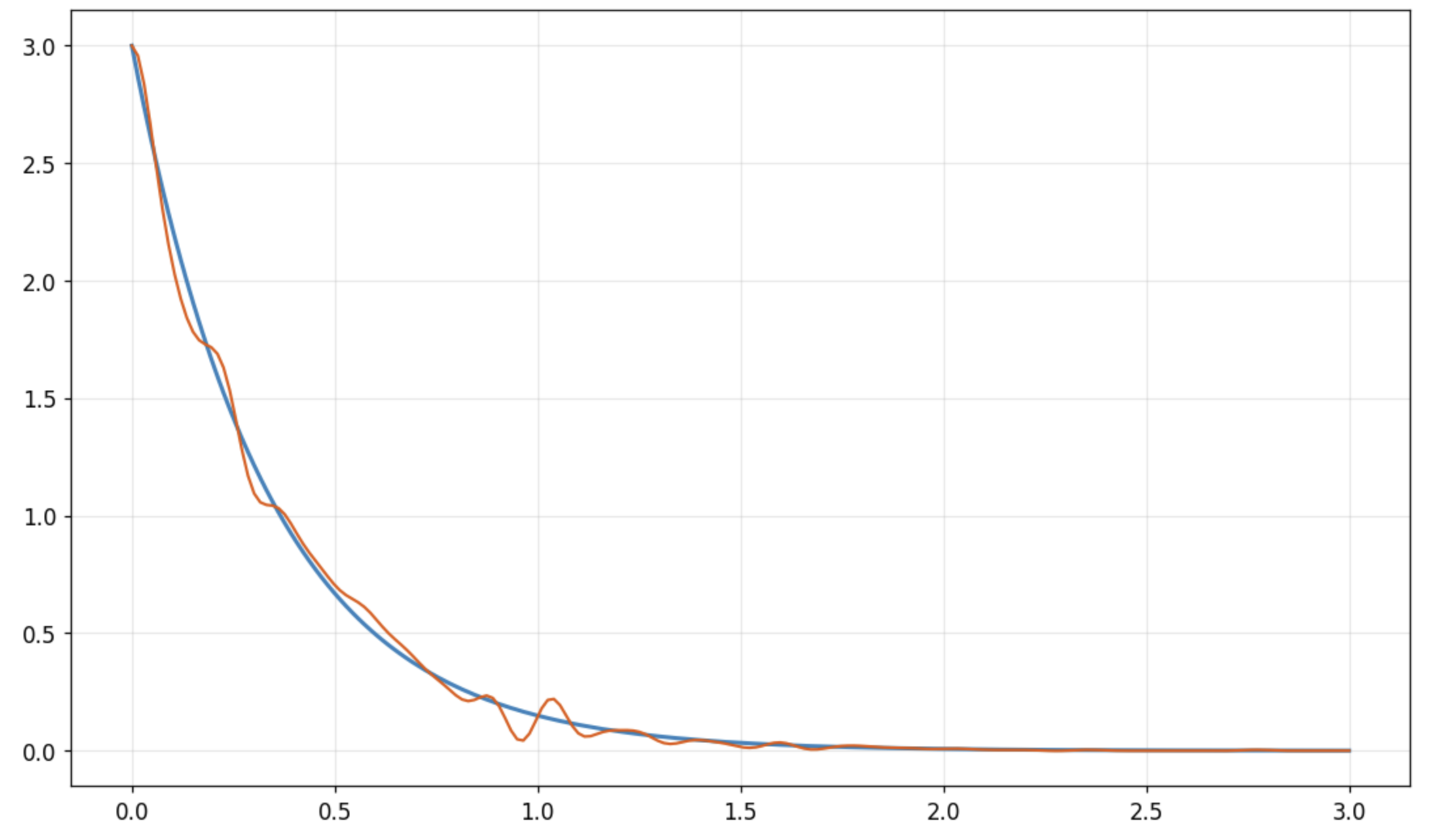}
    
     % \vspace{1em}
    
    % % Ряд 2: одно изображение по центру
    % \includegraphics[width=0.45\textwidth]{images/g_exp_5000_smooth.png}
    
    \caption{\label{fig4} Plots of the true functions $\Re\phi_{Exp}$, $g_{Exp}$ (blue lines) and their estimates $\Re \widehat{\phi}_{Exp}$, $\widehat{g}_{Exp}$ (orange lines) based on \(n=5000\) observations.}
\end{figure}

\subsection{Estimation results for the quasi-L{\'e}vy measure}

In this subsection we consider the case when $\mu \in \Q$ and use Proposition~\ref{propmain}(ii), which gives the exact form of its quasi-L{\'e}vy measure.  The next lemma plays an essential role. \begin{lemma}\label{example_lemma}
    Let $\mu$ be a distribution of the form (\ref{example_model}) with $\w > 1/2$ and assume that 
\begin{align}\label{ccc}
\lambda_1 < \frac{1}{2} \ln \frac{\w}{1 - \w}.
\end{align} 
Then $\mu \in \Q$ and its quasi-L{\'e}vy measure is equal to 
    \begin{equation*}
        \nu = \nu_{Pois} + \sum_{m = 1}^{\infty} \frac{(-1)^{m + 1}}{m}\Bigl( 
\frac{1-\w}{\w} 
\Bigr)^m\Lambda^{*m},
    \end{equation*}
where the signed measure $\Lambda$ is defined by
\begin{equation}\label{Lambda}
    \Lambda(dx) = \lambda_2e^{\lambda_1} \sum_{k=0}^{\infty} \frac{(-\lambda_1)^k}{k!}\exp(-\lambda_2(x - k)) \I_{[k, \infty)}(x)dx.
\end{equation}
\end{lemma}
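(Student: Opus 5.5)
The plan is to apply Proposition~\ref{propmain}(ii) with $\mu_m=\mu_{Pois}$ and $\mu_e=\mu_{Exp}$. The Poisson law is infinitely divisible, so $\mu_m\in\ID\subset\Q$, and its triplet is $(\gamma_{Pois},0,\nu_{Pois})$ with $\nu_{Pois}=\lambda_1\delta_1$. It remains to show two things. First, the ratio $\phi_{Exp}/\phi_{Pois}$ is the Fourier transform of the signed measure $\Lambda$ in \eqref{Lambda}. Second, $|\Lambda|(\R)<\w/(1-\w)$. Once both hold, Proposition~\ref{propmain}(ii) gives $\mu\in\Q$ with quasi-L\'evy measure $\nu_{Pois}+\Tilde{\nu}$, where $\Tilde{\nu}$ is given by \eqref{Tildenu}. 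This is exactly the claimed formula.

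For the first step, write $\phi_{Pois}(u)=\exp\{\lambda_1(e^{iu}-1)\}$. Then
\[
1/\phi_{Pois}(u)=e^{\lambda_1}\exp\{-\lambda_1 e^{iu}\}=e^{\lambda_1}\sum_{k\ge 0}\frac{(-\lambda_1)^k}{k!}e^{iuk}.
\]
This is the Fourier transform of the signed discrete measure $\rho=e^{\lambda_1}\sum_k \frac{(-\lambda_1)^k}{k!}\delta_k$, whose total variation is $e^{\lambda_1}\sum_k\lambda_1^k/k!=e^{2\lambda_1}<\infty$. Hence $\phi_{Exp}/\phi_{Pois}=\F[\mu_{Exp}*\rho]$. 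The convolution of $\mu_{Exp}$, with density $\lambda_2e^{-\lambda_2x}\I_{[0,\infty)}(x)$, against $\delta_k$ is the shifted density $\lambda_2e^{-\lambda_2(x-k)}\I_{[k,\infty)}(x)$. Summing over $k$ reproduces \eqref{Lambda}. Interchanging the sum with the Fourier integral is justified by absolute convergence, since the total variation of $\rho$ is finite.

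For the second step, bound the total variation by the triangle inequality:
\[
|\Lambda|(\R)\le \|\mu_{Exp}\|\,|\rho|(\R)=e^{2\lambda_1}.
\]
Condition \eqref{ccc} is equivalent to $e^{2\lambda_1}<\w/(1-\w)$, so the required bound follows. The hypothesis $\w>1/2$ is exactly the range $\w\in(1/2,1)$ demanded in model \eqref{second_model}. Note also that $\w/(1-\w)>1$ is consistent with \eqref{ccc} having a positive right-hand side.

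I do not expect a real obstacle; the argument is bookkeeping. The only points needing care are:
\begin{itemize}
\item the normalisation of the Fourier transform, i.e.\ the sign convention $e^{iux}$ used for characteristic functions, which must be checked so that the shifts come out as $+k$;
\item that Proposition~\ref{propmain}(ii) is stated for the ratio $\phi_{\mu_e}/\phi_{\mu_m}$, matching the direction computed above.
\end{itemize}
If one also wants to identify the drift, Proposition~\ref{propmain}(ii) gives it directly, but the lemma only concerns $\nu$.
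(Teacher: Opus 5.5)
Your proposal is correct and follows essentially the same route as the paper: apply Proposition~\ref{propmain}(ii) with $\mu_m=\mu_{Pois}$ and $\mu_e=\mu_{Exp}$. Then expand $\exp\{\lambda_1(1-e^{iu})\}$ in powers of $e^{iu}$, so that each term of $\phi_{Exp}/\phi_{Pois}$ is the Fourier transform of a shifted exponential density, and bound $|\Lambda|(\R)\le e^{2\lambda_1}<\w/(1-\w)$ using \eqref{ccc}. Writing this via the signed measure $\rho$ with $\F[\rho]=1/\phi_{Pois}$ and the bound $|\mu_{Exp}*\rho|(\R)\le|\rho|(\R)$ is only a cleaner packaging of the same computation. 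It also makes explicit two points the paper leaves implicit: that $\mu_{Pois}\in\ID\subset\Q$, and that the termwise interchange is justified by absolute summability.
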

\begin{proof}
Following  Proposition~\ref{propmain}, we consider the ratio between the Fourier transforms of  \(\phi_{Exp}\) and \(\phi_{Pois}\), 
\begin{align*}
        %\F\bigl[\Lambda \bigr](u) &=
        \frac{\phi_{Exp}(u)}{\phi_{Pois}(u)} = \frac{\lambda_2}{\lambda_2 - i u}\exp\{\lambda_1 (1 - e^{iu})\} =e^{\lambda_1} \sum_{k=0}^{\infty} \frac{(-\lambda_1)^k}{k!}\frac{ \lambda_2  e^{iuk}}{\lambda_2 - iu}.
    \end{align*}
    Note that \(\lambda_2  e^{iuk}/ (\lambda_2 - iu)= \phi_{k}(u) \phi_{Exp}(u),\)
    where $\phi_k(u)$ is a characteristic function of a constant $k$, and 
    \begin{equation*}
    \F^{-1}[ \frac{\lambda_2e^{iuk}}{\lambda_2 - iu}] = \lambda_2 \e^{-\lambda_2 (x - k)} \I_{[k, \infty)}(x).
    \end{equation*}
    This observation leads to the conclusion that  \(\phi_{Exp}(u)/\phi_{Pois}(u) = \F[\Lambda](u)\), where \(\Lambda\) is given by~\eqref{Lambda}. We have 
    \begin{align*}
    |\Lambda|(\R) &= \int_{\R} |e^{\lambda_1} \sum_{k=0}^{\infty} \frac{(-\lambda_1)^k}{k!}\lambda_2\e^{-\lambda_2(x - k)} \I_{[k, \infty)}(x)|dx 
%    &\leq \int_{\R} e^{\lambda_1} \lambda_2 \e^{-\lambda_2 x}\sum_{k=0}^{\infty} \frac{(\lambda_1)^k}{k!}\e^{\lambda_2 k} \mathbb{I}_{[k, \infty]}(x) dx \\
    \leq e^{\lambda_1} \lambda_2 \sum_{k=0}^{\infty} \frac{\lambda_1^k}{k!} \Bigl( \int_{k}^{\infty}e^{-\lambda_2 (x-k)} dx\Bigr) =  e^{\lambda_1}  \sum_{k=0}^{\infty} \frac{\lambda_1^k}{k!} = e^{2\lambda_1},
    \end{align*}
    leading to the condition~\eqref{ccc}.
\end{proof}
In what follows, we fix the parameters $\lambda_1 = 0.4, \lambda_2 = 0.5, \w = 0.75$, which satisfy the condition~\eqref{ccc}.  The Fourier transform of $\bar{\nu}$ is equal to 
\begin{equation*}
    \F[\bar{\nu}](u) =  \F[x^2\nu](u) = \F[x^2\nu_{Pois}](u) + \F[x^2\tilde{\nu}](u),
\end{equation*}
where \(
\F[x^2\nu_{Pois}] (u) = -\frac{d^2}{du^2} \ln(\phi_{Pois}(u)) = \lambda_1 \e^{\i u},\)
and
\begin{align*}
    \F[x^2\tilde{\nu}] (u)&= \F[x^2\sum_{m = 1}^{\infty} \frac{(-1)^{m + 1}}{m}\Bigl( 
\frac{1-\w}{\w} 
\Bigr)^m\Lambda^{*m}] (u)= -\frac{d^2}{du^2} \sum_{m = 1}^{\infty} \frac{(-1)^{m + 1}}{m}\Bigl( 
\frac{1-\w}{\w} 
\Bigr)^m\F[\Lambda^{*m}](u) \\
&= -\frac{d^2}{du^2} \sum_{m = 1}^{\infty} \frac{(-1)^{m + 1}}{m}\Bigl( 
\frac{1-\w}{\w} 
\Bigr)^m\Bigl(\frac{\phi_{Exp}(u)}{\phi_{Pois}(u)}\Bigr)^m = -\frac{d^2}{du^2} \ln\Bigl(1 + \frac{1 - \w}{\w}\frac{\phi_{Exp}(u)}{\phi_{Pois}(u)}\Bigr).
\end{align*}
This expression allows us to compare $\F[\bar{\nu}]$ with its estimator, equal to $-\widehat\psi''(u) \F[K](u)$, see~\eqref{barnuu}. For this example we choose $K(x)$ as
\begin{equation*}
    K(x) = \frac{1}{2\pi}\Bigl( \frac{\sin(x/2)}{x/2} \Bigr)^2,
\end{equation*}
with Fourier transform equal to  \(
    \F[K](u) =
        \bigl( 1 - |u| \bigr) \I\bigl\{  |u| \leq 1\bigr\}\). 
Figure~\ref{fig5} shows the graphs of the estimates of $\Re\F[\bar{\nu}]$ for different number of observations $n$. The quality of this estimate increases with the growth of $n$.

\begin{figure}[h]
    
    \centering
    
    % Ряд 1: два изображения
    \includegraphics[width=0.49\textwidth]{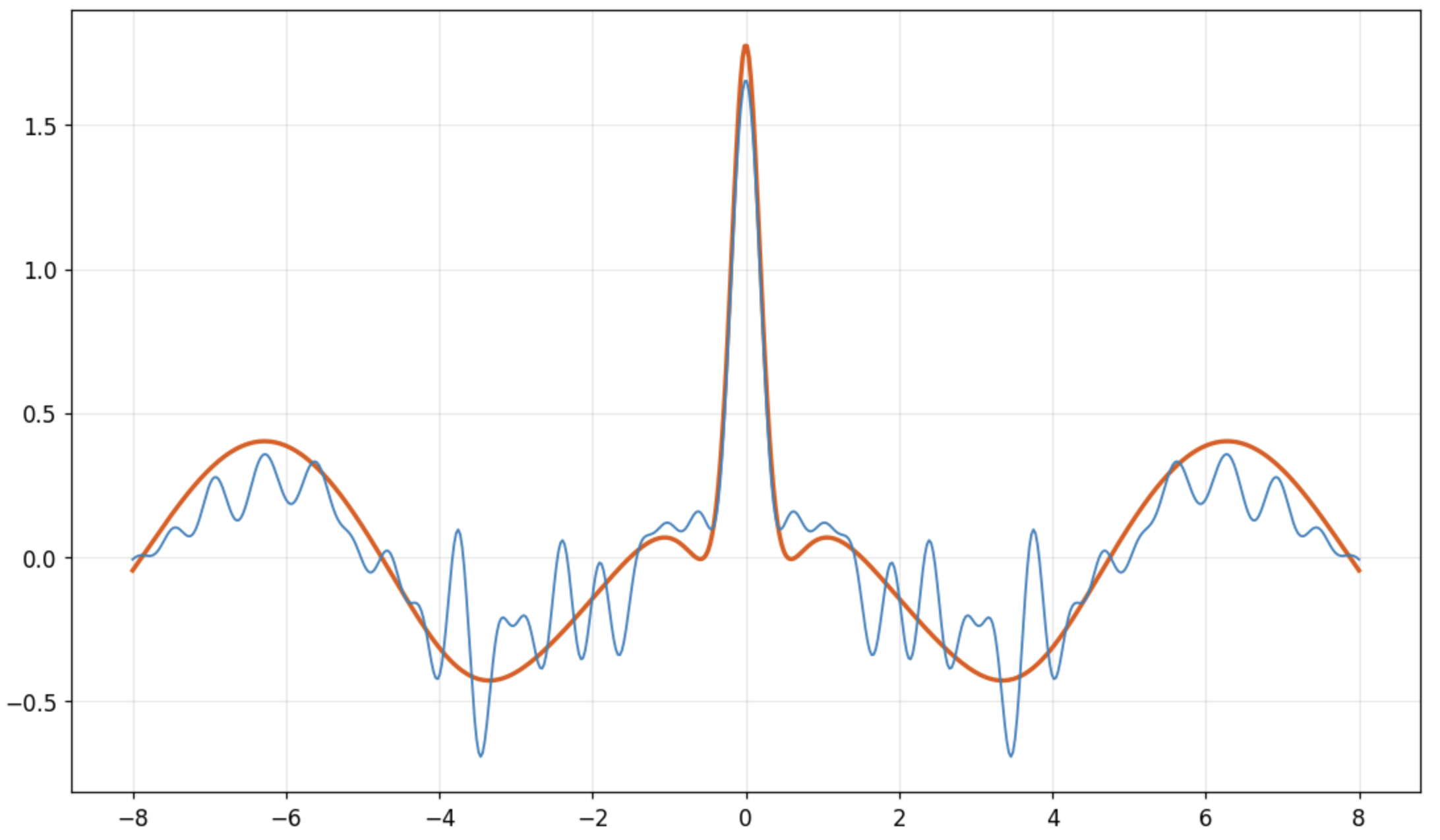}
    \hfill
    \includegraphics[width=0.49\textwidth]{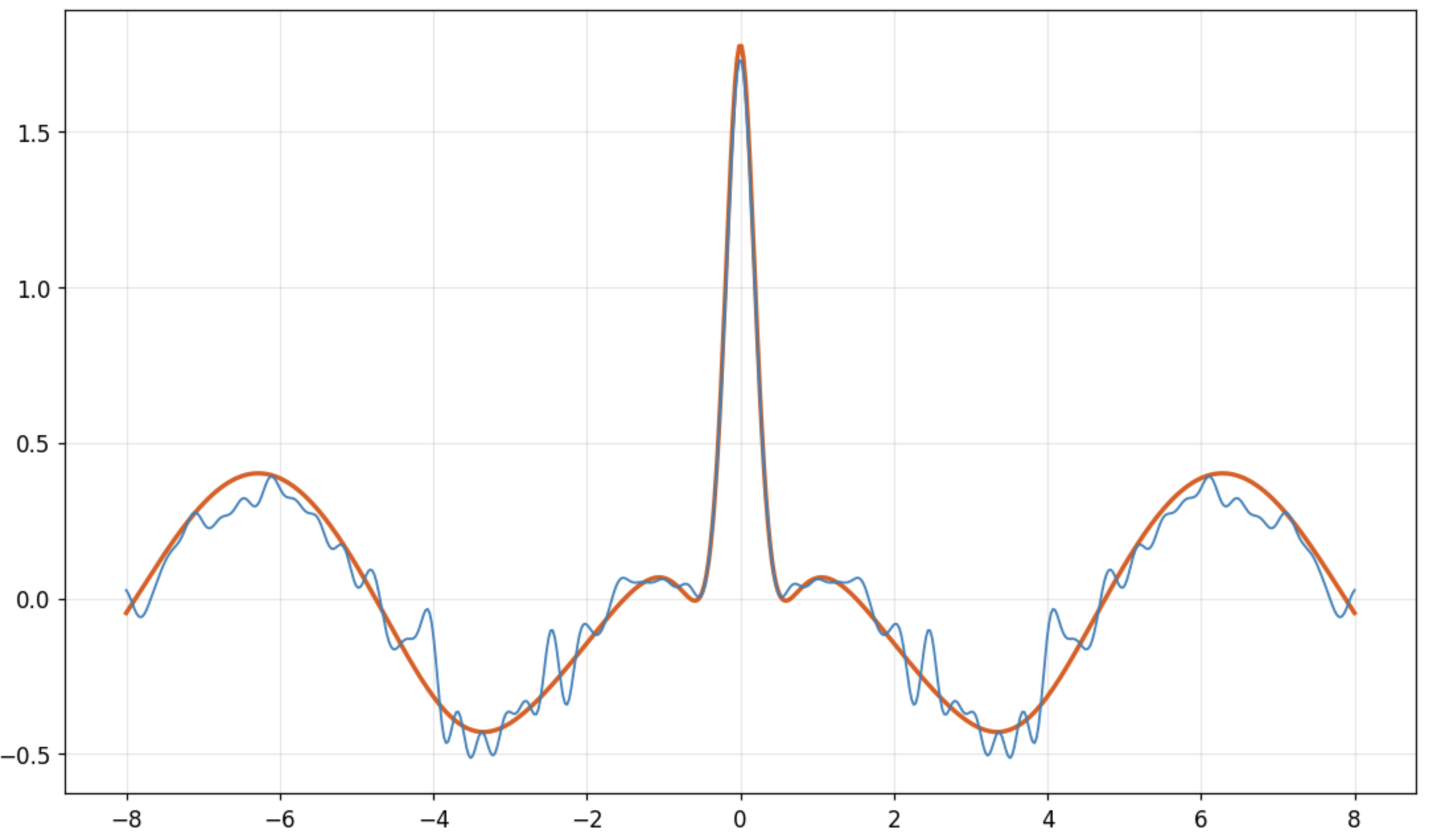}
    
    %  \vspace{1em}
    
    % % Ряд 2: одно изображение по центру
    % \includegraphics[width=0.45\textwidth]{images/nu_5000.png}
    
    \caption{\label{fig5} Plots of the true function $\Re\F[\bar{\nu}]$ (orange line) and its estimate $\Re(-\widehat{\psi}''(u) \F[K](u))$ (blue line) based on \(n = 5000\) and $n = 50000$ observations.}
\end{figure}

Finally, Figure~\ref{fig6} shows the graphs of the estimates of $\Re\F[\bar{\nu}_{Pois}]= \lambda_1\cos u$ for different number of observations $n$, whose quality also improves with the growth of $n$. \begin{figure}[h]
    
    \centering
    
    % Ряд 1: два изображения
    \includegraphics[width=0.49\textwidth]{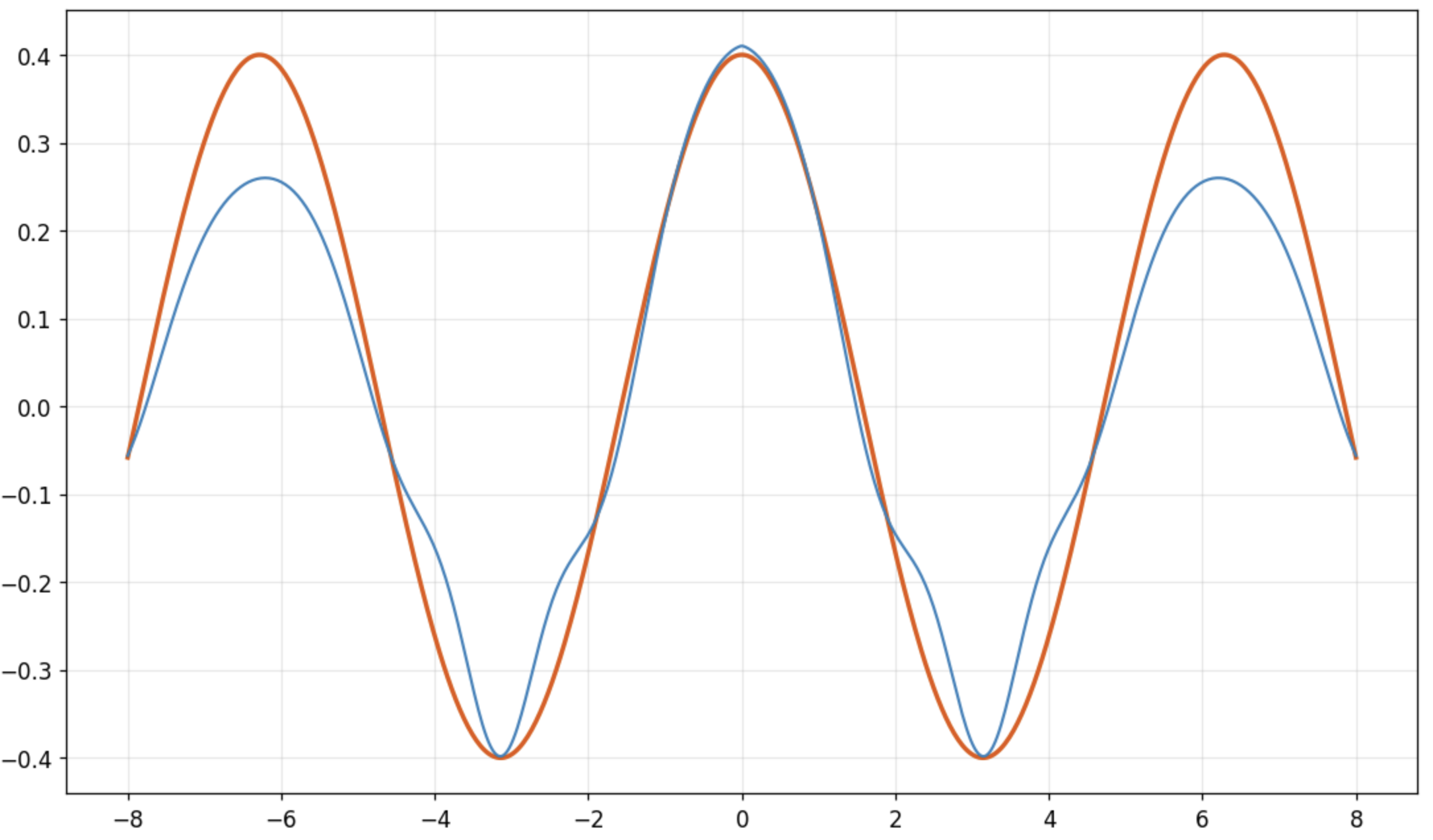}
    \hfill
    \includegraphics[width=0.49\textwidth]{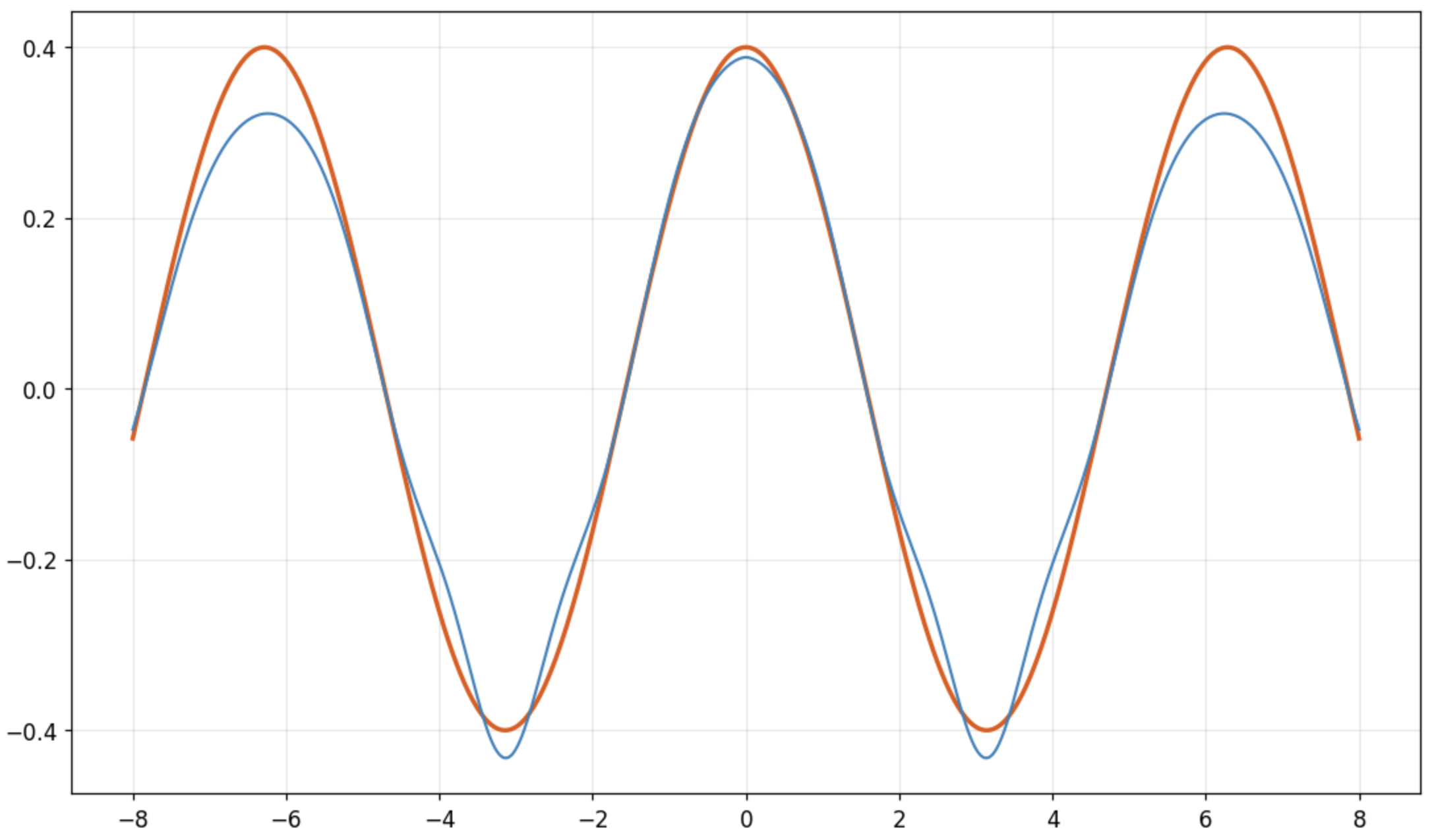}
    
    %  \vspace{1em}
    
    % % Ряд 2: одно изображение по центру
    % \includegraphics[width=0.45\textwidth]{images/nu_5000.png}
    
    \caption{\label{fig6} Plots of the true function $\Re\F[\bar{\nu}_{Pois}]$ (orange line) and its estimate $\Re(-\widehat{\psi}''_{Pois}(u) \F[K](u))$ (blue line) based on \(n = 5000\) and $n = 50000$ observations.}
\end{figure}

\section{Proofs}\label{proofs}

\subsection{Some results for empirical processes}
This section includes various results related to empirical characteristic processes, which will be helpful for proving the main theorems presented in the paper. For some interval~\(I\), denote a ``truncated'' version of the characteristic function 
\begin{align}\label{not1}
\phi_I(u) := \E\bigl[
e^{\i u X} \I\{X \in I\}
\bigr],
\end{align}
and its empirical counterpart, 
\begin{align}\label{not2}
\widehat\phi_I(u) := \frac{1}{n}\sum_{k=1}^n
e^{\i u X_k} \I\{X_k \in I\}.
\end{align}
The following lemma holds.
\begin{lemma}\label{our_lemma1}
For any unbounded sequence $U_n$ of positive numbers,\begin{align*}
      \max_{u \in [-U_n, U_n]}  |\widehat\phi_I(u) - \phi_I(u)| = O_{\P} \Bigl( 
        \sqrt{
\log(nU_n^2) / n
}
        \Bigr).
    \end{align*}
 \end{lemma}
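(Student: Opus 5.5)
We will prove Lemma~\ref{our_lemma1} by discretizing the interval $[-U_n,U_n]$, applying Hoeffding's inequality with a union bound on the grid points, and controlling the oscillation between grid points through a Lipschitz bound.

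\emph{Setup.} Write
\[
\Delta_n(u):=\widehat\phi_I(u)-\phi_I(u)=\frac1n\sum_{k=1}^n \xi_k(u),\qquad \xi_k(u)=e^{\i uX_k}\I\{X_k\in I\}-\phi_I(u).
\]
For each fixed $u$, the variables $\xi_k(u)$ are i.i.d., centered and bounded: their real and imaginary parts lie in $[-2,2]$. Hoeffding's inequality, applied separately to the real and imaginary parts, gives
\[
\P\bigl(|\Delta_n(u)|>t\bigr)\le 4e^{-c\,nt^2}
\]
with an absolute constant $c>0$.

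\emph{Grid step.} Take an equidistant grid $u_1,\dots,u_{M}$ on $[-U_n,U_n]$ with mesh $\eta_n$. The union bound yields
\[
\P\bigl(\max_m|\Delta_n(u_m)|>t\bigr)\le 4M e^{-cnt^2}.
\]
Now choose $t=A\sqrt{\log(nU_n^2)/n}$. This makes the right-hand side small once $M\lesssim nU_n^2$ (or any polynomial in $nU_n$) and $A$ is large.

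\emph{Between grid points: the main obstacle.} The difficulty is that $X$ may have unbounded support, so $u\mapsto e^{\i uX}$ is not uniformly Lipschitz. We bound
\[
|\Delta_n(u)-\Delta_n(u')|\le |u-u'|\Bigl(\frac1n\sum_k|X_k|\I\{X_k\in I\}+\E|X|\I\{X\in I\}\Bigr).
\]
We then handle this in one of two ways:
\begin{itemize}
\item If $I$ is bounded, as in all uses in the paper where $I=-I_j$ is a compact interval, the bracket is at most $2\sup_{x\in I}|x|$. Taking the mesh $\eta_n=n^{-1/2}$ then gives $M\asymp U_n\sqrt n\le nU_n^2$ for large $n$, and the discretization error is $O(n^{-1/2})$, which is negligible.
\item If $I$ is unbounded, we assume a finite first moment (available in the later sections via $\E|X|^4\le C_1$). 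The empirical mean $\frac1n\sum|X_k|$ is then $O_\P(1)$ by Markov's inequality, and the same mesh works.
\end{itemize}
As an alternative that avoids moments entirely, one can truncate at $|X|\le n^{a}$ and treat the remaining tail separately, but we would first try the bounded-interval argument, since it suffices for the applications.

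\emph{Conclusion.} Combining the grid bound with the oscillation bound gives
\[
\max_{|u|\le U_n}|\Delta_n(u)|\le \max_m|\Delta_n(u_m)|+\eta_n\,O_\P(1)=O_\P\Bigl(\sqrt{\log(nU_n^2)/n}\Bigr),
\]
which is the claim of Lemma~\ref{our_lemma1}.
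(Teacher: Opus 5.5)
Your proof is correct and follows essentially the same route as the paper: Hoeffding's inequality for the real and imaginary parts, a union bound over an equidistant grid on $[-U_n,U_n]$, and the Lipschitz bound $|e^{\i uX}-e^{\i u'X}|\le |X|\,|u-u'|$ to control the oscillation between grid points. The only difference is inessential: the paper balances the grid size $J$ against a Markov bound on $\sum_k |X_k|\I\{X_k\in I\}$ in a single tail estimate, whereas you fix the mesh at $n^{-1/2}$ and treat the oscillation as a separate $O_{\P}(n^{-1/2})$ term.

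Your split into a bounded-$I$ case (no moment assumption needed) and an unbounded case (needing $\E|X|<\infty$) is in fact more careful than the paper, which silently uses $\E|X_1|<\infty$. However, the truncation alternative you mention is not moment-free. The tail term $\frac1n\sum_k\I\{|X_k|>n^{a}\}$ is $O_{\P}\bigl(\sqrt{\log(nU_n^2)/n}\bigr)$ only if $\P(|X|>n^{a})=O(n^{-1/2})$, and for fixed $a$ that is itself a polynomial tail (weak moment) condition.
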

\begin{proof} 
The proof of this fact is similar to the proof of Proposition 3.3 from \cite{BR2015}. Introduce random variables \(\eta_1,..,\eta_n\) that are i.i.d., centered and equal to 
\begin{align*}
\eta_k(u) &= e^{\i u X_k} \I\{X_k \in I\}
-\E\bigl[
e^{\i u X_k} \I\{X_k \in I\}
\bigr].
\end{align*}
Trivially, \(\widehat\phi_I(u) - \phi_I(u) = \bigl( \sum_{k=1}^n \eta_k(u)\bigr)/n=: S_n(u)/n.\)
 For \(R>0,\) denote the event 
\begin{equation*}
    \mathcal{A}_n = \mathcal{A}_n (R) := \Bigl\{\max_{u \in [-U_n,U_n]} \bigl| S_n(u) \bigr| \geq  R\sqrt{n \log(nU_n^2)}
     \Bigr\}, \qquad n=1,2,....
\end{equation*}
For the proof of this lemma, it is sufficient to show that \( \P\{\A_n\} \to 0\) as \(n \to \infty\) for some \(R>0.\)
%    For i.i.d. random variables $\eta_k$, $k = 1, \dots n$ with $\eta_k \in L^1$
% and any constants $R > 8$, $K \geq 1$ the following inequality holds for any $j \in J$
% \begin{equation*}
%     \P\bigl( \mathcal{A}_n (R) \bigr) \geq 1 -  C(\sqrt{n}K)^{(64-R^2)/128}
% \end{equation*}
% for some constant $C$ depending on $\E|X_1|$ only.
Note that
    \begin{align*}
        \P(\mathcal{A}_n) 
        &\leq   \P(\max_{u \in [-U_n, U_n]}  |\Re S_n(u)| \geq R\sqrt{n \log(nU_n^2)}/2)  + \P(\max_{u \in [-U_n, U_n]}|\Im S_n(u)| \geq R\sqrt{n \log(nU_n^2)}/2).
    \end{align*}
Below, we will consider in detail the real part. The proof for the imaginary part is similar.   Note that 
    \begin{align*}
        \Re S_n(u) &= \sum_{k=1}^n \Re \eta_k(u) = \sum_{k=1}^n \cos(uX_k)\I\{X_k \in I\} - \E [ \cos(uX_k)\I\{X_k \in I\}].
    \end{align*}
    Since $\Re \eta_k(u)$ $k = 1,\dots n$ are i.i.d. centered random variables, bounded by $2$, we can use Hoeffding\'s inequality (see Proposition 3.2 from \cite{BR2015}), which gives for any \(t>0\)
    \begin{equation*}
        \P(| \Re S_n(u)| \geq t/2) \leq 2\exp(-\frac{t^2}{32n}).
    \end{equation*}
    Next for some $J=J(n)$ we consider points $u_j = jU_n/J$ from an equidistant grid on $[-U_n, U_n]$, which leads to
    \begin{align*}
         \P(\max_{u_k}  |\Re S_n(u_k)| \geq t/2) &\leq \sum_{k = 1}^{2J}\P(| \Re S_n(u_k)| \geq t/2) \leq 4J \exp(-\frac{t^2}{32n}).
    \end{align*}
    Then, using  Lagrange's mean value theorem for arbitrary $u, v$, with $u \leq v$ we get
    \begin{equation*}
        |\cos(uX_k) - \cos(vX_k)| \leq |X_k||\sin(\theta X_k)||u - v| \leq |X_k||u - v|
    \end{equation*}
    for some $\theta \in [u, v]$.
    Hence, the following holds for any \(u, v \in \R,\)
    \begin{equation*}
        |\Re S_n(u) - \Re S_n(v) | \leq \sum_{k=1}^n \bigl(|X_k|\I\{X_k \in I\} + \E [|X_k|\I\{X_k \in I\}]\bigr)|u - v|.
    \end{equation*}
    Thus, we have
    \begin{align*}
         \P(\max_{u \in [-U_n, U_n]}  &|\Re S_n(u)| \geq t) \leq \P\Bigl(\max_{u_k} |\Re S_n(u_k)| +\sum_{k=1}^n \bigl(|X_k|\I\{X_k \in I\} + \E [|X_k|\I\{X_k \in I\}]\bigr) U_nJ^{-1}  \geq t\Bigr)\\
         &\leq \P\Bigl(\max_{u_k} |\Re S_n(u_k)| \geq t/2\Bigr) + \P\Bigl(\sum_{k=1}^n \bigl(|X_k|\I\{X_k \in I\} + \E [|X_k|\I\{X_k \in I\}]\bigr) U_nJ^{-1}  \geq t/2\Bigr).
    \end{align*}
By Markov's inequality we have
    \begin{equation*}
        \P(\max_{u \in [-U_n, U_n]}  |\Re S_n(u)| \geq t) \leq 4J\exp(-\frac{t^2}{32n}) + 4(nU_n/t)J^{-1}\E[|X_k|\I\{X_k \in I\}],
    \end{equation*}
    where \(\E[|X_k|\I\{X_k \in I\}] \leq \E|X_1| < \infty.\) Thus, the choice $J = \sqrt{nU_n/t}\exp{(t^2/(64n))}$ yields the order
    \begin{equation*}
        \P(\max_{u \in [-U_n, U_n]}  |\Re S_n(u)| \geq t) \leq 8 \sqrt{nU_n/t}\exp(-t^2/(64n)).
    \end{equation*}
To conclude the proof, we substitute \(t=R\sqrt{n\log(nU_n^2)}/2\), and get  
for $R > 8$ 
    \begin{equation*}
        \P\Bigl(\max_{u \in [-U_n, U_n]}  |\Re S_n(u)| \geq \frac{R}{2}\sqrt{n\log(nU_n^2)}\Bigr) \lesssim (\sqrt{n}U_n)^{(64-R^2)/128} \to 0, \qquad n \to \infty.
    \end{equation*}
\end{proof}

The next lemma yields the convergence rates for the derivatives of the empirical characteristic process.

\begin{lemma}\label{our_lemma2}
    Assume that the distribution $\mu$ has finite $4$-th moment with $\E|X_1|^4 \leq C_1$. Then
    \begin{equation*}
        \max_{u\in [-U_n, U_n]} |\widehat{\phi}^{(k)}(u) - \phi^{(k)}(u)| = O_{\mathbb{P}} \Bigl(C_1^{k/4}\Bigl(\frac{\log(nU^2_n)}{n}\Bigr)^{1/(k + 2)}\Bigr), \quad k=0,1,2.
    \end{equation*}
\end{lemma}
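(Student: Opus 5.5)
The case $k=0$ is Lemma~\ref{our_lemma1} with $I=\R$, so only $k=1,2$ need an argument. I would run the same chaining scheme, but truncate the unbounded summands first:
\[
\widehat\phi^{(k)}(u)-\phi^{(k)}(u)=\frac1n\sum_{j=1}^n\Bigl((\i X_j)^k e^{\i uX_j}-\E\bigl[(\i X_j)^k e^{\i uX_j}\bigr]\Bigr).
\]
Fix a level $M=M_n>0$ and split $X^k=X^k\I\{|X|\le M\}+X^k\I\{|X|>M\}$. This gives a bounded part $S_n^{(k,\le)}(u)$ and a tail part $S_n^{(k,>)}(u)$.

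\emph{Bounded part.} The summands are centered and bounded by $2M^k$, so Hoeffding's inequality gives, for fixed $u$,
\[
\P\bigl(|\Re S^{(k,\le)}_n(u)|\ge t/2\bigr)\le 2\exp\bigl(-t^2/(32 n M^{2k})\bigr),
\]
and similarly for the imaginary part. On an equidistant grid of $2J$ points in $[-U_n,U_n]$ a union bound applies. Between grid points, the Lipschitz constant in $u$ of each summand is at most $|X_j|^{k+1}\I\{|X_j|\le M\}+\E|X|^{k+1}$. Its expectation is at most $2\E|X|^{k+1}\le 2C_1^{(k+1)/4}$, which is finite since $k+1\le 3<4$. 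So Markov's inequality controls the discretization error exactly as in the proof of Lemma~\ref{our_lemma1}. Choosing $J$ as there yields
\[
\max_{|u|\le U_n}|S^{(k,\le)}_n(u)|/n=O_\P\Bigl(M^k\sqrt{\log(nU_n^2)/n}\Bigr),
\]
up to the polynomially small failure probability $(\sqrt n U_n)^{-c}$.

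\emph{Tail part.} Here I use the crude bound, uniform in $u$,
\[
\sup_u |S_n^{(k,>)}(u)|/n\le \frac1n\sum_j |X_j|^k\I\{|X_j|>M\}+\E\bigl[|X|^k\I\{|X|>M\}\bigr].
\]
Both terms have expectation at most $2\E|X|^4/M^{4-k}\le 2C_1M^{k-4}$, so by Markov the tail part is $O_\P(C_1M^{k-4})$.

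\emph{Balancing.} Write $\epsilon_n=\log(nU_n^2)/n$. The two contributions are $M^k\epsilon_n^{1/2}$ and $C_1M^{k-4}$. Equating them gives $M^4=C_1\epsilon_n^{-1/2}$, i.e.\ $M=C_1^{1/4}\epsilon_n^{-1/8}$, and the resulting rate is $C_1^{k/4}\epsilon_n^{1/2-k/8}$. This is not the claimed exponent $1/(k+2)$. For $k=1$ it gives $\epsilon_n^{3/8}$ versus the stated $\epsilon_n^{1/3}$, and $3/8>1/3$, so the bound from this balancing is better than claimed. For $k=2$ it gives $\epsilon_n^{1/4}$, which equals the stated $\epsilon_n^{1/(k+2)}=\epsilon_n^{1/4}$. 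So for both $k=1,2$ my rate is at least as good as the statement.

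To land on the stated form directly, I would instead choose $M=\epsilon_n^{-1/(2(k+2))}$. The bounded part is then $\epsilon_n^{1/2-k/(2(k+2))}=\epsilon_n^{1/(k+2)}$. The tail part is $C_1\epsilon_n^{(4-k)/(2(k+2))}$, which is $\lesssim\epsilon_n^{1/(k+2)}$ because $(4-k)/2\ge 1$ for $k\le2$. To get the prefactor $C_1^{k/4}$, I would insert the scaling $M\mapsto C_1^{1/4}M$, which is the natural one since $X/C_1^{1/4}$ has fourth moment at most $1$.

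\emph{Main obstacle.} The delicate step is the discretization for the bounded part. The grid size $J$ depends on $M$, so I must check that it stays polynomial in $nU_n$, so that $\log J\asymp\log(nU_n^2)$ and the Hoeffding exponent still beats the union bound. If it does not, I would bound the Lipschitz constant deterministically by $M^{k+1}$ instead, which keeps $J$ polynomial whenever $U_n$ is.
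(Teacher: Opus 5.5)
Your proposal is correct and follows essentially the paper's route: truncate at a level $M$, control the bounded part by Hoeffding plus the grid/Markov discretization of Lemma~\ref{our_lemma1}, bound the tail in $L^1$, then balance. The discretization step you flag as the main obstacle is harmless: with $t\asymp M^k\sqrt{n\log(nU_n^2)}$ the Hoeffding exponent no longer depends on $M$, so the analogous choice of $J$ stays polynomial in $nU_n$. The only substantive difference is the tail bound. You use $\E[|X|^k\I\{|X|>M\}]\le C_1M^{k-4}$, whereas the paper uses Cauchy--Schwarz to get $C_1^{(k+2)/4}M^{-2}$, so your balancing gives the sharper exponent $3/8$ instead of $1/3$ for $k=1$ and the same $1/4$ for $k=2$. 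Your two-sided truncation $\I\{|X_j|\le M\}$ also makes precise the paper's one-sided $\I\{X_j\le B_{n,k}\}$, which as written would not give bounded summands.
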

\begin{proof}
    Consider
\begin{equation*}
    \widehat{\phi}^{(k)}(u) - \phi^{(k)}(u) = \frac{1}{n}\sum_{j=1}^{n} \xi^{(k)}_j, \quad  \xi^{(k)}_j = (iX_j)^ke^{iuX_j} - \E (iX_j)^ke^{iuX_j}, \quad  k=0,1,2.
\end{equation*}
Now for each $k=0,1,2$ and  sequences $B_{n,k} \in \R$, which will be fixed later, we split $X_j$ into two parts:
\begin{equation*}
    X_j = X_j \I\{X_j \leq B_{n,k}\} + X_j \I\{X_j > B_{n,k}\} = X_{1, j} +X_{2, j}.
\end{equation*}
and then split $\xi^{(k)}_j$ accordingly into  $\xi^{(k)}_{1, j}$ and $\xi^{(k)}_{2, j}$.  Analogously to Lemma \ref{our_lemma1}, we get 
\begin{equation*}
    \max_{u \in [-U_n, U_n]} |\frac{1}{n}\sum_{j=1}^{n}\xi^{(k)}_{1, j}| =O_{\mathbb{P}} \Bigl(B_{n,k}^k \sqrt{\frac{\log(nU_n^2)}{n}}\Bigr).
\end{equation*}
Now we proceed to $\xi^{(k)}_{2, j}$. 
Using the Cauchy-Schwarz  and Markov inequalities, together with $\E|X_1|^{2k}\leq (\E|X_1|^4)^{k/2}\leq C_1^{k/2}$ for $k=0,1,2,$ we have
\begin{align*}
    \E |\frac{1}{n}\sum_{j=1}^{n}\xi^{(k)}_{2, j}| 
    &\leq
    2 \E \Bigl[ |X_1|^k\I\{|X_1| > B_{n,k}\} \Bigr] 
    \leq 2 \sqrt{\E |X_1|^{2k}} \sqrt{\mathbb{P}\{|X_1| > B_{n,k}\}}  \leq 2\, C_1^{(k+2)/4} B_{n,k}^{-2}.
\end{align*}
Then the choice $B_{n,k} = C_1^{1/4}\bigl(n / \log(nU_n^2) \bigr)^{1/(2k + 4)}$ balances the bounds for \(X_{1, j}\) and \(X_{2, j}\) and leads to the desired result. 
\end{proof}

\subsection{Proof of Theorem \ref{thm1}}
\begin{enumerate}
\item Since the theoretical value of \(\p_j, j \in \mJ,\) can be represented as
 \begin{align*}
\p_j = \w \bigl|
      (K_{c_j,\delta} * \phi_d)(u)       
\bigr|
=
\bigl|
      (K_{c_j,\delta} * \phi)(u)       
      -
      (1-\w)       (K_{c_j,\delta} * \phi_{ac})(u)
\bigr|
\end{align*}
for any \(u \in \R,\)  we have
\begin{align*}
\bigl| \widehat{\p}_j - \p_j \bigr| &=
\Bigl|
\int_{\R} w^{U_n} (u) 
\Bigl(
 \bigl| 
      (K_{c_j,\delta} * \widehat{\phi})(u)       \bigr|
-     \bigl|  (K_{c_j,\delta} * \phi)(u)       
      -
      (1-\w)       (K_{c_j,\delta} * \phi_{ac})(u)\bigr|
   \Bigr) du\Bigr|
      \\
      &\leq 
\int_{\R} w^{U_n} (u)  \Bigl|
 \bigl| 
      \bigl(K_{c_j,\delta} * \widehat{\phi}\bigr)(u)       \bigr| 
 -     \bigl|  (K_{c_j,\delta} * \phi)(u)       
      -
      (1-\w)       (K_{c_j,\delta} * \phi_{ac})(u)\bigr|
   \Bigr| du\\
     &\leq 
\int_{\R} w^{U_n} (u) 
\Bigl(
 \bigl| 
      \bigl(K_{c_j,\delta} * \bigl(\widehat{\phi}-\phi\bigr)\bigr)(u)       \bigr| 
+ (1-\w)  \bigl|     (K_{c_j,\delta} * \phi_{ac})(u)\bigr| \Bigr) du =: I_1+I_2.
\end{align*}
Let us consider separately the terms \(I_1\) and \(I_2\). Due to~\eqref{convolution}, we have 
\begin{align*}
I_1 &\leq 
\int_{\R} w^{U_n} (u) 
\bigl| 
      \bigl(K_{c_j,\delta} * \bigl(\widehat{\phi}-\phi\bigr)\bigr)(u)       \bigr| du = \int_{\R} w^{U_n} (u) |
      \widehat{\phi}_{-I_j} (u) - {\phi}_{-I_j}(u)
      |  
du,
\end{align*}
see the notations~\eqref{not1}-\eqref{not2}. Lemma \ref{our_lemma1} yields
\begin{align}\label{res1}
\max_{u \in [-U_n, U_n]} 
    \bigl| \bigl(K_{c_j,\delta} * \bigl(\widehat{\phi}-\phi\bigr)\bigr)(u)       \bigr|
     =
     O_{\P} \Bigl( 
        \sqrt{
\log(nU_n^2) / n
}
        \Bigr),
\end{align}
leading to \(I_1 =  O_{\P} \Bigl( 
        \sqrt{
\log(nU_n^2) / n
}
        \Bigr)\).
Now we turn towards the second term \(I_2.\) We have 
%\begin{eqnarray*}
%I_2 = (1-\w) 
%\int_\eps^1  \w(u) \bigl|     (K_{c_j,\delta} * \phi_{ac})(uU_n)\bigr| du
%\end{eqnarray*}
\begin{align}\nonumber
(K_{c_j,\delta} * \phi_{ac})(u) &= \F^{-1} \Bigl[ \F\bigl[
K_{c_j,\delta} * \phi_{ac}(\cdot)
\bigr] \Bigr] (u )
= \F^{-1} \Bigl[ \F\bigl[
K_{c_j,\delta} \bigr](\cdot) \F\bigl[ \phi_{ac}
\bigr](\cdot) \Bigr] (u)\\ \label{phiac}
&= 2\pi \F^{-1} \Bigl[  \I\{\cdot \in I_j\}  g_{ac}(-\cdot) \Bigr] (u) = \F\bigl[ g_{ac}(\cdot)  \I\{-\cdot\in I_j\} \bigr](u),
\end{align}
where we use that \(\F\bigl[ \phi_{ac}
\bigr](\cdot) = \int_{\R} \phi_{ac}(u) e^{\i u \cdot} du = 2\pi g_{ac}(-\cdot).\) Due to the Cauchy\,--\;Schwarz inequality and the Plancherel theorem, 
\begin{align}\nonumber
I_2 &\leq (1-\w) \Bigl(\int_{\R} \bigl(w^{U_n} (u) \bigr)^2 du\Bigr)^{1/2}
\Bigl(
\int_{\R} \bigl|
\F\bigl[ g_{ac}(\cdot)  \I\{-\cdot\in I_j\} \bigr](u)\bigr|^2 du
\Bigr)^{1/2}\\
\label{res2}
&\leq
U_n^{-1/2} (1-\w)
\Bigl(2 \int_{\eps}^1 \bigl(w (u) \bigr)^2 du\Bigr)^{1/2}
\sqrt{2 \pi}  \Bigl(
\int_{-I_j} 
\bigl( g_{ac}(x)  \bigr)^2 dx
\Bigr)^{1/2}.
\end{align}
Combining \eqref{res1} with \eqref{res2}, we get the  convergence rate~\eqref{ratepj} for $\widehat{p}_j$

\item Recall that \(\Arg(z)=\Im(\log(z)), z \in \C,\) where \(\log\) function is the continuous principal branch of the complex logarithm. Consider the mean absolute error for $x_j$, \(j \in \mJ \cap \widehat{\mJ},\)
\begin{align}\label{xj}
    |\widehat{x}_j - x_j| &\leq
   \int_{\R} | \widetilde{w}^{U_n} (u) | \cdot
| \Im G_j(u) |
 du,
%     \frac{\int_{\R} w^{U_n} (u) 
%\Im
%G_j(u) 
% du}{U_n \int_{\eps}^1 w (u) u  \, du},
\end{align}
where 
\begin{align}\label{Gj}
    G_j(u)  &=
 \log\bigl( K_{c_j,\delta} * \widehat{\phi}(u)\bigr)  -  \log \bigl(K_{c_j,\delta} * (\phi - (1-\w)\phi_{ac})(u)\bigr).
 \end{align} 
 The following lemma plays an important role. 
\begin{lemma}\label{lem5}
Let \(\phi\) be the characteristic function of a distribution from the class $\mu \in \mathcal{S}(\w_{\min}, \w_{\max}, C, \delta)$, and let \(\p_\circ>0\) be a fixed threshold parameter. Then for any \(j \in \mJ \cap \widehat{\mJ},\)
\begin{align}\label{lem51}
\inf_{|u| \in [\eps U_n, U_n]}|K_{c_j,\delta} * \phi(u)| \geq \p_\circ - r_j,
\end{align}
where 
\begin{align*}
r_j:=| \p_j  - \widehat{\p}_j| + \frac{2C}{\varepsilon U_n} \Bigl(1 +  \delta \Bigr)\to 0, \qquad \mbox{as}\quad n \to \infty,\end{align*}
while the characteristic function of the absolutely continuous part satisfies 
\begin{align}\label{lem52}
\sup_{|u| \in [\eps U_n, U_n]}|K_{c_j,\delta} * \phi_{ac}(u)|  \leq 
\frac{2C}{\varepsilon U_n} \Bigl(1 +  \delta \Bigr) \to 0,\qquad \mbox{as}\quad n \to \infty.
\end{align}
\end{lemma}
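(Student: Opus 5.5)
The second claim \eqref{lem52} comes first, and the first claim then follows from it by the triangle inequality.

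\textbf{Step 1: the absolutely continuous part.} By \eqref{phiac},
\[
(K_{c_j,\delta} * \phi_{ac})(u) = \F\bigl[g_{ac}(\cdot)\I\{-\cdot \in I_j\}\bigr](u) = \int_{a}^{b} g_{ac}(x)e^{\i u x}\,dx ,
\]
where $[a,b] = -I_j$ is an interval of length $2\delta$. Integrating by parts once, which is legitimate since $g_{ac}$ is differentiable with $|g_{ac}'|<C$, gives
\[
\int_a^b g_{ac}(x)e^{\i ux}dx = \frac{g_{ac}(b)e^{\i ub}-g_{ac}(a)e^{\i ua}}{\i u} - \frac{1}{\i u}\int_a^b g_{ac}'(x)e^{\i ux}dx .
\]
The boundary term is at most $2C/|u|$ and the integral term at most $2\delta C/|u|$. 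Hence the whole expression is bounded by $2C(1+\delta)/|u|$, and for $|u|\ge \eps U_n$ this is at most $\frac{2C}{\eps U_n}(1+\delta)$. This bound is uniform over $\mathcal{S}$ and tends to $0$ because $U_n\to\infty$, which proves \eqref{lem52}.

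\textbf{Step 2: the lower bound.} For $j\in\mJ$, \eqref{eq1} gives the decomposition
\[
K_{c_j,\delta}*\phi(u) = \p_j e^{\i u x_j} + (1-\w)\,K_{c_j,\delta}*\phi_{ac}(u).
\]
By the reverse triangle inequality and Step 1, for $|u|\in[\eps U_n,U_n]$,
\[
|K_{c_j,\delta}*\phi(u)| \ge \p_j - \tfrac{2C}{\eps U_n}(1+\delta),
\]
where the factor $(1-\w)\le 1$ is dropped. Since $j\in\widehat{\mJ}$, we have $\widehat{\p}_j\ge \p_\circ$, and therefore $\p_j \ge \p_\circ - |\p_j-\widehat{\p}_j|$. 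Substituting this gives \eqref{lem51} with exactly the stated $r_j$.

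\textbf{Step 3: the convergence $r_j\to 0$.} The deterministic part of $r_j$ vanishes by Step 1. The term $|\p_j-\widehat{\p}_j|$ is $O_\P(\mQ_n)=o_\P(1)$ by \eqref{ratepj}, under the standing assumption $\mQ_n=o(1)$ of Theorem~\ref{thm1}; the convergence $r_j\to0$ should therefore be read in probability.

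The only genuine obstacle is Step 1: obtaining a $1/|u|$ decay uniform over the class rather than a bare Riemann--Lebesgue statement. The integration by parts on the compact interval $-I_j$ handles this. It uses only the bounds on $g_{ac}$ and $g_{ac}'$ and the interval length $2\delta$, which is where the factor $(1+\delta)$ comes from.
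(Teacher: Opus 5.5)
Your proof is correct and takes essentially the same route as the paper. It uses the decomposition $K_{c_j,\delta}*\phi(u)=\p_j e^{\i u x_j}+(1-\w)K_{c_j,\delta}*\phi_{ac}(u)$, the reverse triangle inequality together with $\p_j\ge \p_\circ-|\p_j-\widehat{\p}_j|$ for $j\in\widehat{\mJ}$, and one integration by parts over $-I_j$ giving the bound $2C(1+\delta)/(\eps U_n)$; your remark that $r_j\to 0$ must be read in probability (via \eqref{ratepj}) is a correct clarification the paper leaves implicit.
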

\begin{proof} 
We have
\begin{align*}
   |K_{c_j,\delta} * \phi(u)| &= \int_{\R} K_{c_j,\delta}(u - v)\phi(v)dv = \int_{\R} K_{c_j,\delta}(u - v)\Bigl( \int_\R e^{ivy} d \mu(y) \Bigr) dv \\ 
    &= \int_{\R} \int_{\R} K_{c_j,\delta}(u-v) e^{i(u - v)(-y)} e^{iuy} d \mu (y) dv= \int_{\R} \F[K_{c_j,\delta}](-y) e^{iuy} d \mu (y)  \\
    &= \w \sum_{k} \F[K_{c_j,\delta}](-x_k) e^{\i u x_k} p_k + (1 - \w) \int_{-I_j}  e^{iuy} d \mu_{ac} (y) \\
    &= \p_j \e^{\i u x_j} + (1 - \w) K_{c_j,\delta} * \phi_{ac}(u),%, \qquad \mbox{where} \; l_{ac}(u):= \int_{-I_j}  e^{iuy} d \mu_{ac} (y),
\end{align*}
due to the Fubini theorem and \eqref{eq37}. Therefore,  for any \(u \in \R,\)
\begin{eqnarray*}
|K_{c_j,\delta} * \phi(u)|  \geq \p_j - (1-\w) |K_{c_j,\delta} * \phi_{ac}(u)|,
\end{eqnarray*}
and 
\begin{align*}
\inf_{u \in [\eps U_n, U_n]}|K_{c_j,\delta} * \phi(u)| &\geq \p_j  - (1-\w) \sup_{u \in [\eps U_n, U_n]}|K_{c_j,\delta} * \phi_{ac}(u))|,
\end{align*}
where \(\p_j \geq \widehat{\p}_j - | \p_j  - \widehat{\p}_j| \geq \p_\circ-| \p_j  - \widehat{\p}_j|.\)
Finally, for any \(u \in [\eps U_n, U_n],\)  we continue the line of reasoning in~\eqref{phiac},
\begin{align*}
    |K_{c_j,\delta} * \phi_{ac}(u)| &= |\int_{\R} \I\{ -y \in I_j \} g_{ac}(y)e^{iuy}dy| = |\int_{I_j} g_{ac}(-y) d \frac{e^{-iuy}}{-iu}|  \\
    &\leq  \frac{1}{|u|} \Bigl(|g_{ac}(-c_j - \delta)| + |g_{ac}(-c_j + \delta)| + \int_{I_j} |g'_{ac}(-y)| dy \Bigr)\leq \frac{2C}{\varepsilon U_n} (1 +  \delta ),\end{align*}
and arrive at the desired result.
\end{proof}
From Lemma~\ref{lem5} it follows that \(K_{c_j,\delta} * \phi(u)  \ne 0\) for all \(u\) with  \(|u| \in [\eps U_n, U_n]\) and \(n\) large enough, and \(|K_{c_j,\delta} * \phi_{ac}(u)| \to 0\) as \(n \to \infty\).  Continuing the line of reasoning in~\eqref{Gj}, we get 
 \begin{align*}
   G_j(u)  &=
     \log \bigl(K_{c_j,\delta} * \widehat{\phi}(u)\bigr)  - \log \bigl(K_{c_j,\delta} * \phi(u) \bigr) - \log\Bigl(1 - (1-\w )\frac{K_{c_j,\delta} * \phi_{ac}(u)}{K_{c_j,\delta} * \phi(u)}\Bigr) \\
    &=\log \Bigl( 1 + \frac{K_{c_j,\delta} * (\widehat{\phi} - \phi)(u)}{K_{c_j,\delta} * \phi(u)}\Bigr)- \log\Bigl(1 - (1- \w )\frac{K_{c_j,\delta} * \phi_{ac}(u)}{K_{c_j,\delta} * \phi(u)}\Bigr),
\end{align*}
and therefore 
\begin{align*}
    |\widehat{x}_j - x_j| &\leq
 \int_{\R}  |\widetilde{w}^{U_n} (u) |
\Bigl| \Im \log \Bigl( 1 + \frac{K_{c_j,\delta} * (\widehat{\phi} - \phi)(u)}{K_{c_j,\delta} * \phi(u)}\Bigr) \Bigr|
 du \\ 
 & + 
 \int_{\R} |\widetilde{w}^{U_n} (u)| 
\Bigl| \Im \log\Bigl(1 - (1- \w )\frac{K_{c_j,\delta} * \phi_{ac}(u)}{K_{c_j,\delta} * \phi(u)}\Bigr) \Bigr|
 du =: I_3+I_4.
%     \frac{\int_{\R} w^{U_n} (u) 
%\Im
%G_j(u) 
% du}{U_n \int_{\eps}^1 w (u) u  \, du},
\end{align*}
In what follows, we will  consider $I_3$ and $I_4$ separately. From  the inequality \begin{align}\label{main}|\Im\log(1+z)|\leq|\log(1+z)|\leq 2|z|  \qquad \text{for} \; |z| < 1/2,\end{align} we get 
\begin{align*}
   I_3 & \leq 
   2 
      \int_{\R} |\widetilde{w}^{U_n} (u)| \bigl| 
   \frac{K_{c_j,\delta} * (\widehat{\phi} - \phi)(u)}{K_{c_j,\delta} * \phi(u)}
   \bigr|du
   \\
    &\leq 
    \frac{2\int_\R |\widetilde{w}(v)|dv}{U_n}
 \frac{\sup_{|u| \in [ \eps U_n, U_n]} \bigl| \bigl(K_{c_j,\delta} * \bigl(\widehat{\phi}-\phi\bigr)\bigr)(u)       \bigr|}{\inf_{|u| \in [\eps U_n, U_n]}|K_{c_j,\delta} * \phi(u)|}
 = 
     O_{\P} \Bigl( 
        \frac{
                \sqrt{
\log(nU_n^2)/(n) } }{ U_n \p_\circ}
        \Bigr),
\end{align*}
see~\eqref{res1} and \eqref{lem51}. For \(I_4\), we apply the same inequality~\eqref{main}, and analogously to \eqref{phiac}-\eqref{res2} get 
\begin{align*}
     I_4 & \leq 
   2 (1-\w)
      \int_{\R} |\widetilde{w}^{U_n} (u)| \bigl|
      \frac{K_{c_j,\delta} * \phi_{ac}(u)}{K_{c_j,\delta} * \phi(u)}
      \bigr| du \\
      &\lesssim
 2U_n^{-3/2} (1-\w)
\Bigl(2 \int_{\eps}^1 \bigl(\widetilde{w} (u) \bigr)^2 du\Bigr)^{1/2}
\sqrt{2 \pi}  \Bigl(
\int_{-I_j} 
\bigl( g_{ac}(x)  \bigr)^2 dx
\Bigr)^{1/2} \p_{\circ
}^{-1}\lesssim \frac{\sqrt{\delta}C}{U_n^{3/2} \p_{\circ
}}.
\end{align*}
The obtained bounds for \(I_3\) and \(I_4\) lead to  the statement of the theorem.
\item We proceed with the convergence rates for $\widehat\w$. We have
\begin{align}
    |\widehat{\w} - \w| &= |\sum_{j \in \widehat{\mJ}} \widehat{\p}_j - \sum_{j=1}^J  \p_j| \leq \sum_{j =1}^J |\widehat{\p}_j - \p_j| +     \sum_{j \notin \widehat\mJ} |\widehat{\p}_j| \leq  \sum_{j =1}^J |\widehat{\p}_j - \p_j| + c J U_n^{-1} = O_{\P}(\mQ_n), \label{www2}
\end{align}
since \(U_n^{-1} =O(\mQ_n).\)
\item Finally, we consider the estimation error for $p_j$ for fixed $j$, 
\begin{align*}
    |\widehat{p}_j - p_j| &= |\frac{\widehat{\p}_j}{\widehat{\w}} - \frac{\p_j}{\w}| 
    \leq \frac{| \widehat{\p}_j - \p_j |}{\widehat{\w}} + \p_j\frac{ |\widehat{\w} - \w|}{\widehat{\w}\,\w}
    \leq 
    \frac{1}{\widehat{\w}} \bigl( 
    | \widehat{\p}_j - \p_j | + 
    \frac{ |\widehat{\w} - \w|}{\w_{\min}}
    \bigr).
\end{align*}
Application of the inequality \(\widehat{\w} \geq \w - | \widehat{\w}   -\w| \geq \w_{\min}-| \widehat{\w}   -\w|\) leads to the desired result. 
\end{enumerate}
\subsection{Proof of Theorem \ref{thm2}}

\begin{proof}

We start from the identity
\begin{align*}
 \widehat{g}_{ac}(x) -  g_{ac}(x) &= \frac{1}{2 \pi}\Bigl( \int_{|u| \leq V_n} \Bigl( \widehat{\phi}_{ac}(u) - \phi_{ac}(u)  \Bigr) e^{-\i u x}  du 
 - \int_{|u| >V_n} \phi_{ac}(u)  e^{-\i u x}  du\Bigr):=\frac{1}{2 \pi}\Bigl(I_1 + I_2\Bigr).
\end{align*}
For the first summand, consider the representation 
\begin{align*}
    \widehat{\phi}_{ac}(u) - \phi_{ac}(u) 
   &= \frac{\widehat{\phi}(u) - \widehat{\w} \widehat{\phi}_d(u)}{1 - \widehat{\w}} - \frac{\phi(u) - \w \phi_d(u)}{1 - \w} \nonumber\\
       &= \frac{1}{1-\widehat\w} \bigl( \widehat{\phi}(u) - \phi(u) \bigr) 
       - \frac{1}{1-\widehat\w} \bigl( \widehat{\w} \widehat{\phi}_d(u) - \w \phi_d(u) \bigr) + \frac{\widehat{\w} - \w}{(1-\widehat{\w})(1-\w)} \bigl( \phi (u) - \w \phi_d(u) \bigr). 
%    &= \frac{1}{1-\w} \bigl( \widehat{\phi}(u) - \phi(u) \bigr) 
%       - \frac{1}{1-\w} \bigl( \widehat{\w} \widehat{\phi}_d(u) - \w \phi_d(u) \bigr) + \frac{\widehat{\w} - \w}{(1-\widehat{\w})(1-\w)} \bigl( \widehat{\phi}(u) - \widehat{\w} \widehat{\phi}_d(u) \bigr). 
\end{align*}
Since \(1-\widehat{\w} \geq 1 - \w-|\widehat{\w} - \w| \gtrsim 1 - \w_{\max},\) we get the pointwise bound
\begin{align}\label{bound_main}
    |\widehat{\phi}_{ac}(u) - \phi_{ac}(u)| 
    &\le C_1 |\widehat{\phi}(u) - \phi(u)| 
       + C_2 |\widehat{\w} \widehat{\phi}_d(u) - \w \phi_d(u)| 
       + C_3 |\widehat{\w} - \w| |\phi_{ac}(u)|, 
\end{align}
where $C_1, C_2, C_3>0.$ Let us consider the summands separately for \(u \in [-U_n, U_n]\). The bound for the first summand is given by Lemma~\ref{our_lemma1},
\begin{align*}
\sup_{u \in [-V_n, V_n]} |\widehat{\phi}(u) - \phi(u)| = O_{\P} \Bigl( 
        \sqrt{
\log(n V_n^2) / n
}
        \Bigr).
\end{align*}
For the second term in~\eqref{bound_main}, we decompose
\begin{align}\nonumber
|
    \widehat{\w} \widehat{\phi}_d(u) - \w \phi_d(u) 
    |
    &= |\widehat{\w} (\widehat{\phi}_d(u) - \phi_d(u)) + (\widehat{\w} - \w) \phi_d(u)|\\
     &\leq \bigl(\w_{\max}+O_{\P}(\mQ_n)\bigr)|\widehat{\phi}_d(u) - \phi_d(u)| + O_{\P}(\mQ_n),\label{bound_discrete}
\end{align}
where we use that $|\phi_d(u)| \le 1$ and $|\widehat{\w}| \leq \w_{\max} + |\widehat{\w} - \w| \leq \w_{\max}+O_{\P}(\mQ_n)$.

Now we estimate $|\widehat{\phi}_d(u) - \phi_d(u)|$. We have
\begin{align}
    |\widehat{\phi}_d(u) &- \phi_d(u)| 
    = \Bigl| \sum_{j \in \widehat{\mJ}}   \widehat{p}_j e^{iu\widehat{x}_j} -  \sum_{j =1}^J  p_j e^{iux_j}  \Bigr| 
    \leq
    \sum_{j \in \widehat{\mJ}}
    \Bigl|  \widehat{p}_j e^{iu\widehat{x}_j} -     p_j e^{iux_j}  \Bigr|  +  \sum_{j \notin \widehat{\mJ}} |p_j e^{iu x_j}| \label{bound_phi_d}\\
    &\le \sum_{j \in  \widehat\mJ}\Bigl( |\widehat{p}_j - p_j| |e^{iu\widehat{x}_j}| 
           + p_j |e^{iu\widehat{x}_j} - e^{iux_j}| \Bigr) 
           +
           \sum_{j \notin  \widehat\mJ}
           \bigl( \widehat{p}_j + |\widehat{p}_j - p_j|
           \bigr) = O_{\P}(\mQ_n) + \sum_{j=1}^{J} |\e^{iu(\widehat{x}_j - x_j)} - 1| \nonumber,
\end{align}
where, similar to~\eqref{www2}, we use that \( \sum_{j \in \widehat{\mJ}}
      \widehat{p}_j \leq c J U_n^{-1} =O(\mQ_n).\)
The bound $|e^{i\theta} - 1| \le |\theta|, \;\forall \theta \in \R$,
yields 
\(
|e^{iu(\widehat{x}_j - x_j)} - 1| \le |u| |\widehat{x}_j - x_j|,
\)
and we arrive at 
\begin{align}\label{615}
\sup_{u \in [-V_n, V_n]}     |\widehat{\phi}_d(u) - \phi_d(u)| = O_{\P}( V_n \mQ_n). 
\end{align}
Combining this result with~\eqref{bound_discrete}, we get 
\begin{align*}
\sup_{u \in [-V_n, V_n]}     | \widehat{\w} \widehat{\phi}_d(u) - \w \phi_d(u) | = O_{\P}( V_n \mQ_n). 
\end{align*}
To sum up, we have 
\begin{align*}
|I_1| &=O_\P \Bigl( 
V_n    \sqrt{
\log(n V_n^2) / n
}
 +V_n^2\mQ_n+V_n\,\mQ_n
\Bigr).
\end{align*}
As for \(I_2,\) we trivially have \(|I_2| \leq \int_{|u|>V_n} |\phi_{ac}(u)|du\). This observation concludes the proof of~\eqref{rateG}. For the particular cases~\eqref{rateP} and \eqref{rateE} we have 
\begin{align*}
\int_{|u|>V_n} |\phi_{ac}(u)|du \leq \int_{|u| > V_n} \mathtt{C} (1+|u|)^{-\alpha} du =
\frac{2\mathtt{C}}{(\alpha - 1)(1+V_n)^{\alpha-1}}\lesssim \frac{1}{V_n^{\alpha-1}}
\end{align*}
if \(g_{ac} \in \mathcal{P}_{\alpha, \mathtt{C}}\) with \(\alpha>2\), $\mathtt{C} > 0$. In the exponential case we get 
\begin{align*}
\int_{|u|>V_n} |\phi_{ac}(u)|du \leq \int_{|u|>V_n} e^{-\mathtt{C} u^2} du  = 2(\pi/ \mathtt{C})^{1/2} \Psi(\sqrt{2\mathtt{C}} V_n) \lesssim \frac{\e^{-\mathtt{C}V^2_n}}{V_n},
\end{align*}
where \(\Psi\) is the survival function of the standard normal distribution, if \(g_{ac} \in \mathcal{E}_{\gamma,\mathtt C}\) with \(\gamma>2\), \(\mathtt{C}>1.\)
\end{proof}

\subsection{Proof of Theorem \ref{thm3}}

The estimation error can be decomposed as follows: \begin{align*}
    \widehat{\bar{\nu}} - \bar{\nu} &= -\F^{-1}[\widehat{\psi}''\F[K_n]] - \bar{\nu} = -\F^{-1}[(\widehat{\psi}'' - \psi'')\F[K_n]] + \bigl( - \F^{-1}[\psi''\F[K_n]]   - \bar{\nu} \bigr)\\
        &= -\F^{-1}[\widehat{\psi}'' - \psi''] * K_n + \bigl(  K_n * \bar{\nu}   - \bar{\nu} \bigr)=: -I_1+ I_2,
\end{align*}
where we applied~\eqref{imp_eq} and the properties of the Fourier transform.
\begin{enumerate}
\item First, we consider the term $I_2$.  Its \(H^{-1}\)-norm is equal to 
    \begin{align*}
        \sup_{||f||_{H^1} =1} |\int_{\R} f(x) \,\, d(K_n * \bar{\nu} - \bar{\nu})| &= \sup_{||f||_{H^1} =1} |\int_{\R} \Bigl(\int_{\R} f(x) K_n(x-y) dx \Bigr)\, \bar{\nu}(dy) - \int_{\R} f(x) \bar{\nu}(dx)| \\
         &= \sup_{||f||_{H^1} =1} |\int_{\R}\int_{\R} f(x+y) K_n(x) dx \, \bar{\nu}(dy) - \int_{\R} f(x) \bar{\nu}(dx)|,
        \end{align*}
by the definition of the convolution of a function and a measure. Therefore, 
    \begin{align*}
        ||I_2||_{H^{-1}} &= \sup_{||f||_{H^1} =1} |\int_{\R} (f * K_n(-\cdot) - f)(x) \,\, \bar{\nu}(dx)| \leq \sup_{||f||_{H^1} =1} \sup_{x\in \R} |(f * K_n(-\cdot) - f)(x)| \,\, \int_{\R} |\bar{\nu}|(dx) \\
    &= \sup_{||f||_{H^1} =1} \sup_{x\in \R} |\int_{\R} (f(x + y) - f(x)) K_n(y)dy|\cdot |\bar{\nu}|(\R),
    \end{align*}
where we used that $\int K_n(x)dx =\int K(x)dx = 1$. Thus, using the Newton-Leibniz theorem and then the Cauchy-Schwarz inequality we get
    \begin{align*}
     ||I_2||_{H^{-1}} &\leq \sup_{||f'|| \leq 1} \sup_{x\in \R} |\int_{\R} (\int_x^{x+y} f'(z) dz)K_n(y)dy|\cdot |\bar{\nu}|(\R) \leq |\bar{\nu}|(\R) \sup_{||f'|| \leq 1} \int_{\R} |y|^{1/2} \cdot ||f'|| \cdot K_n(y)dy \\ 
    &\leq |\bar{\nu}|(\R)\int |y|^{1/2} K_n(y) dy
    = 
    |\bar{\nu}|(\R) W_n^{-1/2}\int  |z|^{1/2} K(z) dz\leq C_2\, W_n^{-1/2}\int  |z|^{1/2} K(z) dz \lesssim C_2\, W_n^{-1/2},
\end{align*}
where in the last step we used $|\bar\nu|(\R)\leq C_2$ for $\mu\in\mathcal{M}(C_1,C_2)$.

\item Now we proceed to the estimation of the term $I_1$. Recall that the \(H^{-1}\)-norm can be represented as
\begin{align}\label{H-1}
    ||I_1||^2_{H^{-1}} &= ||\F^{-1}[(\widehat{\psi}'' - \psi'')\F[K_n]]||^2_{H^{-1}} 
    = \frac{1}{2\pi} \int_{\R} \frac{|\widehat{\psi}''(u) - \psi''(u)|^2 |\F[K_n](u)|^2}{1 + u^2} du,
    \end{align}
  see~\eqref{first_norm}.   Denote \begin{equation*}
    \Delta_n := \frac{\widehat{\phi}(u) - \phi(u)}{\phi(u)}.
\end{equation*}
Lemma~5.1 in \citep{PanovRyabchenko2026} states that the probability of the event 
\begin{equation*}
    \mathcal{A}_n := \Bigl\{\max_{u \in [-W_n,W_n]} |\Delta_n| \leq \chi_n \Bigr\}, \qquad n=1,2,...,
\end{equation*}
with \[\chi_n := \chi^\circ \frac{\sqrt{\log(nW_n^2)/ n}}{\inf_{u \in [-W_n, W_n]}|\phi(u)|}, \qquad \chi^\circ<1/16,\]
tends to 1 as \(n \to \infty.\) More precisely, it is known that 
 \(\P\{\A_n\} \geq 1-c (\sqrt{n}W_n)^{-\kappa}\) with \(\kappa= (1/(2\chi^\circ)^2-64) / 128>0\) and some positive constant \(c,\) which  depends on \(\E[|X_1|]\) only. As we have discussed in the introduction, \(\mu \in \Q\) yields $\inf_{\R}|\phi(u)|  >0$ (see Theorem 2.2 from \cite{BK2023} for the proof). Therefore, \(\max_{u \in [-W_n,W_n]} |\Delta_n| = O_P\bigl(\sqrt{\log(nW_n^2)/ n}\bigr).\) Note also that the condition \(W_n \lesssim n\) guarantees that \(\chi_n \to 0\) as \(n \to \infty.\)
 
On the event $\A_n$ we have for any fixed $u$ 
\begin{align*}
    \widehat\psi(u) - \psi(u) = \log(1 + \Delta_n(u))
    = \Delta_n(u)  - \frac{\Delta_n^2(u)}{2} + O(|\Delta_n^3|) = \Delta_n(u) + R_n(u).
\end{align*}
This leads to the equality 
\(
    \widehat{\psi}''(u) - \psi''(u) = \Delta''_n(u) + R''_n(u),\)
where the term $R''_n(u)$ is equal to 
\begin{equation*}
    R''_n(u) = \bigl(    \log(1 + \Delta_n(u)) - \Delta_n(u) \bigr)'' = 
    -\frac{(\Delta'_n(u))^2}{(1 + \Delta_n(u))^2} - \frac{\Delta_n(u)\Delta''_n(u)}{1 + \Delta_n(u)}.
\end{equation*}
Since on the event $\A_n$ for $n$ large enough $\max_{u \in [-W_n, W_n]} |1 + \Delta_n(u)| \geq 1 - \chi_n > 0$, the  denominators of both fractions are bounded. Next, denote $\varepsilon (u) := \widehat{\phi}(u) - \phi(u), \; u \in \R,$ and consider the derivatives of $\Delta_n$,
\begin{align*}
    \Delta_n(u) &=  \frac{\varepsilon(u)}{\phi(u)}, \qquad 
    \Delta_n'(u) = \frac{\varepsilon'(u) \phi(u) - \phi'(u)\varepsilon(u)}{\phi^2(u)}, \\
    \Delta_n''(u) &= \frac{\varepsilon''(u)\phi^2(u)-\varepsilon(u) (\phi''(u)\phi(u) + 2\phi'(u)^2) - 2\varepsilon'(u)\phi'(u) \phi(u)}{\phi^3(u)}.
\end{align*}
Again, since  $\inf_{\R}|\phi(u)| >0$, the denominators of all fractions are separated from zero uniformly. Moreover, for $\mu\in\mathcal{M}(C_1,C_2)$ we have $|\phi^{(j)}(u)|\leq\E|X|^j\leq C_1^{j/4}$ for $j=0,1,2.$ %Same is true for all $\phi^{(k)}, \, k=0, 1, 2$, because
%\begin{equation*}
%    |\phi^{(k)}(u)| \leq \E |X^k e^{iuX_k}| = \E |X|^k < \infty.
%\end{equation*}
%Therefore, the behavior of $R''_n(u)$ depends on the behavior of $\varepsilon^{(k)}, \, k=0,1,2$. Same is true for  behavior of $\widehat{\psi}''(u) - \psi''(u)$.
Lemma \ref{our_lemma2} gives the following convergence rates for $\varepsilon^{(k)}, \, k=0,1,2$
\begin{equation*}
    \max_{u\in [-W_n, W_n]} |\varepsilon^{(k)}(u)| = \max_{u\in [-W_n, W_n]} |\phi^{(k)}_n(u) - \phi^{(k)}(u)| = O_{\mathbb{P}} \Bigl(C_1^{k/4}\Bigl(\frac{\log(nW^2_n)}{n}\Bigr)^{1/(k + 2)}\Bigr), 
\end{equation*}
which, together with $|\phi^{(j)}(u)|\leq C_1^{j/4},$ lead to the convergence rates for $\Delta''_n$ and $R''_n$
\begin{equation*}
        \max_{u \in [-W_n, W_n]}|\Delta''_n (u)| = O_{\mathbb{P}}\Bigl(C_1^{1/2}\Bigl(\frac{\log(nW^2_n)}{n}\Bigr)^{1/4}\Bigr), \quad     \max_{u \in [-W_n, W_n]}|R''_n(u)| = O_{\mathbb{P}}\Bigl(C_1^{1/2}\Bigl(\frac{\log(nW^2_n)}{n}\Bigr)^{2/3}\Bigr).
\end{equation*}
% \begin{eqnarray*}
% \Delta_n &=& O_{\mathbb{P}}\Bigl(\Bigl(\frac{\log(nU_n)}{n}\Bigr)^{1/2}\Bigr), \\
% \Delta'_n &=& O_{\mathbb{P}}\Bigl(\Bigl(\frac{\log(nU_n)}{n}\Bigr)^{1/3}\Bigr),\\
%     \Delta''_n &=& O_{\mathbb{P}}\Bigl(\Bigl(\frac{\log(nU_n)}{n}\Bigr)^{1/4}\Bigr),
% \end{eqnarray*}
% and therefore, $R''_n = O_{\mathbb{P}}\Bigl(\Bigl(\frac{\log(nU_n)}{n}\Bigr)^{2/3}\Bigr)$.
% Further we will use the result of the following Lemma, which gives the convergence rates for $R''_n$ and specifically for $\Delta''_n$, its proof can be found in the Appendix section.
Therefore, the estimation error of $\widehat{\psi}''(u) - \psi''(u)$ is dominated by the term $\Delta''_n(u) $. Continuing the line of reasoning in~\eqref{H-1}, we get 
    \begin{align*}    ||I_1||^2_{H^{-1}}
    &\lesssim \frac{1}{2\pi} \int_{-W_n}^{W_n} (1 + u^2)^{-1} |\Delta_n''(u) |^2 |\F[K](u/W_n)|^2 du 
    \lesssim O_{\mathbb{P}}\Bigl(\mG_n\, C_1\Bigl(\frac{\log(nW^2_n)}{n}\Bigr)^{1/2}\Bigr) ,
    \end{align*}
    where 
    \begin{align*} 
    \mG_n &:=   \int_{-1}^{1} (1 + W_n^2v^2)^{-1} |\F[K](v)|^2 W_n dv  \lesssim  \int_{-1}^{1} \frac{d W_n v}{1 + W_n^2v^2} = 2\,\arctan(W_n) \leq \pi.
\end{align*}
Combining the results for $I_1$ and $I_2$ we get the convergence rate for $\widehat{\bar{\nu}}$, which is
\begin{equation*}
    ||\widehat{\bar{\nu}} - \bar{\nu}||_{H^{-1}} = O_{\mathbb{P}}\Bigl(C_2\,\frac{1}{\sqrt{W_n}} + C_1^{1/2}\Bigl(\frac{\log (nW^2_n)}{n}\Bigr)^{1/4}\Bigr).
\end{equation*}
\end{enumerate}

\subsection{Proof of Theorem \ref{thm4}}

\begin{proof}
Similarly to the proof of Theorem \ref{thm3}, consider the decomposition 
\[
\widehat{\bar\nu}_d-\bar\nu_d
=-\F^{-1}\bigl[(\widehat\psi_d''-\psi_d'')\F[K_n]\bigr]
+\bigl(K_n*\bar\nu_d-\bar\nu_d\bigr)=:-I_1^d+I_2^d.
\]
The term $I_2^d$ is analyzed exactly as in Theorem \ref{thm3}. By Proposition~\ref{propmain}(i), $\bar\nu_d$ is the atomic part of $\bar\nu,$ so $|\bar\nu_d|(\R)\leq|\bar\nu|(\R)\leq C_2$ for $\mu\in\mathcal{M}(C_1,C_2),$ and
\[
\|I_2^d\|_{H^{-1}}\le|\bar\nu_d|(\R)\,W_n^{-1/2}\!\int|z|^{1/2}K(z)\,dz\lesssim C_2\, W_n^{-1/2}.
\]
Now we proceed to the estimation of the term $I_1^d$. 
First, we rewrite the point error of \(\widehat\phi_d^{(m)}(u)\) for fixed $m=0,1,2$ and $|u| \leq W_n$ as 
\[
\widehat\phi_d^{(m)}(u)-\phi_d^{(m)}(u)=\sum_{j\in\widehat{\mJ}}i^m\Bigl\{
\bigl(\widehat{p}_j\widehat{x}_j^m-p_jx_j^m\bigr)e^{iu\widehat{x}_j}
+p_jx_j^m\bigl(e^{iu\widehat{x}_j}-e^{iux_j}\bigr)\Bigr\} + \sum_{j \notin \widehat{\mJ}} i^m x_j^{m} p_je^{iux_j}.
\]
Since $\supp(\mu_d)$ is bounded, we get, similarly to~\eqref{bound_phi_d} and \eqref{615},
%$|x_j|,|\widehat x_j|\le C_x$ for some constant
%$C_x<\infty$, hence
%\[
%\bigl|\widehat p_j\widehat x_j^m-p_jx_j^m\bigr|\lesssim|\widehat p_j-p_j|+|\widehat x_j-x_j|,
%\qquad
%\bigl|e^{iu\widehat x_j}-e^{iux_j}\bigr|\le|u|\,|\widehat x_j-x_j|\le U_n|\widehat x_j-x_j|
%\]
%for $|u|\le U_n$. Similarly to Theorem \ref{thm3} summing over $j$ yields,
%\[
%\sup_{u \in [-V_n, V_n]}|\widehat\phi_d^{(m)}-\phi_d^{(m)}|
%\lesssim \max_j|\widehat p_j-p_j|+V_n\max_j|\widehat x_j-x_j| + \sum_{j \notin  \widehat\mJ}
%           \bigl( \widehat{p}_j + |\widehat{p}_j - p_j|
%           \bigr).
%\]
%By Theorem \ref{thm1} we have
%\begin{align*}
%    \max_j|\widehat p_j-p_j|=O_\P\Bigl(\mQ_n\Bigr), \quad
%    \sup_{\mathcal{S}} \max_{j \in \mJ \cap \hat{\mJ}}|\widehat x_j-x_j|=O_\P\Bigl(\mQ_n\Bigr).
%\end{align*}
\begin{equation*}
    \sup_{u\in [-W_n, W_n]}\bigl|\widehat\phi_d^{(m)}(u)-\phi_d^{(m)}(u)\bigr|
=O_\P\Bigl(W_n \mQ_n\Bigr)
\end{equation*}
and the desired result follows. Similarly to the approach in Theorem \ref{thm3}, we have
\[
\sup_{u\in [-W_n, W_n]}\bigl|\widehat\psi_d''(u)-\psi_d''(u)\bigr|
=O_\P\Bigl(W_n\mQ_n\Bigr).
\]
Since $\supp\F[K_n]\subseteq[-W_n,W_n]$, the $H^{-1}$-norm of $I_1^d$ is equal to 
\begin{align*}
\|I_1^d\|^2_{H^{-1}}
&=\frac1{2\pi}\int_{-W_n}^{W_n}\frac{|\widehat\psi_d''-\psi_d''|^2\,|\F[K_n](u)|^2}{1+u^2}\,du \\
&\leq\sup_{u\in [-W_n, W_n]}|\widehat\psi_d''-\psi_d''|^2\cdot\int_{-1}^{1}(1+W_n^2v^2)^{-1}|\F[K](v)|^2W_n\,dv =O_\P\Bigl( W_n^{2} \mQ_n^2\Bigr).
\end{align*}
Hence, $\|I_1^d\|_{H^{-1}}=O_\P\Bigl(W_n\mQ_n\Bigr)$. Combining the bounds for $I_1^d$ and $I_2^d$ we obtain \begin{align*}\|\widehat{\bar\nu}_d-\bar\nu_d\|_{H^{-1}}=O_\P\bigl(W_n\mQ_n+C_2 W_n^{-1/2}\bigr),\end{align*} which concludes the proof.\end{proof}

\section*{Acknowledgments}
This article is an output of a research project HSE-BR-2025-039 implemented as part of the Basic Research Program at HSE University.

%\section*{Funding}
%Funding information here.

%\bibliographystyle{plain}
%\bibliography{reference}

\bibliographystyle{plainnat}
\bibliography{references}

%USE THE BELOW OPTIONS IN CASE YOU NEED AUTHOR YEAR FORMAT.
%\bibliographystyle{abbrvnat}
%\bibliography{reference}

\end{document}